\newif\ifdraft
\newif\ifinc
\newif\ifans
\newif\ifsample
\newif\iffull
\providecommand{\remove}[1]{}
\let\proof\relax
\let\endproof\relax
\definecolor{internallinkcolor}{rgb}{0,.5,0}
\setlist[itemize]{leftmargin=*}
\setlist[enumerate]{leftmargin=*}
\providecommand{\remove}[1]{}
    \newcommand{\authnote}[2]{{\bf [{\color{red} #1's Note:} {\color{blue} #2}]}}
    \newcommand{\authnote}[2]{}
\titleclass{\subsubsubsection}{straight}[\subsection]
\newcounter{subsubsubsection}[subsubsection]
\renewcommand\thesubsubsubsection{\thesubsubsection.\arabic{subsubsubsection}}
\renewcommand\paragraph{\@startsection{paragraph}{5}{\z@}%
	{3.25ex \@plus1ex \@minus.2ex}%
	{-1em}%
	{\normalfont\normalsize\bfseries}}
\renewcommand\subparagraph{\@startsection{subparagraph}{6}{\parindent}%
	{3.25ex \@plus1ex \@minus .2ex}%
	{-1em}%
	{\normalfont\normalsize\bfseries}}
\def\toclevel@subsubsubsection{4}
\def\toclevel@paragraph{5}
\def\toclevel@paragraph{6}
\def\l@subsubsubsection{\@dottedtocline{4}{7em}{4em}}
\def\l@paragraph{\@dottedtocline{5}{10em}{5em}}
\def\l@subparagraph{\@dottedtocline{6}{14em}{6em}}
\let\originalleft\left
\let\originalright\right
\renewcommand{\left}{\mathopen{}\mathclose\bgroup\originalleft}
\renewcommand{\right}{\aftergroup\egroup\originalright}
\newcommand{\aka} {also known as,\xspace}
\newcommand{\resp}{resp.,\xspace}
\newcommand{\ie}  {i.e.,\xspace}
\newcommand{\eg}  {e.g.,\xspace}
\newcommand{\wrt} {with respect to\xspace}
\newcommand{\wlg} {without loss of generality\xspace}
\newcommand{\cf}{cf.,\xspace}
\newcommand{\ceil}[1]{\left\lceil #1 \right\rceil}
\newcommand{\ip}[1]{\iprod{#1}}
\newcommand{\iprod}[1]{\langle #1 \rangle}
\newcommand{\set}[1]{\left\{#1\right\}}
\newcommand{\paren}[1]{\left(#1\right)}
\newcommand{\floor}[1]{\left \lfloor#1 \right \rfloor}
\newcommand{\eqdef}{:=}
\newcommand{\N}{{\mathbb{N}}}
\newcommand{\zo}{\set{0,1}}
\newcommand{\oo}{\mo}
\newcommand{\mo}{\set{\!\shortminus 1,1\!}}
\newcommand{\zn}{{\zo^n}}
\newcommand{\zs}{{\zo^\ast}}
\newcommand{\condition}{\;\ifnum\currentgrouptype=16 \middle\fi|\;}
\newcommand{\xor}{\oplus}
\newcommand{\eps}{\varepsilon}
\newcommand{\from}{\leftarrow}
\newcommand{\la}{\gets}
\newcommand{\Dec}{\MathAlgX{Dec}}
\newcommand{\negl}{\operatorname{neg}}
\newcommand{\Supp}{\operatorname{Supp}}
\newcommand{\GL}{\operatorname{GL}}
\newcommand{\MathAlg}[1]{\mathsf{#1}}
\newcommand{\Ensuremath}[1]{\ensuremath{#1}\xspace}
\newcommand{\MathAlgX}[1]{\Ensuremath{\MathAlg{#1}}}
\renewcommand{\cref}{\Cref}
\newtheorem{theorem}{Theorem}[section]
\newaliascnt{lemma}{theorem}
\newtheorem{lemma}[lemma]{Lemma}
\crefname{lemma}{Lemma}{Lemmas}
\newaliascnt{claim}{theorem}
\newtheorem{claim}[claim]{Claim}
\crefname{claim}{Claim}{Claims}
\newaliascnt{corollary}{theorem}
\newtheorem{corollary}[corollary]{Corollary}
\crefname{corollary}{Corollary}{Corollaries}
	\newaliascnt{claiml}{theorem}
	\newtheorem{claiml}[claiml]{Claim}
	\renewenvironment{claim}{\begin{claiml}}{\end{claiml}}
\newaliascnt{proto}{theorem}
\newtheorem{proto}[proto]{Protocol}
\crefname{proto}{Protocol}{Protocols}
\newaliascnt{algo}{theorem}
\newtheorem{algo}[algo]{Algorithm}
\crefname{algo}{Algorithm}{Algorithms}
\newaliascnt{construction}{theorem}
\crefname{construction}{Construction}{Constructions}
\newaliascnt{fact}{theorem}
\newtheorem{fact}[fact]{Fact}
\crefname{fact}{Fact}{Facts}
\newaliascnt{proposition}{theorem}
\newtheorem{proposition}[proposition]{Proposition}
\crefname{proposition}{Proposition}{Propositions}
\newaliascnt{conjecture}{theorem}
\crefname{conjecture}{Conjecture}{Conjectures}
\newaliascnt{definition}{theorem}
\newtheorem{definition}[definition]{Definition}
\crefname{definition}{Definition}{Definitions}
\newaliascnt{notation}{theorem}
\crefname{notation}{Notation}{Notation}
\newaliascnt{assertion}{theorem}
\crefname{assertion}{Assertion}{Assertion}
\newaliascnt{assumption}{theorem}
\crefname{assumption}{Assumption}{Assumption}
\newaliascnt{remark}{theorem}
\newtheorem{remark}[remark]{Remark}
\crefname{remark}{Remark}{Remarks}
\newaliascnt{question}{theorem}
\newtheorem{question}[question]{Question}
\crefname{question}{Question}{Questions}
\newaliascnt{example}{theorem}
\crefname{exmaple}{Example}{Examples}
\crefname{equation}{Equation}{Equations}
 \newenvironment{protocol}{\begin{mybox} \vspace{-.1in}\begin{proto}}{ \vspace{-.1in} \end{proto}\end{mybox}}
\newenvironment{algorithm}{\begin{mybox} \vspace{-.1in}\begin{algo}}{ \vspace{-.1in} \end{algo}\end{mybox}}
\newenvironment{mybox}{\begin{center}\begin{tabular}{|p{0.97\linewidth}|c|}   \hline} {  \\ \hline \end{tabular} \end{center}}
\newaliascnt{expr}{theorem}
\newtheorem{expr}[expr]{Experiment}
\crefname{experiment}{experiment}{experiments}
\newcommand{\stepref}[1]{Step~\ref{#1}}
\def\FullBox{$\Box$}
\def\qed{\ifmmode\qquad\FullBox\else{\unskip\nobreak\hfil
\penalty50\hskip1em\null\nobreak\hfil\FullBox
\parfillskip=0pt\finalhyphendemerits=0\endgraf}\fi}
\def\qedsketch{\ifmmode\Box\else{\unskip\nobreak\hfil
\penalty50\hskip1em\null\nobreak\hfil$\Box$
\parfillskip=0pt\finalhyphendemerits=0\endgraf}\fi}
\newcommand{\eex}[2]{\Ex_{#1}\left[#2\right]}
\newcommand{\ex}[1]{\Ex\left[#1\right]}
\newcommand{\Ex}{{\mathrm E}}
\renewcommand{\Pr}{{\mathrm {Pr}}}
\newcommand{\pr}[1]{\Pr\left[#1\right]}
\newcommand{\ppr}[2]{\Pr_{#1}\left[#2\right]}
\newcommand{\Ac}{\MathAlgX{A}}
\newcommand{\Mc}{\MathAlgX{M}}
\newcommand{\Ec}{\MathAlgX{E}}
\newcommand{\Act}{{\tilde{\mathsf{A}}}}
\newcommand{\Bct}{{\tilde{\mathsf{B}}}}
\newcommand{\Bc}{\MathAlgX{B}}
\newcommand{\Dc}{\MathAlgX{D}}
\newcommand{\Gc}{\MathAlgX{G}}
\newcommand{\size}[1]{\left|#1\right|}
\newcommand{\prob}[1]{\mathsf{\textsc{#1}}}
\newcommand{\SD}{\prob{SD}}
\newcommand{\ppt}{{\sc ppt}\xspace}
\def\cD{{\cal D}}
\def\cG{{\cal G}}
\def\cH{{\cal H}}
\def\cI{{\cal I}}
\def\cR{{\cal R}}
\def\cS{{\cal S}}
\def\cU{{\cal U}}
\def\cX{{\cal X}}
\def\cY{{\cal Y}}
\def\bbN{\N}
\newcommand{\Tableofcontents}{
\thispagestyle{empty}
\pagenumbering{gobble}
\clearpage
\tableofcontents
\thispagestyle{empty}
\clearpage
\pagenumbering{arabic}
}
\newcommand{\OT}{{\sf OT}\xspace}%{\ensuremath{\operatorname{OT}}\xspace}
\DeclareMathOperator{\V}{V}
\newcommand{\Pc}{\MathAlgX{P}}
\renewcommand{\DP} {{\sf DP}\xspace}
\newcommand{\pk} {\kappa}
\newcommand{\OA} {O_\Ac}
\newcommand{\OB} {O_\Bc}
\newcommand{\tC} {\tilde{C}}
\newcommand{\ty} {\widetilde{y}}
\newcommand{\tY} {\widetilde{Y}}
\newcommand{\out}{{\rm out}}
\newcommand{\Lap}{{\rm Lap}}
\let\xx@thm\@thm
\DeclareMathSymbol{\shortminus}{\mathbin}{AMSa}{"39}
\newcommand{\hY}{\widehat{Y}}
\newcommand{\Eve}{\Ec}
\newcommand{\CDP}{{\sf CDP}\xspace}
\newcommand{\KA}{{\sf KA}\xspace}
\newcommand{\sdist}[2]{\SD\paren{#1, #2}}
\newcommand{\hy}{\widehat{y}}
\newcommand{\keywords}[1]{\bigskip\par\noindent{\footnotesize\textbf{Keywords\/}: #1}}
\newcommand{\oA}{{o_\Ac}}
\newcommand{\vA}{{v_\Ac}}
\newcommand{\VA}{{V_\Ac}}
\newcommand{\oB}{{o_\Bc}}
\newcommand{\vB}{{v_\Bc}}
\newcommand{\VB}{{V_\Bc}}
\newcommand{\hVB}{{\widehat{V}_\Bc}}
\newcommand{\hVA}{{\widehat{V}_\Ac}}
\newcommand{\hoB}{{\widehat{o}_\Bc}}
\newcommand{\hoA}{{\widehat{o}_\Ac}}
\newcommand{\hOB}{{\widehat{O}_\Bc}}
\newcommand{\hOA}{{\widehat{O}_\Ac}}
\newcommand{\isize}{n}
\newcommand{\WEC}{{\sf WEC}\xspace}
\newcommand{\CompWEC}{{\sf CompWEC}\xspace}
\newcommand{\AWEC}{{\sf AWEC}\xspace}
\newcommand{\CompAWEC}{{\sf CompAWEC}\xspace}
\newcommand{\hAc}{\widehat{\Ac}}
\newcommand{\hBc}{\widehat{\Bc}}
\newcommand{\hPsi}{\widehat{\Psi}}
\newcommand{\pn}{n}
\newcommand{\hhC}{\widehat{C}}
\newcommand{\Dist}{\MathAlgX{Dist}}
\newcommand{\Pred}{\MathAlgX{Pred}}
\renewcommand{\nu}{n.u.\xspace}
\newcommand{\Nnote}[1]{\authnote{Noam}{#1}}
\newcommand{\Enote}[1]{\authnote{Eliad}{#1}}
\title{
Computationally Differentially Private Inner-Product Protocols Imply Oblivious Transfer
}
\author{
Iftach Haitner\thanks{Stellar Development Foundation and The Blavatnik School of Computer Science, Tel Aviv University. {\tt iftachh@tauex.tau.ac.il}. 
Research supported by Israel Science Foundation grants 836/23.}
\and
Noam Mazor\thanks{Tel Aviv University. {\tt noammaz@gmail.com}. Research partially supported by NSF CNS-2149305, AFOSR
Award FA9550-23-1-0312 and AFOSR Award FA9550-23-1-0387 and ISF Award 2338/23.}
\and
Jad Silbak\thanks{Department of Computer Science, Northeastern University. {\tt jadsilbak@gmail.com}. Research supported by the Khoury College Distinguished Post-doctoral Fellowship.}
\and
Eliad Tsfadia\thanks{Department of Computer Science, Georgetown University. {\tt eliadtsfadia@gmail.com}. Research supported by a gift to Georgetown University.}
\and
Chao Yan\thanks{Department of Computer Science, Georgetown University. {\tt cy399@georgetown.edu}. Research supported by a gift to Georgetown University.}
}
\author{
Iftach Haitner\inst{1,2}
\and
Noam Mazor\inst{2}
\and
Jad Silbak\inst{3}
\and
Eliad Tsfadia\inst{4}
\and
Chao Yan\inst{4}
}
\institute{Stellar Development Foundation \and
The Blavatnik School of Computer Science, Tel Aviv University\\
\email{iftachh@tauex.tau.ac.il, noammaz@gmail.com} \and
Department of Computer Science, Northeastern University\\
\email{jadsilbak@gmail.com} \and
Department of Computer Science, Georgetown University\\
\email{eliadtsfadia@gmail.com, cy399@georgetown.edu}
}
\begin{document}

\maketitle

\begin{abstract}

In \textit{distributed differential privacy}, multiple parties collaborate to analyze their combined data while each party protects the confidentiality of its data from the others. Interestingly, for certain fundamental two-party functions, such as the \textit{inner product} and \textit{Hamming distance}, the accuracy of distributed solutions significantly lags behind what can be achieved in the centralized model. For computational differential privacy, however, these limitations can be circumvented using \textit{oblivious transfer} (used to implement secure multi-party computation).
Yet, no results show that oblivious transfer is indeed necessary for accurately estimating a non-Boolean functionality. 
In particular, for the inner-product functionality, it was previously unknown whether oblivious transfer is necessary even for the best possible constant additive error.

In this work, we prove that any computationally differentially private protocol that estimates the inner product over $\oo^n \times \oo^n$ up to an additive error of $O(n^{1/6})$, can be used to construct oblivious transfer. In particular, our result implies that protocols with sub-polynomial accuracy are equivalent to oblivious transfer. In this accuracy regime, our result improves upon \citeauthor*{HaitnerMST22} [STOC '22] who showed that a key-agreement protocol is necessary.

\keywords{differential privacy; inner product; oblivious transfer}
\end{abstract}

\iffull
\Tableofcontents
\fi

\section{Introduction}
\textit{Differential privacy} \cite{DMNS06} is a mathematical privacy guarantee designed to facilitate statistical analyses of databases while ensuring strong guarantees against the leakage of individual-level information. 
Formally, 
\begin{definition}[Differential Privacy (DP)]\label{def:intro:DP}
	A randomized function (``mechanism'') $\Mc \colon \cX^n \rightarrow \cY$ is $(\eps,\delta)$-differentially private (in short, $(\eps,\delta)$-$\DP$) if for every $x,x' \in \cX^n$ that differ on only one entry, and any algorithm $\Dc\colon \cY \rightarrow \zo$ (``distinguisher''), it holds that
	\begin{align}\label{eq:intro:DP}
		\pr{\Dc(\Mc(x)) = 1} \leq e^{\eps}\cdot \pr{\Dc(\Mc(x')) = 1} + \delta.
	\end{align}
        When $\delta = 0$, we omit it from the notation.
\end{definition}

Most research in differential privacy regards the \textit{centralized model}: users share their data with a \emph{central server} that applies a differentially private mechanism and releases the results to the (potentially adversarial) world. In many scenarios, however, there is no such trusted server, and the data is distributed among several parties who wish to perform a joint computation while preserving the privacy of each party's data from the eyes of the others.
Constructing multiparty protocols that realize a functionality while preserving $\DP$ is typically more challenging than creating them in the centralized model;  a party's output should not only be a differentially private given the other parties' inputs and outputs (as in the centralized model), but also given the other parties' view in the protocol (\cite{DN04,BNO08}).

A  fundamental example that illustrates the hardness of distributed $\DP$ is the inner-product functionality of two binary datasets:
parties $\Ac$ and $\Bc$ hold datasets $x \in \oo^n$ and $y \in \oo^n$ (respectively), and wish to estimate the inner product over the integers $\ip{x,y} = \sum_{i=1}^n x_i \cdot y_i$ (the correlation between the datasets) using a $\DP$ protocol.
There exists a protocol (specifically, the so-called \textit{randomized response})  that is $\eps$-$\DP$ and estimates the inner-product up to additive error $O_{\eps}(\sqrt{n})$ (\cite{McGregorMPRTV10,MPRV09}), and this essentially the best error rate we can achieve while maintaining $\eps$-$\DP$: \cite{McGregorMPRTV10,HaitnerMST22} showed that any $\eps$-$\DP$ protocol for the inner-product, even when allowing any $\delta \in o(1/n)$, must have an additive error of $\Omega_{\eps}(\sqrt{n})$. This is in contrast to the centralized model, where the simple $\eps$-$\DP$ Laplace mechanism $\Mc(x,y) = \ip{x,y} + \Lap(2/\eps)$ has only additive error $O_{\eps}(1)$. 

To achieve better accuracy, \cite{BNO08,MPRV09} considered protocols that only guarantee \emph{computational differential privacy ($\CDP$)}: a computational analog of  $\DP$ where \cref{def:intro:DP} is relaxed so that  \cref{eq:intro:DP} only holds for computationally efficient (\ppt) algorithms $\Dc$, and not necessarily for any algorithm $\Dc$. This relaxation is very powerful. For example, if we assume the existence of the cryptographic primitive \emph{oblivious transfer} (in short, $\OT$, \cite{Rabin81}), then in any distributed setting (even for more than two parties), we can achieve the accuracy of a centralize $\DP$ mechanism by emulating it using \emph{secure multiparty computation} (\cite{BNO08,DKMMN06}). In particular, using $\OT$, two parties can securely emulate the (centralized) Laplace mechanism to achieve an $\eps$-$\CDP$ protocol with additive error $O_{\eps}(1)$. This leads to the following fundamental question:

\begin{question}
	Can we construct a two-party $\CDP$ protocol that estimates the inner product with a small additive error using a cryptographic primitive that is \emph{weaker} than $\OT$? 
    %Can we construct a two-party $\CDP$ protocol that estimates the inner product with a constant, or even sub-polynomial additive error, using a cryptographic primitive that is weaker than $\OT$? 
\end{question}

\citet*{HaitnerMST22}  made partial progress on this matter, showing that any $(\eps,\frac1{n^2})$-$\CDP$ protocol with non-trivial accuracy guarantee of $o_{\eps}(\sqrt{n})$ must use a cryptographic primitive that is at least as strong as \emph{key agreement} (in short, $\KA$).  However, although $\KA$ and $\OT$ are both public-key cryptography primitives, $\KA$ is believed to be a weaker primitive (\cite{GertnerKMRV00}) and does not suffice for secure MPC. Even for $\CDP$ protocols that estimate the inner-product with very high accuracy of $O_{\eps}(1)$, it was previously unknown whether $\OT$ is indeed necessary.

\subsection{Our Result}
We prove the first equivalence between $\OT$ and $\CDP$ two-party protocols that estimate the inner-product with small additive error, and in particular, additive error up to $O_{\eps}(n^{1/6})$.

\begin{theorem}[Main result, informal]\label{thm:intro:main}
    An $\paren{\eps,\frac1{3n}}$-$\CDP$ two-party protocol that estimates the inner product over $\oo^n \times \oo^n$ up to an additive error $n^{1/6}/(c\cdot e^{c \eps})$ with probability $0.999$ (for some universal constant $c > 0$), can be used to construct an $\OT$ protocol.\footnote{\cref{thm:intro:main} is proven via a uniform and black-box reduction. Thus, although the security of $\CDP$ and $\OT$ is commonly defined with respect to non-uniform adversaries, \cref{thm:intro:main}  also implies a reduction from $\OT$ that is secure against uniform adversaries to such mildly accurate inner-product protocols that are $\CDP$ with respect to uniform distinguishers.}
\end{theorem}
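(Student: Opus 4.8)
The plan is to build $\OT$ indirectly, routing through a weaker two-party primitive that is already known to imply $\OT$, and to concentrate the work on extracting that primitive from the inner-product protocol. Concretely, I would target a \emph{computational (augmented) weak erasure channel} (\CompAWEC): a sender holds a bit $b$, a receiver outputs either a guess of $b$ or an erasure together with a flag recording which event occurred, and one only asks for \emph{weak} parameters --- a delivery event on which the receiver's guess beats $1/2$ by a fixed amount, a secrecy guarantee on which no efficient receiver guesses $b$ much better, and the inability of an efficient sender to tell delivery from erasure. The classical \WEC/\AWEC-to-$\OT$ machinery then amplifies and combines such a channel into full $\OT$, \emph{provided} its correctness parameter strictly exceeds its insecurity parameter; I would invoke this step as a black box, so the entire difficulty lies in realizing \CompAWEC from a mildly accurate $\CDP$ protocol.

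The channel is powered by a single tension: accuracy forces the transcript to (approximately) encode $\ip{x,y}$, while $(\eps,\tfrac1{3n})$-$\CDP$ forces the transcript to computationally hide the effect of flipping any one coordinate of the sender's input. I would run the protocol on inputs that are public and fixed except on a designated coordinate $j$, into which the sender (say $\Ac$) plants its bit as $x_j$ and the receiver ($\Bc$) plants a private uniform $y_j$. Because every coordinate other than $j$ is known to the receiver, it can subtract their known contribution from the transcript's estimate of $\ip{x,y}$ and is left with a noisy reading of $x_j y_j = b\cdot y_j$; knowing $y_j$, it thereby reads a noisy copy of $b$. Accuracy guarantees this reading is correct with probability noticeably above $1/2$ (yielding the correctness/delivery parameter), while the $\CDP$ guarantee on the flip $x_j \mapsto -x_j$ bounds how well \emph{any} efficient receiver can do (yielding secrecy), and symmetrically keeps the receiver's decode/erase decision computationally independent of the sender's view.

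The quantitative heart is that the honest decode advantage must \emph{strictly beat} the $\CDP$ leakage. A single $\pm1$ coordinate produces a signal of size $1$ against estimator noise of magnitude $\sigma \approx n^{1/6}$, so the honest receiver's advantage is of order $1/\sigma$, whereas a single-coordinate flip is only protected up to the $\DP$ slack, which scales like $e^{\Theta(\eps)}$. Forcing the former to dominate the latter --- after accounting for the polynomial losses incurred in turning a weak one-shot correlation into a channel with a genuine, detectable delivery/erasure flag --- is exactly what pins the admissible error at $n^{1/6}/(c\cdot e^{c\eps})$ and what absorbs the $0.999$ success probability. Optimizing the gadget (how many coordinates to probe, how to set the confidence threshold) against $\sigma$ and the privacy budget is where the exponent $1/6$ is forced.

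The main obstacle I anticipate is the security direction together with this conversion: translating the $\CDP$ inequality --- a statement about \ppt{} distinguishers applied to the transcript --- into a bona fide channel whose correctness parameter provably exceeds its insecurity parameter requires a careful hybrid argument, showing both that the receiver's efficiently computed decode decision cannot correlate with $b$ beyond the allowed gap on the erasure event, and that the sender's view does not leak the flag. A secondary difficulty is that these guarantees hold only for semi-honest, input-respecting parties, so one must either first establish a semi-honest \CompAWEC and apply a compiler to upgrade it to malicious security, or plant inputs robustly against deviations, all while keeping the reduction uniform and black-box as the footnote demands; preserving the delicate correctness-minus-secrecy gap through these transformations is the most error-prone part of the argument.
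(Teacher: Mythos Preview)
Your plan has a fundamental gap: by planting the sender's bit in a \emph{single} coordinate $x_j$ and asking the receiver to recover it from the transcript, you have made the honest receiver's decoding task \emph{identical} to the task whose hardness is asserted by $(\eps,\tfrac1{3n})$-$\CDP$. The $\CDP$ inequality says that \emph{every} efficient function of $\Bc$'s view---including the honest decoder you describe---distinguishes $x_j=1$ from $x_j=-1$ with advantage at most $\tfrac{e^\eps-1}{e^\eps+1}+O(\delta)$. So your correctness parameter and your secrecy parameter are bounded by the \emph{same} number, and there is no gap to amplify; the sentence ``the honest decode advantage must strictly beat the $\CDP$ leakage'' cannot be made true in this design. (Your quantitative heuristic also does not yield $n^{1/6}$: the signal-to-noise ratio $1/\sigma$ versus leakage $e^{\Theta(\eps)}$ would at best force $\sigma=O(1)$.) Using $k$ coordinates all set to $b$ does not help either, since group privacy degrades the secrecy bound by the same factor that amplifies the signal.

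The paper avoids this trap by never trying to transmit a bit through a single (or few) coordinates. Instead, both parties run the protocol on \emph{fully random} $x,y$, and the channel's payload is the \emph{integer} $O_\Ac=\ip{x_{-r},\hat y_{-r}}$ for a random $r$ and a $\hat y$ that $\Bc$ either sets to $y$ (non-erasure) or to $y$ with $k$ random positions re-randomized (erasure). Correctness then comes solely from the \emph{accuracy} of $\Pi$ (giving $|O_\Ac-O_\Bc|\le\ell$ on non-erasure), while secrecy against $\Bc$ comes from a \emph{multi-coordinate} extractor argument: on erasure, $O_\Ac$ contains the term $\ip{x_\cH,\text{fresh noise}}$ for the $\Theta(k)$-sized set $\cH$, which $\Bc$ cannot estimate within $\sqrt{k/e^\eps}\gg\ell$ without violating the privacy of $x$. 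The $n^{1/6}$ threshold arises from the \emph{other} secrecy direction---showing $\Ac$ cannot detect the injected noise---which (via a hybrid and a prediction lemma turning a $1/k$ distinguishing advantage into a DP violation) forces $k\ll n^{1/3}$; combined with $k\gg e^{\Theta(\eps)}\ell^2$ this gives $\ell\ll e^{-\Theta(\eps)}n^{1/6}$. Your proposal has no analogue of either mechanism.
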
 

In particular, \cref{thm:intro:main} implies that without assuming the existence of $\OT$, it is impossible to construct a $\CDP$ protocol for the inner-product with a constant, or even sub-polynomial, additive error. It follows that a protocol with a small polynomial additive error is equivalent to a protocol with a constant error. That is because using such a mildly accurate protocol, we can construct by \cref{thm:intro:main} an oblivious transfer, and then use it to perform a secure two-party computation of the Laplace mechanism.
\begin{corollary}
There exists a universal constant $c>0$ such that the following holds: assume that an $\paren{\eps,\frac1{3n}}$-$\CDP$ two-party protocol that estimates the inner product over $\oo^n \times \oo^n$ up to an additive error $n^{1/6}/(c\cdot e^{c \eps})$ with probability $0.999$ exists, then there exists an $\paren{\eps,\negl(n)}$-$\CDP$ protocol that estimates the inner product up to an additive error $20/\eps$ with probability $0.999$.
\end{corollary}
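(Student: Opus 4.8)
The plan is to chain \cref{thm:intro:main} with the completeness of $\OT$ for secure computation, so that almost all the work is imported from the main theorem. First I would observe that the hypothesis of the corollary is verbatim the hypothesis of \cref{thm:intro:main}: an assumed $\paren{\eps,\frac1{3n}}$-$\CDP$ inner-product protocol over $\oo^n\times\oo^n$ with additive error $n^{1/6}/(c\cdot e^{c\eps})$ and success probability $0.999$. Applying \cref{thm:intro:main} to it therefore yields a (computationally secure) $\OT$ protocol.

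Next I would invoke the classical fact that $\OT$ is complete for secure two-party computation: given $\OT$, the parties can securely evaluate any efficiently computable two-party functionality with computational security and negligible simulation error. In particular, $\Ac$ and $\Bc$ can securely emulate the centralized mechanism $\Mc(x,y)=\ip{x,y}+\Lap(2/\eps)$, where to stay within finite-precision arithmetic I would use its discrete analogue (the geometric / discrete-Laplace mechanism, which is samplable inside the circuit and enjoys the same guarantees up to constants); see \cite{BNO08,DKMMN06} for this emulation paradigm. Let $\Pi$ denote the resulting protocol; it remains to establish its accuracy and its $\CDP$ guarantee.

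For accuracy, flipping a single coordinate of $x\in\oo^n$ changes $\ip{x,y}$ by $2y_i\in\set{-2,2}$, so the sensitivity is $2$ and scale $2/\eps$ makes $\Mc$ be $\eps$-$\DP$; since $\pr{\abs{\Lap(2/\eps)}>20/\eps}=e^{-10}<0.001$, the output of $\Pi$ is within $20/\eps$ of $\ip{x,y}$ with probability at least $0.999$. For privacy I would run the standard ``secure computation of a $\DP$ mechanism is $\CDP$'' argument. Let $\Sim$ be the simulator guaranteed by the security of $\Pi$ for $\Bc$'s view, so that $\Sim(y,\Mc(x,y))\cindist\view_\Bc(x,y)$. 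Then for every \ppt distinguisher $\Dc$, every pair $x,x'$ differing in a single coordinate, and every $y$,
\begin{align*}
\pr{\Dc(\view_\Bc(x,y))=1}
&\le \pr{\Dc(\Sim(y,\Mc(x,y)))=1}+\negl(n)\\
&\le e^{\eps}\cdot\pr{\Dc(\Sim(y,\Mc(x',y)))=1}+\negl(n)\\
&\le e^{\eps}\cdot\pr{\Dc(\view_\Bc(x',y))=1}+\negl(n),
\end{align*}
where the middle inequality applies the $\eps$-$\DP$ of $\Mc$ (with $\delta=0$) to the \ppt map $\Dc(\Sim(y,\cdot))$, and the outer inequalities use the computational indistinguishability of the real and simulated views (the constant factor $e^\eps$ only rescales a negligible term). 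Arguing symmetrically for $\Ac$ shows that $\Pi$ is $\paren{\eps,\negl(n)}$-$\CDP$.

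The main obstacle is not in the corollary itself — its entire force comes from \cref{thm:intro:main} — but in getting the privacy composition exactly right: one must check that computational indistinguishability of the parties' views composes with the statistical $\eps$-$\DP$ of the ideal mechanism to yield $\CDP$ with only a negligible additive loss, and that treating $\eps$ as a constant keeps the factor $e^\eps$ from inflating that loss beyond negligible. Everything else is routine.
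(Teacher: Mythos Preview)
Your proposal is correct and follows exactly the paper's approach: apply \cref{thm:intro:main} to obtain $\OT$, then use $\OT$-based secure two-party computation to emulate the centralized Laplace mechanism, yielding the desired accuracy and $(\eps,\negl(n))$-$\CDP$. The paper states this argument in a single sentence preceding the corollary; you have simply fleshed out the standard details (accuracy of Laplace, the simulator-based $\CDP$ derivation) that the paper leaves implicit.
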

\cref{thm:intro:main} is the first equivalence result between $\OT$ to any $\CDP$ protocol that estimates a natural, non-Boolean functionality.
Prior works only provide such equivalences for $\CDP$ protocols that estimate the (Boolean) XOR functionality (\cite{GKMPS16,HMSS19}), a less interesting functionality in the context of $\DP$, as even in the centralized model, any $\DP$ algorithm for estimating XOR must incur an error close to one-half.
In contrast, the inner product has a significantly larger gap between the accuracy achievable in two-party $\DP$ and in $\CDP$.
%Recall that with $\OT$, we can construct a $\CDP$ protocol that estimates the inner product with additive error $O_{\eps}(1)$. By our \cref{thm:intro:main}, even the much relaxed task of estimating the error up to $O_{\eps}(n^{1/6})$ still requires $\OT$.
A proof overview of \cref{thm:intro:main} is given in \cref{sec:overview}.

\paragraph{Comparison with \cite{HaitnerMST22}.}
We remark that our result is not strictly stronger than \cite{HaitnerMST22}. First, \cite{HaitnerMST22} covered the whole non-trivial accuracy regime up to $O_{\eps}(\sqrt{n})$ while our result is only limited to $O_{\eps}(n^{1/6})$. Second, \cite{HaitnerMST22}'s result holds even for low confidence regimes, while \cref{thm:intro:main} requires high accuracy confidence of $0.999$. But in the natural high-accuracy high-confidence regimes, our result has several advantages beyond the necessity of a stronger primitive. First, \cite{HaitnerMST22}'s reduction only works for $\CDP$ two-party protocols that provide the inner-product estimation within the transcript, while our reduction in \cref{thm:intro:main} just assumes that (at least) one of the parties sees the estimation. Second, and perhaps surprisingly, the proof of \cref{thm:intro:main} is conceptually simpler, even though it provides reduction from a stronger primitive. We provide more details in \cref{sec:overview} (see \cref{remark:OTvsKA}). 

\remove{
\Nnote{the paragraph below is hard to follow, need to think what to do with it - maybe move it to be a high-level remark after the protocol is the proof overview? I'm not sure we want to say the words SV, condenser, extractor, ...}
Very roughly, in the secrecy analysis of \cite{HaitnerMST22}'s $\KA$ protocol, they had to prove (in a constructive way) that the inner product of a strong Santha Vazirani (SV) source \cite{McGregorMPRTV10} with a uniformly random seed is a good \emph{condenser}, even when the seed and the source are \emph{dependent}. When the seed and the source are \emph{independent}, \cite{McGregorMPRTV10} already proved that the inner product is not only a good condenser for strong SV sources, but also a good \emph{extractor} for (standard) SV sources \cite{SV87}, which apart of their non-constructive technique and the independency assumption, is a stronger result. However, \cite{HaitnerMST22}'s dependency between the source and the seed resulted in a significant technical challenge, which made the proof conceptually harder. In contrast, in our $\OT$ security analysis, since we only provide secrecy guarantees from the point of views of the parties (and not from the point of view of an external observer that only sees the transcript), we bypass this dependency issue. See more details in \cref{sec:overview}.
}

%This result implies that such a protocol cannot be constructed in a fully black-box manner from a symmetric-key primitive.

\subsection{Open Questions}

In this paper, we show that a $\paren{\eps,\frac1{3n}}$-$\CDP$ protocol that estimates the inner product over $\oo^n \times \oo^n$ up to an additive error $O_{\eps}(n^{1/6})$ is equivalent to oblivious transfer ($\OT$). For the inner-product functionality, it is still left open to close the gap up to any non-trivial accuracy of $O_{\eps}(\sqrt{n})$. \cite{HaitnerMST22} closed this gap w.r.t.\ a weaker notion of $\CDP$ \emph{against external observer}\footnote{A protocol is $\CDP$ against external observer, if the transcript of the execution is a $\CDP$ function of the input datasets, but not necessarily the views of the parties as in the standard notion.} by showing that key agreement ($\KA$) is necessary and sufficient for any non-trivial accuracy guarantee, which in particular implies that $\KA$ is also necessary for the standard notion of $\CDP$. But since $\KA$ is strictly weaker than $\OT$ (under block-box reductions), we still do not know what is the right answer for additive errors in the range between $n^{1/6}$ and $\sqrt{n}$.

Beyond characterizing the inner-product functionality, the main challenge is to extend this understanding to other $\CDP$ distributed computations. For some functionalities, \eg Hamming distance, we have a simple reduction to the inner-product functionality. But finding a more general characterization that captures more (or even all) functionalities, remains open.

\subsection{Paper Organization}

\iffull
In \cref{sec:overview}, we give a high-level proof of \cref{thm:intro:main}, and in \cref{sec:related-works}, we discuss additional related works. Notations, definitions, and general statements used throughout the paper are given in \cref{sec:preliminaries}.
Our main formal theorems are stated in \cref{sec:main_theorems}, and the oblivious transfer protocol together with its
security proof are given in
\cref{sec:DPIP_to_WAEC}. 
In \cref{sec:related-works} we discuss additional related works.
The proof of our main result relies on a technical tool that is proven in \cref{sec:AWEC-to-WEC}. Other missing proofs appear in \cref{sec:missing-proofs}.

\else
In \cref{sec:overview}, we give a high-level proof of \cref{thm:intro:main}, and in \cref{sec:related-works}, we discuss additional related works. Notations, definitions, and general statements used throughout the paper are given in \cref{sec:preliminaries}.
Our main formal theorems are stated in \cref{sec:main_theorems}, and the oblivious transfer protocol together with its
security proof are given in
\cref{sec:DPIP_to_WAEC}.
Missing proofs appear in the full version \cite{HaitnerSMTC25}.

\fi

\section{Proof Overview}\label{sec:overview}
In this section, we give an overview of our proof techniques. Our starting point is the key-agreement protocol of \cite{HaitnerMST22}. As mentioned above, \cite{HaitnerMST22} showed that a $\CDP$ protocol $\Pi$ for computing the inner product functionality implies the existence of key-agreement ($\KA$). To prove that, they used $\Pi$  to construct a (weak) $\KA$ protocol, in which the parties $\Ac$ and $\Bc$ interact as follows:
\begin{enumerate}
    \item In the first step, $\Ac$ and $\Bc$ choose random inputs $x\in \set{-1,1}^n$ and $y\in \set{-1,1}^n$, respectively.
    \item The parties interact using $\Pi$ to get approximation $z$ of $\langle x,y \rangle$.
    \item Next, $\Ac$ chooses a random string $r \gets \zn$, and sends $r, x_r=\set{x_i \colon r_i=1}$ to $\Bc$. $\Bc$ replies with $y_{- r}=\set{y_i \colon r_i=0}$.
    \item Finally, $\Ac$ computes and outputs $\out_\Ac=\langle x_{-r},y_{-r} \rangle$, and $\Bc$ outputs $\out_\Bc=z-\langle x_{r},y_{r} \rangle$.
\end{enumerate}
The first observation made by \cite{HaitnerMST22} is that, by linearity of the inner product, $\langle x_{-r},y_{-r} \rangle+\langle x_{r},y_{r} \rangle=\langle x,y \rangle$, and thus $\size{\out_A-\out_B}=\size{z-\langle x,y \rangle}$, which is small assuming $\Pi$ is accurate. Moreover, it was shown that this protocol can be amplified into a full-fledged $\KA$ protocol if any efficient adversary $\Eve$ cannot approximate $\out_A$ within a small additive distance, given the transcript of the weak $\KA$ protocol. The main technical part in \cite{HaitnerMST22} is to show that this is indeed the case in the above protocol. 

In this work, we are interested in constructing a stronger primitive, namely oblivious transfer ($\OT$). In contrast to a $\KA$ protocol in which security must hold against external observers that only see the transcript, in $\OT$, the security needs to hold against an adversary that has access to the full view of one of the parties. %It turns out that the $\KA$ protocol of \cite{HaitnerMST22} cannot be used as is for this task.
 Our first observation is that the original $\KA$ protocol of \cite{HaitnerMST22}, as is, 
cannot be used to construct any form of even ``weak" oblivious transfer.
In more details, it is possible to show that for some carefully chosen $\CDP$ and accurate protocol  $\Pi$, the joint view of the parties in the above protocol can be \emph{simulated} using a trivial protocol, without using $\Pi$ at all.\iffull
\footnote{Here a protocol $\Pi'$ simulates $\Pi$ if $\Pi'$ generates the joint view of the parties $\Ac$, $\Bc$ in $\Pi$, without revealing any other information to the parties. We say that the simulation $\Pi'$ is trivial if it does not rely on any nontrivial protocol or cryptographic assumptions. Importantly, the simulation may reveal new information (that is already known for the internal parties $\Ac$ and $\Bc$) to an external observer, and thus it doesn't imply that the protocol is not a $\KA$ protocol. See \cref{appendix:HaitnerMST22} for a concrete example.}
\else\footnote{Here a protocol $\Pi'$ simulates $\Pi$ if $\Pi'$ generates the joint view of the parties $\Ac$, $\Bc$ in $\Pi$, without revealing any other information to the parties. We say that the simulation $\Pi'$ is trivial if it does not rely on any nontrivial protocol or cryptographic assumptions. Importantly, the simulation may reveal new information (that is already known for the internal parties $\Ac$ and $\Bc$) to an external observer, and thus it doesn't imply that the protocol is not a $\KA$ protocol. See the full version \cite{HaitnerSMTC25} for a concrete example.}
\fi
Since there is no $\OT$ without computational assumptions, this leads to the conclusion that the $\KA$ protocol of \cite{HaitnerMST22} could not be used as a subroutine (in a black-box manner) to construct $\OT$. Looking ahead, our solution is to inject carefully placed \emph{noise} to the interaction, in a manner that the parties cannot simulate without revealing private information. As we will see shortly,  this enables us to bypass the limitations of the original construction of \cite{HaitnerMST22} and to achieve $\OT$.

\paragraph{Weak Erasure Channel.} Before describing our construction, let us explain the properties that we need the construction to fulfill. Similarly to \cite{HaitnerMST22}, we will start with constructing a weak $\OT$ protocol, and then we will show how to amplify it into a full-fledged $\OT$ protocol. By the work of \cite{Wullschleger09}, to achieve OT, it is sufficient to construct a \emph{weak erasure channel} (\WEC).\footnote{A weak erasure channel is a weak version of Rabin's $\OT$ \cite{Rabin81}, which is known to be equivalent to ${2 \choose 1}$-$\OT$.} In this work, we construct a protocol with a weaker security guarantee, which we call \emph{approximate weak erasure channel} (\AWEC), and show that such a protocol also suffices for constructing $\OT$. Informally, an $(\ell,\alpha,p,q)$-\AWEC is a no-input protocol between   $\Ac$ and $\Bc$, party $\Ac$ outputs a number $O_\Ac\in [-n,n]$, and party $\Bc$ either outputs a number $O_\Bc\in [-n,n]$, or an erasure symbol $O_\Bc=\bot$. We additionally require that:

\begin{itemize}
		
		\item Erasure happens with probability $1/2$. Namely,   $\pr{O_\Bc = \perp} = 1/2$.
		
		\item When there is no erasure, $\Ac$ and $\Bc$ approximately agree. More formally, $$\pr{\size{O_A - O_B} > \ell \mid O_B \ne \bot}\le \alpha.$$
		
		\item When an erasure occurs, $\Bc$ cannot predict the value of $O_A$:  let $V_B$ be the view of $\Bc$ in the protocol, then for any efficient $\widehat{\Bc}$ 
        \begin{align*}
				\pr{\size{\widehat{\Bc}(V_B) - O_A} \leq 1000 \ell \mid O_B=\bot} \leq q.
			\end{align*}
			(i.e., if $O_B=\bot$, Bob can't estimate the value of $O_A$ with error $\leq 1000 \ell$).
	
        \item Lastly, we require that $\Ac$ cannot tell when an erasure occurs: let $V_A$ be the view of $\Ac$ in the protocol, then for any efficient   
		$\widehat{\Ac}$ 
			\begin{align*}
				%\size{\pr{\Ac(O_A,V_A) = 1 \mid O_B \neq \perp} - \pr{\Ac(O_A,V_A) = 1 \mid O_B = \perp}} \le p
				\size{\pr{\widehat{\Ac}(V_A) = 1 \mid O_B \neq \perp} - \pr{\widehat{\Ac}(V_A) = 1 \mid O_B = \perp}} \le p.
			\end{align*}
	\end{itemize}
  See \cref{def:AWEC} in \cref{sec:AWEC} for the formal definition of \AWEC.  
  We show that when $p+q+\alpha \ll 1$, an \AWEC protocol can be amplified into an $\OT$. We do this by showing a reduction from \AWEC to \WEC and then applying the amplification result of \cite{Wullschleger09} to get a full-fledged $\OT$.

\paragraph{The Construction.} We next describe how to use a $\CDP$ protocol $\Pi$ computing the inner-product to construct an \AWEC. Our accuracy requirement from $\Pi$ is that on random inputs, $\Pi$ output is within distance $\ell \ll n^{1/6}$ from the inner-product with probability at least $0.999$. That is,
\[ 
\ppr{x,y\gets\oo^n,
z\gets \Pi(x,y)}{ \size{z-\langle x,y\rangle}\le \ell} \ge 0.999.
\]
The construction is similar to the construction of \cite{HaitnerMST22}, with the key difference that with probability $1/2$, $\Bc$ sends a noisy version of its vector to $\Ac$. Informally, this random noise ``erases" the information $\Bc$ has on $\Ac$'s output.

\begin{protocol}[\AWEC]\label{protocol:intro:DPIP-to-AWEC}
	%\item Oracle access: An $(\eps,\delta)$-CDP protocol $\Pi$ $(\ell,0.999)$-accurate for the \emph{inner-product} functionality.
	\item Operation:
	\begin{enumerate}
		%\item  $\Ac$ samples $x\gets \mon$, and   $\Bc$ samples $y\gets \mon$.
		
		\item $\Ac$ samples $x\gets \oo^n$, and $\Bc$ samples $y\gets \oo^n$.
        \item $\Ac$ and $\Bc$ interact according to $\Pi$ using inputs $x$ and $y$ \resp to get output $z\approx \langle x,y \rangle \pm \ell$. 
		\item  $\Ac$ samples  $r \la \zo^n$ and sends $(r,x_{r} = \set{x_i \colon r_i =1})$ to $\Bc$.
		
		\item $\Bc$ samples a random bit $b \la \zo$ and acts as follows:
		
		\begin{enumerate}
			
			\item If $b=0$ (``non-erasure''), it sends $y_{-r}= \set{y_i \colon r_i =0}$ to $\Ac$ and outputs $o_B = z - \ip{x_r, y_r}$.
			
			\item Otherwise ($b=1$, ``erasure''), it performs the following steps:
			
			\begin{itemize}
				\item Samples $k$ uniformly random indices $i_1,\ldots,i_k \la [n]$, where $n^{1/3} \gg k \gg  e^{\eps} \cdot  \ell^2$.
                
				\item Compute $\ty = (\ty_1,\ldots,\ty_n)$ where $\ty_i \la \begin{cases} \cU_{\oo} & i \in \set{i_1,\ldots,i_k} \\ y_i & \text{otherwise} \end{cases}$,
				\item Send $\ty_{-r}= \set{\ty_i \colon r_i =0}$ to $\Ac$, and
				\item Output $o_B = \perp$.
			\end{itemize}

		\end{enumerate}
		
		\item  Denote by $\hy_{-r}$ the message $\Ac$ received from $\Bc$. Then $\Ac$ outputs $o_A =  \ip{x_{-r}, \hy_{-r}}$.

	\end{enumerate}
\end{protocol}
That is, with probability $1/2$, $\Bc$ decides to make an erasure by changing the value of $k$ random bits of $y$ before sending $y_{-r}$ to $\Ac$. Let $\mu=\widetilde{y}-y$ be the noise vector. Then importantly, the crux of the above protocol is that the noise $\langle \mu_{-r}, x_{-r} \rangle = \langle \widetilde{y}_{-r}, x_{-r} \rangle-   \langle y_{-r}, x_{-r} \rangle$ cannot be simulated by $\Ac$ or $\Bc$ without revealing more information on the input of the other party. We next give an overview of the analysis of the above construction. 

\paragraph{Analysis.} 
We define the following random variables \wrt a random execution of \cref{protocol:intro:DPIP-to-AWEC}. Let $X,Y$ be inputs of $\Ac$ and $\Bc$ to $\Pi$, respectively, and let $V_A$ and $V_B$ be their view in the interaction according to $\Pi$.  Let $R, I_1,\dots, I_k, \widetilde{Y}$ and $\widehat{Y}$ be the random variables taking the values of $r,i_1,\dots,i_k,\widetilde{y}$ and $\widehat{y}$, respectively. 
Finally, let $\widehat{V}_A =(V_A, R, \widehat{Y}_{-R})$ and $\widehat{V}_B= (V_B, R, X_R, I_1,\dots, I_k, \widetilde{Y},\widehat{Y})$ be the parties' views in \cref{protocol:intro:DPIP-to-AWEC}, and $O_A,O_B$ their outputs, respectively. 

It is evident by the definition of the protocol that erasure indeed occurs with probability $1/2$. Moreover, similarly to the correctness of the key-agreement protocol, and by the accuracy of $\Pi$, when there is no erasure, $O_A$ and $O_B$ are at most $\ell$ apart with probability at least $0.999$. Thus, it remains to prove that $\Bc$ cannot approximate $O_A$ when erasure occurs, and that $\Ac$ cannot learn whether it happened. To simplify the analysis in this proof overview, in the following, we assume that $\Pi$ is a $(\eps,0)$-\CDP protocol,\footnote{While $(\eps,0)$-\CDP (that is, setting $\delta=0$) doesn't make sense in the computational setting, for the sake of simplicity, we ignore this subtly in this informal overview.}  for a small constant $\eps>0$.

Note that while $\Bc$ learns the same information on $\Ac$’s input in the protocol in both the erasure and non-erasure cases, $\Ac$’s output $O_A$ is different in each case; when there is no erasure, $O_A$ is predictable (up to an additive error $\ell$) given $\Bc$’s view $V_B$. When there is an erasure, we claim that the privacy guarantee of $\Pi$ implies that $O_A$, given $V_B$, is unpredictable up to an additive error $k^{1/2} \gg \ell$ (more details in \cref{sec:techniques:erasure-do-erase}). The “magic” of the protocol is in the fact that, for $\ell \ll n^{1/6}$, $\Bc$ can control $\Ac$’s output, and manipulate it to be unpredictable (from $\Bc$’s point of view), without $\Ac$ noticing the difference. Proving the latter property (hidden erasure) is the main technical challenge of this work, and we provide an overview of the proof in \cref{sec:techniques:erasures-are-hidden}. By combining the two requirements on $k$, we obtain that our protocol is indeed $\AWEC$.

\begin{remark}[Oblivious transfer v.s. key agreement]\label{remark:OTvsKA}
Before discussing the security proof of our protocol, we would like to highlight an essential difference between the analysis of our protocol and \cite{HaitnerMST22}'s $\KA$ protocol. While the primitive we construct here, $\OT$, is a stronger primitive than $\KA$, certain aspects of the proof become more manageable in our analysis. This is because in $\OT$, we need to consider an internal attacker, while in $\KA$ the security needs to hold against an external observer. In more detail, \cite{HaitnerMST22} requires showing that security holds against an adversary who sees $R, X_{R}$, and $Y_{-R}$. This form of leakage creates a dependency between the seed $R$ and both the leakage on  $X$, $X_R$, and the leakage of $Y$, $Y_{-R}$, which makes the analysis challenging. In contrast, in our setting, the attacker sees one of the vectors, $X$ or $Y$, entirely. For example, $\Ac$ gets to see $X, R$ and $Y_{-R}$, and therefore there is no dependency between the leakage on $X$ and $R$. This fact helps us to bypass this dependency issue and makes this part of our analysis more manageable.
\end{remark}

\subsection{Erasures do Erase}\label{sec:techniques:erasure-do-erase}  

To see why ``erasures'' in \cref{protocol:intro:DPIP-to-AWEC} do erase the value of $O_A$ from $\Bc$'s point of view, let’s first consider a different version of \cref{protocol:intro:DPIP-to-AWEC}, where in the erasure case, party $\Bc$, instead of changing $k$ random bits in $Y_{-R}$, changes \emph{all} the bits to random ones. (Namely, in the erasure case, $\hY_{-R}$ is a uniformly random string, independent of $X$). Now, the question is, can $\Bc$ now predict $O_A = \ip{X_{-R}, \hY_{-R}}$? From $\Bc$’s point of view, the privacy guarantee of the protocol $\Pi$ implies that $X_{-R}$ must be “unpredictable”, and in \cite{McGregorMPRTV10,HaitnerMST22}, it is called “strong Santha Vazirani (SV)” source. \cite{McGregorMPRTV10} showed that the inner product of an $n$-size strong SV source with a random string, even if the string is known, is unpredictable up to an additive error of $\approx \sqrt{n}$, and \cite{HaitnerMST22} extended this result to the computational case. Therefore, if $\Bc$ decides to erase in this version, it cannot estimate $O_A = \ip{X_{-R}, \hY_{-R}}$ with an error smaller than $\approx \sqrt{n}$, which is much larger than $\ell$, its estimation of $\ip{X_{-R}, \hY_{-R}}$ in the non-erasure case. The problem with this version is that $\Ac$ can distinguish between the erasure and non-erasure cases. Hence, in the actual protocol, the number of bits that $\Bc$ changes in the erasure case is set to some smaller parameter $k$. The point is that the same argument we applied to the previous version still holds; if $\Bc$ changes $k$ random values in $Y_{-R}$ (call this $k$-size set of indices by $\cH$), then $O_A$ would be a linear function of $\ip{X_{\cH},\hY_{\cH}}$, and since $\Bc$ cannot estimated $\ip{X_{\cH},\hY_{\cH}}$ with error smaller than $\approx \sqrt{k}$ (from the same arguments above), it cannot do it for $O_A$ as well.

\iffull
%This parts follow quite easily using ideas from \cite{McGregorMPRTV10,HaitnerMST22}.

In more detail, assume towards a contradiction that there exists an efficient  $\widehat{\Bc}$, that given $\widehat{V}_B$ and conditioned on $O_B=\bot$ (equivalently, on $b=1$)  can approximate $O_A$ within a small distance. Namely, assume that
$$
\pr{\size{\widehat{\Bc}(\widehat{V}_B) - O_A} \leq 1000 \ell \mid O_B=\bot} >q 
$$
for some small constant $q$. Our goal is to construct, using $\widehat{\Bc}$, an algorithm $\widetilde{\Bc}$ that breaks the \CDP guarantee of $\Pi$.

Building on \cite{McGregorMPRTV10,HaitnerMST22}, to break the \CDP property of the protocol $\Pi$, it is enough to construct an algorithm that approximates the inner product of (a subset of) $X$ with a random vector $S$. That is, we want to construct an algorithm $\widetilde{\Bc}$ that for a random subset $\cH\subseteq [n]$ of size $k/4$ and for a random vector $S\gets \oo^{k/4}$,  approximates the inner product $\langle X_{\cH}, S \rangle$ within small additive distance. More formally, we want an algorithm $\widetilde{\Bc}$ such that 
$$\pr{\size{\widetilde{\Bc}(V_B,\cH,S, X_{-\cH})-\langle X_{\cH}, S \rangle}\le \sqrt{\size{\cH}/e^\eps}}> \alpha.$$
The work of \cite{McGregorMPRTV10} implies that in the information-theoretic settings, such an algorithm contradicts the privacy guarantee of $\Pi$.\footnote{\cite{McGregorMPRTV10} showed that when $\cH=[n]$, approximating the inner product contradicts DP. Extending this result for any subset $\cH$ is not hard.} \cite{HaitnerMST22} extended this result to the computational settings, by giving a constructive proof. Thus, such an algorithm $\widetilde{\Bc}$ is enough to break the assumed \CDP property of $\Pi$.

To construct $\widetilde{\Bc}$, we first need to make an observation regarding  $\widehat{\Bc}$. 
Let $\cH=\set{I_1,\dots,I_k} \cap \set{i\colon R_i=0}$ be the set of indices that $\Bc$ sends to $\Ac$ and replace the bit with a random value, and assume for simplicity that $\size{\cH}=k/4$.\footnote{By a simple concentration bound, the size of $\cH$ is at least $k/4$ with all but negligible probability. When $\cH$ is larger, we consider the first $k/4$ indices in $\cH$.} Let $S=\widetilde{Y}_\cH$, and note that the distribution of $S$ is uniformly random, even given $\cH,X$ and $V_B$. Our main observation is that if $\widehat{\Bc}$  approximates $O_A=\langle X_{-R},\widetilde{Y}_{-R} \rangle$, then it must also compute a good approximation of $\langle X_{\cH}, S \rangle =\langle X_{\cH},  \widetilde{Y}_{\cH}\rangle$.

To see why, let $\bar{\cH}=\set{i\colon R_i=0}\setminus \cH$, and let   $T=\widehat{\Bc} (\widehat{V}_B)-\langle X_{\bar{\cH}},\widetilde{Y}_{\bar{\cH}} \rangle$. We claim that $T$ is a good approximation of $\langle X_{\cH}, S \rangle$. Indeed,
\begin{align*}
T-\langle X_{\cH}, S \rangle &= T-(\langle X_{-R},  \widetilde{Y}_{-R}\rangle - \langle X_{\bar{\cH}},  \widetilde{Y}_{\bar{\cH}}\rangle)\\
&= (T + \langle X_{\bar{\cH}},  \widetilde{Y}_{\bar{\cH}}\rangle)-\langle X_{-R},  \widetilde{Y}_{-R}\rangle \\
& =\widehat{\Bc}(V_B) -\langle X_{-R},  \widetilde{Y}_{-R}\rangle. 
\end{align*}
Therefore, $\size{T-\langle X_{\cH}, S \rangle}\le \sqrt{\size{\cH}/e^\eps}$ if $\size{\widehat{\Bc}(V_B) -\langle X_{-R},  \widetilde{Y}_{-R}\rangle}\le 1000\ell$ (note that by our choice of $k$, $\ell \ll \sqrt{\size{\cH}/e^\eps}=\sqrt{k/4e^\eps}$).

Thus, we only need to show an algorithm $\widetilde{\Bc}$ that, given $V_B,\cH,S,X_{-\cH}$ as input, evaluates the value of $T$. The main issue left is that while $\widetilde{\Bc}$ only gets $V_B$, and to evaluate $T$ we need to execute $\widehat{B}$ that needs to get an entire view $\widehat{V}_B$ of $\Bc$ in \cref{protocol:intro:DPIP-to-AWEC}. However, this can be easily simulated. Indeed, $\widetilde{\Bc}$ can simply sample $R,I_1,\dots,I_k$ and $\widetilde{Y}$ condition on the values of $\cH, S$ (and $O_B=\bot$), to get the exact distribution of $\widehat{V}_B$.

\fi

\subsection{Erasures are Hidden}\label{sec:techniques:erasures-are-hidden}

This part of the proof is more involved and is our main new technical part. The goal is to show that $\widehat{\Ac}$ cannot tell when an erasure happens; equivalently, that $\widehat{\Ac}$ cannot distinguish $Y_{-R}$ from the noisy version $\widetilde{Y}_{-R}$ with too good probability. For this end, we show that an algorithm $\widehat{\Ac}$ that (given $V_A$) can distinguish $Y_{-R}$ from the noisy version $\widetilde{Y}_{-R}$ with a small but constant probability, can be used to break $\CDP$. 

\remove{
and is achieved by reduction: We show that if $\Ac$ can distinguish, then it can be used (in a black-box manner) to construct a prediction algorithm $\Gc$ that, given $n-1$ bits, predicts the missing bit too accurately, thereby violating the protocol's privacy guarantee. By a hybrid argument, if Alice can distinguish if $k$ random bits were changed (with constant probability), then she can distinguish with probability $\approx 1/k$ if a single random bit has changed. So the predictor $\Gc$ uses $\Ac$ distinguishing to predict a random bit correctly with probability $\approx 1/k$. The key step is to perform this prediction in a way that the probability that $\Gc$ predicts the missing bit incorrectly is $\ll 1/k$. We do that by carefully leveraging the fact that $\Ac$ distinguishing succeeds while only seeing a subset of the input (and not $n-1$ bits like $\Gc$ sees). \remove{In the proof of Lemma 2.3 (or 5.3), we use Alice to create the three values \mu^{-1}, \mu^1, \mu^* such that if b \in {0,1} is the right value of the missing bit, then |\mu^b - \mu^*| > 1/k and |\mu^{-b} - \mu^*| < k^2/n, so as long as 1/k >> k^2/n, it is easy to predict the right b. The latter holds when k << n^{1/3} }
}

The first step is to notice that, by a rather standard hybrid argument, it is enough to show that $\widehat{\Ac}$ cannot distinguish with advantage $\Theta(1/k)$ between $Y$ and a vector $Y'$ which is identical in all but one random coordinate. That is, let $I\gets [n]$ be a random index, and let  $Y^{(I)}=(Y_1,\dots,Y_{I-1},-Y_I,Y_{I+1},...,Y_n)$ be  the vector that is identical to $Y$ in all entries except the $I$-th one. Then it is enough to show that if
\[ \pr{\widehat{\Ac}(V_A,R,Y_{-R})=1}-\pr{\widehat{\Ac}(V_A,R,Y^{(I)}_{-R})=1} \ge 1/k,\]
  then $\widehat{\Ac}$ can be used to break the $\eps$-$\CDP$ of $\Pi$.

%\Jnote{
%Note that an algorithm $\widehat{A}$ that distinguishes $Y_R$ from $Y^{(I)}_R$ with a large enough \emph{constant} advantage (proportional to $\eps$) contradicts the privacy requirement of $\Pi$ in a trivial manner. However, when the distinguishing advantage is small as is the case in our }
We highlight that the key difficulty in proving the above lies in the fact that the distinguishing advantages are very small. Indeed, an algorithm $\widehat{A}$ that distinguishes $Y_{-R}$ from $Y^{(I)}_{-R}$ with a large enough \emph{constant} advantage (proportional to $\eps$) already contradicts the privacy requirement of $\Pi$. However, in our setup, the distinguishing advantage is much smaller than $\eps$, which makes the task of contradicting $\eps$-\CDP challenging. We next emphasize two crucial points that make $\widehat{\Ac}$ useful (and our design for our \AWEC protocol, \cref{protocol:intro:DPIP-to-AWEC}, was guided by these two observations).

\begin{description}
    \item[The importance of a random index]  We note that it is crucial that the index $I$ is chosen at random and that it is unknown to $\widehat{\Ac}$. To see that, let $\Mc$ be a \DP mechanism and let $Y\gets \oo^n$ be an input. We claim that it can be the case that an adversary can distinguish between $Y$ and $Y^{(i)}$ given $\Mc(Y)$ with advantage $\approx \eps$, for any fixed index $i$. Indeed, it is possible that $\Mc(Y)$ outputs a noisy estimation of $Y_i$, which is correct with probability $\approx 1/2 + \eps$. Such an estimation will not contradict differential privacy, but will be useful to distinguish between $Y$ and $Y^{(i)}$. We resolve this issue by adding the noise in a random position. To prove that the existence of such noise (in a random position) is indeed hidden, we developed and proved \cref{lemma:property1:prediction:overview} which is our main technical lemma.
   
    \item[The importance of a random subset] Next, we claim that it is crucial that $\widehat{\Ac}$ only gets $Y_{-R}$, a random subset of $Y$. Indeed, as in the previous example, the mechanism $\Mc(Y)$ can output a noisy estimation of the parity of the number of ones in $Y$. As in the previous example, such an estimation does not contradict \DP, but is enough to distinguish $Y$ and $Y^{(I)}$ by considering the number of ones in the input compared to the estimation. We deal with such a \emph{global} information by revealing to $\widehat{\Ac}$ only a subset of $Y$. This makes the global information not useful.
\end{description}

Thus, in the proof, we need to leverage the fact that $\widehat{\Ac}$ distinguishes $Y_{-R}$ from $Y^{(I)}_{-R}$ for a random index $I$ and a random subset $R$. We use such an algorithm $\widehat{\Ac}$ to construct an algorithm $\Gc$ that given $Y_{-I} = (Y_1,\ldots,Y_{I-1}, Y_{I+1},\ldots,Y_n)$, predicts the value of $Y_I$. In more detail, we prove the following lemma, which is the technical core of our proof.

\begin{lemma}[\cref{lemma:property1:prediction}, informal]\label{lemma:property1:prediction:overview}
	Let $(Z,W)$ be jointly distributed random variables, let $R \la \zo^n$ and $I \la [n]$. 
	Let  $\Ac$ be an efficient algorithm that satisfies
	\begin{align*}
		\size{\ex{\Ac(R,Z_{R},W) - \Ac(R,Z^{(I)}_{R},W)}} \geq 1/k,
	\end{align*}
	for $Z^{(I)} = (Z_1,\ldots, Z_{I-1}, -Z_I, Z_{I+1},\ldots, Z_n)$. Then, there exists an efficient algorithm $\Gc$ that outputs a value in $\set{1,-1,\bot}$, such that the following holds.
		$$\pr{\Gc(I,Z_{-I}, W) = -Z_I} \leq \frac{k^2}{n},\ and,\ \pr{\Gc(I,Z_{-I}, W) = Z_I} \geq 1/k - k^2/n.$$
\end{lemma}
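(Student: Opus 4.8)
The plan is to \textbf{first} linearize the distinguishing advantage around the hidden coordinate, and then build a predictor $\Gc$ that commits to a value only on one-sided evidence, using a Parseval-type influence budget to control its error. Condition throughout on the predictor's own inputs $I,Z_{-I},W$. For $b\in\oo$ let $\alpha_b:=\Ex[\Ac(R,(Z_{-I},b)_R,W)]$ be the expected output of $\Ac$ when coordinate $I$ carries the value $b$ and is \emph{revealed} (i.e.\ over $R$ conditioned on $R_I=1$ and over $\Ac$'s coins), and let $\mu^{*}:=\Ex[\Ac(R,Z_R,W)\mid R_I=0]$ be the analogue when coordinate $I$ is \emph{hidden}; note $\mu^{*}$ does not depend on $Z_I$, so $\Gc$ can estimate all of $\alpha_{+1},\alpha_{-1},\mu^{*}$ from $(I,Z_{-I},W)$ alone by running $\Ac$ polynomially many times (Hoeffding suffices, since every threshold below is inverse-polynomial). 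Flipping coordinate $I$ is invisible when $R_I=0$, so only $R_I=1$ contributes, and a direct expansion gives
\[ \Ex[\Ac(R,Z_R,W)-\Ac(R,Z^{(I)}_R,W)] \;=\; \tfrac12\,\Ex_{I,Z_{-I},W}\big[\Delta\cdot(2p-1)\big], \]
where $\Delta:=\alpha_{+1}-\alpha_{-1}$ and $p:=\pr{Z_I=1\mid I,Z_{-I},W}$. Hence the hypothesis becomes $\Ex[\Delta\,(2p-1)]\ge 2/k$, and since $\size{\Delta}\le1$ this already forces $\pr{\size{\Delta}>1/k}\ge 1/k$.

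\textbf{The predictor.} Fix a threshold $\theta\approx 1/(2k)$. Writing $\mathrm{eff}_I:=\alpha_{Z_I}-\mu^{*}$ for the effect of revealing coordinate $I$ with its \emph{true} value, one checks $\alpha_{-Z_I}-\mu^{*}=\mathrm{eff}_I-Z_I\Delta$. The guiding intuition — matching the two phenomena the \AWEC\ construction was designed around — is that revealing the \emph{true} value of a random coordinate barely moves $\Ac$ (so $\mathrm{eff}_I$ is tiny), whereas revealing the \emph{wrong} value plants a genuine flip that $\Ac$ can react to, and that reaction is exactly the advantage $\size{\Delta}$. Accordingly $\Gc$ computes $\alpha_{+1},\alpha_{-1},\mu^{*}$ and, if exactly one value $b\in\oo$ satisfies $\size{\alpha_{b}-\mu^{*}}\le\theta$ (``looks consistent'') while the other satisfies $\size{\alpha_{-b}-\mu^{*}}>\theta$, it outputs that $b$; otherwise it outputs $\bot$. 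Crucially, $\Gc$ decides by \emph{proximity to the hidden reference} $\mu^{*}$, not by the sign of $\Delta$, which is what avoids any sign-calibration issue.

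\textbf{The error bound (the crux).} The output equals $-Z_I$ only if the \emph{true}-value reveal looked inconsistent, i.e.\ $\size{\mathrm{eff}_I}>\theta$. To bound this by $k^2/n$ I would fix $Z,W$ and view $\Phi(R):=\Ex_{\text{coins}}[\Ac(R,Z_R,W)]\in[0,1]$ as a function on $\zn$; then $\mathrm{eff}_i=\Ex[\Phi\mid R_i=1]-\Ex[\Phi\mid R_i=0]=2\widehat\Phi(\set{i})$, so Parseval gives $\sum_i \mathrm{eff}_i^{2}=4\sum_i\widehat\Phi(\set{i})^2\le 4\Var(\Phi)\le 1$. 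Since $I\la[n]$ is uniform and independent, $\Ex_I[\mathrm{eff}_I^{2}]\le 1/n$, whence $\pr{\size{\mathrm{eff}_I}>\theta}\le 1/(\theta^{2}n)=O(k^2/n)$, and averaging over $Z,W$ preserves this. This is exactly where the \emph{random index} is essential: because $\Ac$ does not know $I$, it cannot concentrate its sensitivity on the single coordinate being tested, so a random coordinate is ``influential'' only with probability $O(k^2/n)$; and it is where the \emph{random subset} matters, since revealing only $Z_R$ is what lets us compare against the single-coordinate reference $\mu^{*}$ rather than against a global functional of $Z$ that $W$ might otherwise encode. I expect this to be the main obstacle: one must argue that the conditioning on $W$ and the averaging over $R$ genuinely reduce the relevant quantity to a sum of bounded single-coordinate influences, and that the constants align so that in the regime $k\ll n^{1/3}$ the $1/k$-scale detection gap dominates the $k^2/n$-scale error.

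\textbf{The success bound.} Conversely, $\Gc$ outputs $Z_I$ whenever the true reveal looks consistent ($\size{\mathrm{eff}_I}\le\theta$) and the wrong reveal looks inconsistent ($\size{\mathrm{eff}_I-Z_I\Delta}>\theta$). Using $\size{\mathrm{eff}_I-Z_I\Delta}\ge\size{\Delta}-\size{\mathrm{eff}_I}$ together with $\pr{\size{\Delta}>1/k}\ge 1/k$ from the first step and $\pr{\size{\mathrm{eff}_I}>\theta}\le O(k^2/n)$ from the error analysis, a union bound yields $\pr{\Gc=Z_I}\ge 1/k-k^2/n$, while the same error analysis gives $\pr{\Gc=-Z_I}\le k^2/n$, as required. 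The slack hidden in the factor $2$ of $\Ex[\Delta(2p-1)]\ge 2/k$ (versus the detection gap) is what I would spend to absorb the Hoeffding sampling error and to keep the two $\theta$-thresholds cleanly separated; the remaining routine step, not needed for this lemma itself, is to feed $\Gc$ into the inner-product lower bound of \cref{sec:techniques:erasure-do-erase} to contradict $\eps$-$\CDP$.
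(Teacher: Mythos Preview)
Your proposal is correct and follows essentially the same approach as the paper: the predictor $\Gc$ you describe (compare $\alpha_{+1},\alpha_{-1}$ to the hidden-coordinate reference $\mu^*$ and commit only on one-sided proximity) is exactly the paper's construction, and your two key steps---$\Pr[\size{\Delta}>1/k]\ge 1/k$ from the advantage and $\Pr[\size{\mathrm{eff}_I}>\theta]\le O(k^2/n)$ for the error---are the paper's Items~1 and~2. The one difference worth noting is how you prove the error bound: you use Parseval, observing that $\mathrm{eff}_i=2\widehat\Phi(\{i\})$ so $\sum_i\mathrm{eff}_i^2\le 4\Var(\Phi)\le 1$ and then Markov; the paper instead proves the equivalent statement (\cref{lem::distance-I:informal}) via the fact that $\SD(R|_{R_I=0},R|_{R_I=1})\le 1/\sqrt{\size{\cI}}$ for a uniform index $I$ over any set $\cI$, arguing by contradiction that too many ``influential'' indices would violate this bound. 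Both arguments yield $\Pr_i[\size{\mathrm{eff}_i}\ge\alpha]\le O(1/(n\alpha^2))$; your Fourier route is arguably more direct here, while the paper's statistical-distance route is phrased to be reusable as a standalone lemma.
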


Back to our setting, by taking $\Ac(R, Z_R, W) = \widehat{\Ac}((1,\ldots,1)- R, Z_R, W)$, $Z=Y$ and $W=V_A$ in \cref{lemma:property1:prediction:overview}, we get an algorithm $\Gc$ such that 
$$\pr{\Gc(I,Y_{-I}, V_A) = -Y_I} \leq \frac{k^2}{n},\ and,\
		\pr{\Gc(I,Y_{-I}, V_A) = Y_I} \geq 1/k - k^2/n.$$

We highlight that the gap between the probability of $\Gc$ to output $Y_I$ to its probability to output $-Y_I$ is smaller than $1/k$, and that it is not enough to break \DP alone. The point is that we have a bound on the probability of $\Gc$ to output $-Y_I$, and that in \DP, we measure the difference in probabilities using \emph{multiplicative distance}. Thus, when 
\begin{align}\label{eq:overview:boundingK}
\pr{\Gc(I,Y_{-I}, V_A) = Y_I} > e^\eps\cdot \pr{\Gc(I,Y_{-I}, V_A) = -Y_I} 
\end{align}
the algorithm $\Gc$ does imply contradiction to \DP. To make sure that \cref{eq:overview:boundingK} holds, we need to have
\begin{align*}
 1/k - k^2/n > e^\eps\cdot k^2/n,
 \end{align*}
which is the reason we need to  take $k \ll n^{1/3}$.
In the rest of this section, we give a high-level proof sketch of \cref{lemma:property1:prediction:overview}.
%Yet, our end goal is to construct an algorithm $\widetilde{A}$, that can distinguish $(I, Y)$ from $(I,Y^{(I)})$ given $\Ac$'s view $V_A$ in $\Pi$.\footnote{In the proof itself, we do not construct a distinguisher, but an algorithm that given $Y_{-I}$ tries to predict $Y_I$. Such an algorithm is equivalent to distinguisher. } Moreover, our algorithm will only have  distinguishing advantage of $1/2k$, but still will be enough to contradict differential privacy. The reason this is not in contradiction to the above, is that the  probability of $\widetilde{A}$ to output $1$ on $(I,Y^{(I)})$ will be much smaller than the distinguishing advantage $1/2k$. Namely, while the  advantage we get is not enough to contradict \DP alone, we notice that if $\widetilde{A}$ outputs $1$ with a very small probability, then $1/k$ advantage is actually enough. Indeed, if $\pr{\widetilde{A}(V_A, I, Y^{(I)})=1}< (e^{\eps} -1)\cdot  1/2k$, then  we get that \[ \pr{\widetilde{A}(V_A,I,Y)=1 }\ge 1/2k + \pr{\widetilde{A}(V_A,I,Y^{(I)})=1} > e^\eps \cdot \pr{\widetilde{A}(V_A,I,Y^{(I)})=1}.\]Of course, from our assumption alone, we do not have an upper bound on the probability of $\widehat{A}$ to output $1$. In the following we will show that it is possible to construct from $\widehat{A}$ an algorithm $\widetilde{A}$, such that $\pr{\widetilde{A}(V_A,Y^{(I)}_R)=1}\le  k^2/n$, and still have a distinguishing advantage of at least $1/2k$. To break differential privacy, we will need to make sure that $k^2/n \ll (e^\eps-1)/k$, and for this we need to take $k\ll n^{1/3}$.

\subsection{Proof Overview for  \cref{lemma:property1:prediction:overview}}\label{sec:overview:prediction-lemma}
In this part, we explain the ideas behind the proof of \cref{lemma:property1:prediction:overview}. We start by describing the algorithm $\Gc$. 
\paragraph{The predictor  $\Gc$.} The construction of $\Gc$ leverages the fact that $\Ac$ succeed on a random index $I$  while only seeing a subset of $Y$. Recall that on input $(I, Z_{-I},W)$, the goal of \Gc is to predict the value of $Z_I$. Towards this end, let $Z^b=(Z_1,\dots,Z_{I-1},b,Z_{I+1},\dots,Z_n)$ (hence, $Z\in \set{Z^{-1},Z^{1}}$). Algorithm $\Gc$ computes the following values:\footnote{In the proof itself, $\Gc$ only estimates these values by sampling $R$. For simplicity, here we assume that $\Gc$ can compute them exactly.}
\[\mu^{1} = \eex{R\gets \zn\mid R_I=1}{\Ac(R,Z^1_R,W)}\]
\[\mu^{-1} = \eex{R\gets \zn\mid R_I=1}{\Ac(R,Z^{-1}_R,W}\]
\[\mu^{*} = \eex{R\gets \zn\mid R_I=0}{\Ac(R,Z_R,W)}\]
Importantly, we note that when $R_I=0$,  $Z_R$ does not contain $Z_I$, and thus $\Gc$ can compute $\mu^*$, given its input.
Finally, $\Gc$ outputs $b\in \oo$ if the following two conditions holds:
\[ \size{\mu^b-\mu^*}< 1/4k,\ and,\ \size{\mu^{-b}-\mu^{*}}\ge 1/4k. \] 
Otherwise, $\Gc$ outputs $\bot$.

Namely, $\Gc$ outputs $b$ if the expected value of $\Ac$'s output on $(R, Z_R)$ when $R_I=0$, is closer to the expected value of $\Ac$'s output on $(R, Z^b_R)$ when $R_I=1$ than it is to the expected value of $\Ac$'s output on $(R, Z^{-b}_R)$. We next analyze $\Gc$. 
%We next analyzing the distinguisher  $\Gc$.

\paragraph{Analyzing the predictor.} We will show that the following hold:
\begin{enumerate}
    \item\label{overview:item:1} The value of $R_I$ does not change the expectation by much. Formally,
$$ \size{\eex{R\gets \zn\mid R_I=1}{\Ac(R,Z_R, W)}-\eex{R\gets \zn\mid R_I=0}{\Ac(R,Z_R,W)}}<1/4k$$
with  probability at least $1-k^2/n$, while 
\item\label{overview:item:2} Filliping the value of $Z_I$ does change the expectation. Namely, $$\size{\eex{R\gets \zn\mid R_I=1}{\Ac(R,Z^1_R,W)}-\eex{R\gets \zn\mid R_I=1}{\Ac(R,Z^{-1}_R,W)}}> 1/2k$$
with probability at least $1/k$.
\end{enumerate}
Together, the two items above imply that our distinguisher $\Gc$ has the properties we need:
\begin{itemize}
    \item The first item  above implies that $\Gc$ errs and  outputs $-Z_I$  with probability at most $k^2/n$. Indeed, for $b=-Z_I$, it holds that $\mu^{-b}=\eex{R\gets \zn\mid R_I=1}{\Ac(R,Z_R, W)}$.  Thus, \cref{overview:item:1} implies that with all but probability $k^2/n$,  $\size{\mu^{-b}-\mu^{*}}<1/4k$. In this case, by definition of $\Gc$, it does not output $b$.
    
\item In contrast, the two items together imply that $\Gc$ outputs $Z_I$  with probability at least $1/k-k^2/n$. Indeed, for $b=Z_I$,  \cref{overview:item:1} implies that $\size{\mu^b-\mu^*}< 1/4k$ with probability at least $1/k$. Additionally, by the triangular inequality, when both \cref{overview:item:1} and \cref{overview:item:2} hold together we get  that 
$$ \size{\eex{R\gets \zn\mid R_I=0}{\Ac(R,Z_R,W)}-\eex{R\gets \zn\mid R_I=1}{\Ac(R,Z^{-b}_R,W)}}> 1/4k,$$
which by definition  implies that 
$\size{\mu^{-b}-\mu^{*}}\ge 1/4k$ (for $b=Z_I$). By a simple union bound, the probability that both \cref{overview:item:1} and \cref{overview:item:2} hold is at least $1/k - k^2/n$.
\end{itemize}
Therefore, assuming that \cref{overview:item:1} and \cref{overview:item:2} hold, this  conclude the proof of \cref{lemma:property1:prediction:overview}. We next briefly explain why \cref{overview:item:1} and \cref{overview:item:2} hold with the claimed probability.

\paragraph{Bounding the error probability.} We first sketch the proof of \cref{overview:item:1}. We claim that 
\begin{align}\label{eq:overview:side}
    \ppr{Z,W,I}{ \size{\eex{R\gets \zn\mid R_I=0}{\Ac(R,Z_R,W)}-\eex{R\gets \zn\mid R_I=1}{\Ac(R,Z_R,W)}}\ge 1/4k}\le k^2/n.
\end{align}
In the following, we will prove a stronger claim. That is, that for every fixed $Z$ and $W$, it holds that 
\begin{align}\label{eq:overview:equiv}
    \ppr{I}{ \size{\eex{R\gets \zn\mid R_I=0}{\Ac(R,Z_R,W)}-\eex{R\gets \zn\mid R_I=1}{\Ac(R,Z_R,W)}}\ge 1/4k}\le k^2/n.
\end{align}
Fix $Z$ and $W$, and let $F(R)=\Ac(R,Z_R,W)$. Using this notation, we can rewrite \cref{eq:overview:equiv} as
\begin{align}\label{eq:overview:final}
    \ppr{I}{ \size{\eex{R\gets \zn\mid R_I=0}{F(R)}-\eex{R\gets \zn\mid R_I=1}{F(R)}}\ge 1/4k}\le k^2/n.
\end{align}
Finally, \cref{eq:overview:final} follows a more general well-known fact on the uniform distribution over random strings: For a random string $R\gets \zn$, and a random index $I\gets [n]$, the statistical distance between $R|_{R_I=0}$ and $R|_{R_I=1}$ is at most $1/\sqrt{n}$. This fact implies that the function $F$ cannot distinguish too well between samples from $R|_{R_I=0}$ and samples from $R|_{R_I=1}$.\footnote{Importantly, notice that $F$ does not get $I$ as input.} We prove the following stronger lemma that finishes this part of the proof.

\iffull
\begin{lemma}[Informal version of \cref{lem:distance-I}]\label{lem::distance-I:informal}
        Let $R \la \zo^n$. 
	For any (randomized) function $F:\{0,1\}^n\rightarrow \{0,1\}$ and any $k>0$, it holds that
        \begin{align}\label{eq:f-alpha}
            \ppr{i \la [n]}{\size{\:\ex{F(R) \mid R_i = 0}-\ex{F(R) \mid R_i = 1}\:}\geq 1/4k} \leq \Theta\paren{\frac{k^2}{n}},
        \end{align}
        where the expectations are taken over $R$ and the randomness of $F$.
\end{lemma}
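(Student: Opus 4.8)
The plan is to strip away the randomness of $F$, express the per-coordinate discrepancy as a level-one Fourier correlation, and then bound its second moment via orthonormality (Bessel's inequality) before finishing with Markov. First I would replace $F$ by the deterministic function $f\colon\zn\to[0,1]$ given by $f(r)=\ex{F(r)}$, the expectation being over the coins of $F$. By linearity of expectation, $\ex{F(R)\mid R_i=b}=\eex{R}{f(R)\mid R_i=b}$ for each $b\in\zo$, so it suffices to control, for $g_i:=\eex{R}{f(R)\mid R_i=0}-\eex{R}{f(R)\mid R_i=1}$, the fraction of indices $i$ with $\size{g_i}\ge 1/4k$.

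The crux is to recognize $g_i$ as a correlation and to bound the sum of squares of these correlations. Writing $\chi_i(r)=(-1)^{r_i}\in\set{-1,1}$ and $\mu=\ex{f(R)}$, a one-line computation gives $\ex{\chi_i(R)\,f(R)}=\tfrac12\paren{\eex{R}{f(R)\mid R_i=0}-\eex{R}{f(R)\mid R_i=1}}=\tfrac12 g_i$, and since $\ex{\chi_i(R)}=0$ this also equals $\ex{\chi_i(R)\paren{f(R)-\mu}}$. The functions $\set{\chi_i}_{i\in[n]}$ are orthonormal in $L^2(\zn)$ under the uniform distribution, since $\ex{\chi_i(R)\chi_j(R)}$ equals $1$ when $i=j$ and $0$ otherwise. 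Applying Bessel's inequality to the centered function $f-\mu$ therefore yields $\sum_{i\in[n]}\ex{\chi_i(R)\paren{f(R)-\mu}}^2\le \ex{\paren{f(R)-\mu}^2}=\Var(f(R))$, so that $\sum_{i}g_i^2=4\sum_i\ex{\chi_i(R)\paren{f(R)-\mu}}^2\le 4\Var(f(R))$. Because $f\in[0,1]$ we have $f^2\le f$, hence $\Var(f(R))=\ex{f(R)^2}-\mu^2\le \mu-\mu^2\le 1/4$, and we conclude $\sum_i g_i^2\le 1$.

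It then remains to pass from the global bound on $\sum_i g_i^2$ to the per-index tail via a second-moment Markov inequality. Averaging over the uniform index gives $\eex{i\la[n]}{g_i^2}=\tfrac1n\sum_i g_i^2\le 1/n$, and hence $\ppr{i\la[n]}{\size{g_i}\ge 1/4k}=\ppr{i\la[n]}{g_i^2\ge 1/16k^2}\le 16k^2\cdot\eex{i\la[n]}{g_i^2}\le 16k^2/n=\Thetaa{k^2/n}$, as claimed.

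The main obstacle, and the reason the informal ``statistical distance'' intuition is not quite enough on its own, is that the bound $1/\sqrt n$ on the distance between $R$ conditioned on $R_I=0$ and $R$ conditioned on $R_I=1$ (for a uniform $I$) only controls the \emph{signed} average $\size{\eex{i}{g_i}}$, equivalently the first moment $\eex{i}{\size{g_i}}$; feeding that into Markov would give merely the weaker estimate $\Thetaa{k/\sqrt n}$. Getting the sharper $\Thetaa{k^2/n}$ forces us to bound the \emph{second} moment $\sum_i g_i^2$, and the one genuinely non-routine ingredient there is the orthonormality of the coordinate characters $\chi_i$ (i.e.\ Bessel/Parseval) combined with $\Var(f)\le 1/4$. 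Everything else is linearity of expectation and an elementary Markov bound.
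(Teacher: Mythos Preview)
Your proof is correct, but it takes a genuinely different route from the paper. The paper argues by contradiction via a statistical-distance lemma: if the bad set $\cI=\set{i:\size{g_i}\ge\alpha}$ had size exceeding $2/\alpha^2$, then at least $m=1/\alpha^2$ of these indices share the same sign, and for $I$ uniform over that subset one has $\size{\ex{F(R)\mid R_I=0}-\ex{F(R)\mid R_I=1}}\ge\alpha=1/\sqrt m$, contradicting the bound $\sdist{R|_{R_I=0}}{R|_{R_I=1}}\le 1/\sqrt{\size{\cI}}<1/\sqrt m$ (\cref{fact:I}). Your approach instead computes $g_i=2\hat f(\{i\})$ as a level-one Fourier coefficient, bounds $\sum_i g_i^2\le 4\Var(f)\le 1$ by Bessel/Parseval, and finishes with Chebyshev; this is more direct, yields an explicit constant ($16k^2/n$ versus the paper's $32k^2/n$), and avoids the auxiliary statistical-distance fact entirely. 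One small correction to your closing commentary: the paper does obtain the sharp $\Theta(k^2/n)$ from the statistical-distance fact, precisely by restricting $I$ to the bad set $\cI$ (so the bound becomes $1/\sqrt{\size{\cI}}$ rather than $1/\sqrt n$); the naive application over all of $[n]$ would indeed only give $\Theta(k/\sqrt n)$, but the restriction trick recovers the second-moment-quality bound without Fourier analysis.
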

\else
\begin{lemma}[Informal]\label{lem::distance-I:informal}
        Let $R \la \zo^n$. 
	For any (randomized) function $F:\{0,1\}^n\rightarrow \{0,1\}$ and any $k>0$, it holds that
        \begin{align}\label{eq:f-alpha}
            \ppr{i \la [n]}{\size{\:\ex{F(R) \mid R_i = 0}-\ex{F(R) \mid R_i = 1}\:}\geq 1/4k} \leq \Theta\paren{\frac{k^2}{n}},
        \end{align}
        where the expectations are taken over $R$ and the randomness of $F$.
\end{lemma}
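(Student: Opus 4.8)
The plan is to reduce the statement to a \emph{second-moment} bound on the influence-like quantities $d_i := \ex{F(R) \mid R_i = 0} - \ex{F(R) \mid R_i=1}$ and then apply Markov's inequality. First I would absorb the internal randomness of $F$ by passing to the deterministic function $f \colon \zn \to [0,1]$ defined by $f(r) := \ex{F(r)}$ (expectation over the coins of $F$); since $\ex{F(R)\mid R_i=b} = \ex{f(R) \mid R_i = b}$, the event inside \cref{eq:f-alpha} is exactly $\set{\size{d_I} \ge 1/4k}$ for a uniform $I \la [n]$, so it suffices to control the distribution of $d_I$.

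The crucial observation is that the ``statistical-distance $1/\sqrt n$'' intuition, which only bounds the \emph{signed} average $\Exp_I[d_I] = O(1/\sqrt n)$, is by itself too weak: feeding it into Markov's inequality would yield a bound of $O(k/\sqrt n)$ on $\ppr{I}{\size{d_I}\ge 1/4k}$, whereas we are after $\Theta(k^2/n)$, the \emph{square} of that quantity. This mismatch signals that the right tool is the second moment. Concretely, the plan is to establish the clean inequality $\sum_{i=1}^n d_i^2 \le 1$, i.e.\ $\Exp_I[d_I^2] \le 1/n$, and then finish with
\begin{align*}
  \ppr{I}{\size{d_I} \ge 1/4k} = \ppr{I}{d_I^2 \ge 1/16k^2} \le 16k^2 \cdot \Exp_I[d_I^2] \le \frac{16k^2}{n} = \Theta\paren{\frac{k^2}{n}}.
\end{align*}

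To prove $\sum_i d_i^2 \le 1$ I would recognize $d_i$ as twice a degree-one Fourier coefficient of $f$ in the $\set{-1,1}$ character basis: writing $\chi_i(r) = (-1)^{r_i}$, splitting the sum over $r_i \in \zo$ gives $\ex{\chi_i(R)\, f(R)} = \tfrac12 d_i$, so $d_i = 2\widehat{f}(\set{i})$. Bessel's inequality then yields $\sum_i \widehat f(\set i)^2 \le \sum_{S \ne \emptyset} \widehat f(S)^2 = \Var[f]$, and since $f$ takes values in $[0,1]$ we have $\Var[f] \le \Exp[f]\paren{1-\Exp[f]} \le 1/4$. Combining, $\sum_i d_i^2 = 4\sum_i \widehat f(\set i)^2 \le 4\,\Var[f] \le 1$, as needed.

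The only genuinely delicate point is realizing that one must pass to the second moment rather than the statistical-distance/first-moment estimate quoted in the overview; once that is in place, the remainder is the standard fact that the level-one Fourier weight of a bounded function is controlled by its variance, together with Markov's inequality. No property of $F$ beyond boundedness is used, so the bound holds for arbitrary randomized $F$ with an explicit constant.
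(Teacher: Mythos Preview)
Your proof is correct. The identification $d_i = 2\widehat f(\set{i})$, the level-one weight bound $\sum_i \widehat f(\set i)^2 \le \Var[f] \le 1/4$, and the Markov step all check out, and the reduction from randomized $F$ to deterministic $f\colon\zo^n\to[0,1]$ is handled cleanly. You obtain the explicit bound $16k^2/n$.

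The paper takes a genuinely different route. Rather than computing the second moment via Fourier analysis, it argues by contradiction using a quoted statistical-distance fact (Proposition~3.28 of \cite{HaitnerMST22}): for $I$ uniform over any subset $\cI\subseteq[n]$, one has $\SD(R|_{R_I=0},\,R|_{R_I=1}) \le 1/\sqrt{\size{\cI}}$. If more than $2/\alpha^2$ indices had $\size{d_i}\ge\alpha$, then by pigeonhole more than $m=1/\alpha^2$ of them would have $d_i$ of the same sign; restricting $I$ to that set $\cI$ would force $F$ to distinguish $R|_{R_I=0}$ from $R|_{R_I=1}$ with advantage $\ge 1/\sqrt m$, contradicting the $1/\sqrt{\size{\cI}} < 1/\sqrt m$ bound. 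So the paper \emph{does} extract a second-moment-strength conclusion from the statistical-distance fact, but via pigeonhole-plus-contradiction rather than Parseval. Your approach is more self-contained (no external fact needed) and yields a slightly sharper constant ($16k^2/n$ versus the paper's $32k^2/n$ when $\alpha=1/4k$); the paper's approach has the virtue of reusing a lemma already in its toolkit and of not requiring any Fourier language.
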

\fi

\paragraph{Lower bounding the distinguishing advantage.} Next we sketch the proof of \cref{overview:item:2}. We need to show that 
\begin{align}
\ppr{Z,W,I}{ \size{\eex{R\gets \zn\mid R_I=1}{\Ac(R,Z^1_R,W)}-\eex{R\gets \zn\mid R_I=1}{\Ac(R,Z^{-1}_R,W)}} \ge 1/2k}\ge 1/k.
\end{align}
Or equivalently,
\begin{align}\label{eq:overview:distinguish}
\ppr{Z,W,I}{ \size{\eex{R\gets \zn\mid R_I=1}{\Ac(R,Z_R,W)}-\eex{R\gets \zn\mid R_I=1}{\Ac(R,Z^{(I)}_R,W)}} \ge 1/2k}\ge 1/k.
\end{align}
We claim that \cref{eq:overview:distinguish} follows by the assumption that $\Ac$ is a good distinguisher. Indeed, assume, for the sake of contradiction, that this is not the case. Then, by a simple  computation (and using the fact that $\Ac(R,Z_R,W)\in \set{0,1}$)  we get that 
\begin{align*}
\eex{Z,W,I}{ \size{\eex{R\gets \zn\mid R_I=1}{\Ac(R,Z_R,W)}-\eex{R\gets \zn\mid R_I=1}{\Ac(R,Z^{(I)}_R,W)}} }\le 2/k.
\end{align*}
and thus, using the triangular inequality, we can write,
\begin{align*}
\size{\eex{Z,W,I}{\eex{R\gets \zn\mid R_I=1}{\Ac(R,Z_R,W)}-\eex{R\gets \zn\mid R_I=1}{\Ac(R,Z^{(I)}_R,W)}} }< 2/k.
\end{align*}
Equivalently, we have that,
\begin{align*}
\size{\pr{\Ac(R,Z_R,W)=1\mid R_I=1}-\pr{\Ac(R,Z^{(I)}_R,W)=1\mid R_I=1}}<2/k.
\end{align*}
which means that $\Ac$ fails to distinguish between $(R,Z_R)$ and $(R,Z^{(I)}_R)$ when $R_I=1$. On the other hand, we notice that $\Ac$ also must fail when $R_I=0$. Indeed, when $R_I=0$ it holds that $Z_R=Z^{(I)}_R$, and thus we get  that 
\begin{align*}
\size{\pr{\Ac(R,Z_R,W)=1\mid R_I=0}-\pr{\Ac(R,Z^{(I)}_R,W)=1\mid R_I=0}}=0.
\end{align*}
Finally, since $\Ac$ fails to distinguish $Z_R$ from $Z^{(I)}_R$ both when $R_I=0$ and $R_I=1$, we conclude that $\Ac$ is not a good distinguisher. That is, 
\begin{align*}
\size{\pr{\Ac(R,Z_R,W)=1}-\pr{\Ac(R,Z^{(I)}_R,W)=1}}< 1/k,
\end{align*}
which contradicts our assumption on $\Ac$.

\section{Related Works}\label{sec:related-works}

\textbf{Computational differential privacy.} Computational differential privacy ($\CDP$) can be defined using two primary approaches.
The more flexible and widely used approach is the \emph{indistinguishability-based} definition, which limits the distinguished $g$ in \cref{def:intro:DP}, to those that are computationally efficient. The second approach, known as the \emph{simulation-based} definition, requires that the output distribution of the mechanism $f$ be computationally indistinguishable from that of an (information-theoretic) differentially private mechanism. Various relationships between these and other privacy definitions have been established in \cite{MPRV09} (a more updated survey is provided in \cite{MeisingsethR25}). We remark that our result holds for the weaker definition of indistinguishability-based $\CDP$, which makes it stronger. As a corollary, we also get the equivalence of the power of the definitions for the inner-product task we consider (in our accuracy regime).

\textbf{\CDP in the centralized model.} In the single-party scenario (\ie the centralized model), computational and information-theoretic differential privacy appear to be more closely aligned. Specifically, \cite{GKY11} demonstrated that a broad class of $\CDP$ mechanisms can be converted into an information-theoretic $\DP$ mechanism. On the other hand, \cite{BunCV16,GhaziIK0M23} showed that under certain non-standard and very-strong cryptographic assumptions, there exist somewhat contrived tasks that can be efficiently solved with $\CDP$, but remain infeasible (\cite{BunCV16}) or impossible (\cite{GhaziIK0M23}) under information-theoretic $\DP$. It still remains open whether such separations exist under more standard cryptographic assumptions and for more natural tasks.

\textbf{\CDP in the local model.} At the other end of the spectrum, the \emph{local model} is highly relevant in practical applications. In this setting, each of the (typically many) participants holds a single data element. Protocols achieving information-theoretic $\DP$ in this model often rely on randomized response, which has been shown to be optimal for counting functions, including inner product, as proven by \cite{chan2012optimal}. In contrast, local $\CDP$ protocols can leverage secure multiparty computation to simulate any efficient single-party mechanism, thereby demonstrating a fundamental gap between the power of $\CDP$ and information-theoretic $\DP$. We remark that there exist other approaches to bridge this by relaxing the distributed model, e.g., using a trusted shuffler \cite{BittauEMMRLRKTS17,CheuSUZZ19} (see \cite{CheuSurvey22} for a survey on this model), or partially trusted servers that enable more accurate estimations using weaker and more practical cryptographic tools than secure multi-party computation (e.g., \cite{BangaloreCV24,DamgardKNOP24,BohlerK20}).

\textbf{Two-Party \CDP.} In the two-party (or small-party) setting, the complexity of $\CDP$ protocols is much less clear. Prior works have maintly focused on Boolean functionalities, where each party holds a single sensitive bit, and the objective is to compute a Boolean function over these bits while preserving privacy (\eg XOR). \cite{GMPS13} established that, for any non-trivial Boolean functionality, there is a fundamental gap in accuracy between what can be achieved in the centralized and distributed settings. Moreover, any $\CDP$ protocol that surpasses this accuracy gap would imply the existence of one-way functions. Later, \cite{GKMPS16} demonstrated that an accurate enough $\CDP$ protocol for the XOR function would inherently imply an oblivious transfer protocol. Building on this, \cite{HNOSS20} proved that any meaningful $\eps$-\CDP two-party protocol for XOR necessarily implies an (infinitely-often) key agreement protocol. \cite{HMSS19} refined and extended the results of \cite{GKMPS16,HNOSS20}, showing that any non-trivial $\CDP$ protocol for XOR implies oblivious transfer.

Beyond Boolean functionalities, the complexity of $\CDP$ protocols for more general tasks such as computing low-sensitivity many-bit functions like the inner product remains largely unexplored. The only exceptions are the work of \cite{haitner2016limits}, who applied a generic reduction to the impossibility result of \cite{MPRV09}, concluding that no accurate $\CDP$ protocol for inner product can exist in the \textit{random oracle model}, and more recently, the work of \cite{HaitnerMST22} who showed that any non-trivial $\CDP$ inner-product protocol implies a key-agreement protocol. But none of these results close the gap to $\OT$, and thus, our result provide the first tight characterization (with respect to assumptions) of a natural, non-boolean functionality.

\section{Preliminaries}\label{sec:preliminaries}

%We denote by $(\Ac(x_\Ac),\Bc(x_\Bc))(z)$ a random execution of $\pi$ with private inputs $(x_\Ac,y_\Ac)$, and common input $z$.

%\Jnote{Move to DP}
% At the end of such an execution, the protocol outputs a public transcript denoted by the random variable $\trans_\pi(x_\Ac,x_\Ac,z)$ we denotes the common as $\out(\trans_\pi(x_\Ac,x_\Ac,z)$, and each party $\Pc \in \set{\Ac,\Bc}$ obtains his view denoted $\view^\Pc_\pi(x_\Ac,x_\Bc,z)$, which may also contain a ``local output'' \Jnote{Local} $\out^\Pc(x_\Ac,x_\Bc,z)$ (if the protocol specifies such an output). \Jnote{Common output, and parties output}

\subsection{Distributions and Random Variables}\label{sec:prelim:dist}
The support of a distribution $P$ over a finite set $\cS$ is defined by $\Supp(P) \eqdef \set{x\in \cS: P(x)>0}$. For a distribution or a random variable $D$, let $d\from D$ denote that $d$ was sampled according to $D$. Similarly,  for a set $\cS$, let $x \from \cS$ denote that $x$ is drawn uniformly from $\cS$, and denote by $\cU_{\cS}$ the uniform distribution over $\cS$. For a finite set $\cX$ and a distribution $C_X$ over $\cX$, we use the capital letter $X$ to denote the random variable that takes values in $\cX$ and is sampled according to $C_X$. The {\sf statistical distance} (\aka {\sf~variation distance}) of two distributions $P$ and $Q$ over a discrete domain $\cX$ is defined by $\sdist{P}{Q} \eqdef \max_{\cS\subseteq \cX} \size{P(\cS)-Q(\cS)} = \frac{1}{2} \sum_{x \in \cS}\size{P(x)-Q(x)}$. 
For a vector $x = (x_1,\ldots,x_n)$ and index $i\in [n]$, we let $x_{-i} = (x_1,\ldots,x_{i-1},x_{i+1},\ldots,x_n)$ and $x^{(i)} = (x_1,\ldots,x_{i-1}, -x_i, x_{i+1},\ldots,x_n)$, for a set $\cS \subseteq [n]$ we let $x_{\cS} = (x_i)_{i \in \cS}$ and $x_{-\cS} = (x_i)_{i \in [n]\setminus \cS}$, and for a vector $r \in \zo^n$ we let $x_r = x_{\set{i \colon r_i = 1}}$ and $x_{-r} = x_{\set{i \colon r_i = 0}}$.

%For $n \in \N$ we let $U_n$ be the uniform distribution over $\oo^n$, and let $S_n$ be the distribution induces by the sum of $n$ i.i.d.\ random variables, each is distributed according to $U_1$. Let $\cN(0,1)$ be the standard normal distribution.
%For a distribution $\cD$ and a function $f$, we define by $f(\cD)$ the distribution that is induced by the output of $f(x)$ for $x \from \cD$. 

% \begin{theorem}[\cite{McGregorMPRTV10}]\label{thm:sv-extracotr}
% 	\Enote{Remove if not needed}
% 	There is a constant $c$ to make the following holds. Let $X$ be an $\alpha$-SV source on $\{0,1\}^n$, let $Y$ be a source on $\{0,1\}^n$ with min-entropy at least $\beta n$ (independent from $X$), and let $Z=\ip{X,Y}\mbox{mod m}$ for some $m\in\mathbb{N}$. Then for every $\delta\in[0,1]$, the random variable $(Y,Z)$ is $\delta$-close to $(Y,U)$ where $U$ is uniform on $\mathbb{Z}_m$ and independent of $Y$, provided that
% 	$$
% 	n\geq c\cdot\frac{m^2}{\alpha\beta}\cdot\log(\frac{m}{\beta})\cdot\log(\frac{m}{\delta}).
% 	$$
% \end{theorem}

%\Enote{I removed the definition of DP since it already appears in the intro}
\remove{
\subsection{Differential Privacy}\label{sec:prelim:DP}
We use the following standard definition of (information theoretic) differential privacy, due to \citet{DMNS06}. For notational convenience, we focus on databases over $\oo$.
\begin{definition}[Differentially private mechanisms]\label{def:mech}
	A randomized function $f\colon\oo^n\mapsto \zs$ is an {\sf $n$-size, $(\eps,\delta)$-differentially private mechanism} (denoted $(\eps,\delta)$-\DP) if for every neighboring $w,w'\in \oo^n$ and every function $g\colon \zs\mapsto \zo$, it holds that 
	$$
	\pr{g(f(w))=1}\leq e^{\eps}\cdot \pr{g(f(w'))=1} +\delta.
	$$ 	
	If $\delta=0$, we omit it from the notation.
\end{definition}
}

\subsubsection{Computational Differential Privacy}
There are several ways for defining computational differential privacy (see \cref{sec:related-works}). We use the most relaxed version due to \cite{BNO08}. For notational convenience, we focus on databases over $\oo$.
\begin{definition}[Computational differentially private mechanisms]\label{def:ComMech}
	A randomized function ensemble $f=\set{f_\pk\colon\oo^{n(\pk)}\mapsto \zs}$ is an {\sf $n$-size, $(\eps,\delta)$-computationally differentially private} (denoted $(\eps,\delta)$-$\CDP$) if for every poly-size circuit family $\set{\Ac_\pk}_{\pk\in \N}$, the following holds for every large enough $\pk$ and every neighboring $w,w'\in\oo^{n(\pk)}$:
	$$
	\pr{\Ac_\pk(f_\pk(w))=1}\leq e^{\eps(\pk)}\cdot \pr{\Ac_\pk(f_\pk(w'))=1} +\delta(\pk).
	$$ 
	If $\delta(\pk) = \negl(\pk)$, we omit it from the notation. 
\end{definition}

\subsubsection{Two-Party Differential Privacy}\label{sec:DP}
In this section we formally define distributed differential privacy mechanism (\ie protocols). %For the ease of notation, we consider protocol with no common input.

\begin{definition}\label{def:DP}%\Nnote{fix security parameter}
	A two-party protocol $\Pi=(\Ac,\Bc)$ is {\sf $(\eps,\delta)$-differentially private}, denoted $(\eps,\delta)$-$\DP$, if the following holds for every algorithm $\Dc$: let $\V^\Pc(x,y)(\pk)$ be the view of party $\Pc$ in a random execution of $\Pi(x,y)(1^\pk)$. Then for every $\pk,n \in \N$, $x\in \oo^n$ and neighboring $y,y'\in\oo^n$:
	\begin{align*}
	\pr{\Dc(V^\Ac(x,y)(\pk))=1}\le e^{\eps(\pk)}\cdot \pr{\Dc(V^\Ac (x,y')(\pk))=1}+\delta(\pk),
	\end{align*} 
	and for every $y\in \oo^n$ and neighboring $x,x'\in\oo^{n}$:
	\begin{align*}
	\pr{\Dc(V^\Bc(x,y)(\pk))=1}\le e^{\eps(\pk)}\cdot \pr{\Dc(V^\Bc (x',y)(\pk))=1}+\delta(\pk).
	\end{align*} 	
	Protocol $\Pi$ is {\sf $(\eps,\delta)$-computational differentially private}, denoted $(\eps,\delta)$-$\CDP$, if the above inequalities only hold for a non-uniform \ppt $\Dc$ and large enough $\pk$. We omit $\delta = \negl(\pk)$ from the notation. \footnote{Note that define we give for two-party differentially private protocols is a semi-honest definition, in which we ask for the security to hold when the parties interact in an honest execution of the protocol. Since we are proving a lower bound, starting from this weaker guarantee (as opposed to security against malicious players), yields a stronger result.}
\end{definition}
%We omit $\delta$ from the notation if $\delta$ is a negligible function of $n$.

%\Enote{simulation-based}
\begin{remark}[The definition for computational differential privacy we use]\label{rem:comDPChannel} 
	An alternative, stronger definition of computational differential privacy, known as simulation-based computational differential privacy, requires that the distribution of each party’s view be computationally indistinguishable from a distribution that ensures privacy in an information-theoretic sense. \cref{def:DP} is a weaker notion in comparison. Consequently, establishing a lower bound for a protocol that satisfies this weaker guarantee (as we do in this work) yields a stronger result.%Actually, our lower bound only requires the privacy to hold against \emph{uniform} external observer.
	%\Nnote{Maybe add: When only interesting in \Dp against external observer, the two definitions can be achieve using key-agreement and (single-party) \Dp mechanism. }
\end{remark}

\subsection{Useful Claims}

The following two propositions state that given the output of a differentially private function, it is not possible to predict well even a random index (even if all other indexes are leaked). The first proposition handles the information-theoretic case and the second handles the computation case.
\iffull
Both propositions are proven in \cref{sec:missing-proofs:hard-to-guess}. 
\else
Both propositions are proven in the full version \cite{HaitnerSMTC25}.
\fi

\def\propHardToGuessInf{
    Let $f\colon \oo^n \rightarrow \cY$ be an $(\eps,\delta)$-\DP function, let $g \colon [n] \times \oo^{n-1} \times \cY \rightarrow \set{-1,1,\bot}$ be a (randomized) function, and let $X = (X_1,\ldots,X_n) \la \oo^n$. Then the following holds for every $i \in [n]$ where $X_i^* = g(i,X_{-i},f(X_1,\ldots,X_n))$:
    \begin{align*}
        \pr{X_i^* = X_i} \leq e^{\eps}\cdot \pr{X_i^* = -X_i} + \delta.
    \end{align*}
}

\begin{proposition}\label{prop:hard-to-guess-inf}
    \propHardToGuessInf
\end{proposition}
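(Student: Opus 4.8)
The plan is to prove the statement coordinate-wise and conditioned on the other entries, then average. Fix $i \in [n]$. Since $X \la \oo^n$ is uniform, I can write $X_{-i} = w$ uniform over $\oo^{n-1}$ and $X_i = b$ uniform over $\oo$, independently. The crucial observation is that for any fixed $w$, the two completions $x^{w,1}$ and $x^{w,-1}$ (setting the $i$-th coordinate to $1$ and $-1$ respectively, and the rest to $w$) are \emph{neighboring} databases, differing only in entry $i$. This is exactly where the $(\eps,\delta)$-\DP guarantee of $f$ will be applied.

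Next, for fixed $i,w$ and each target value $c \in \oo$, I would define the (possibly randomized) distinguisher $\Dc_{c}(y) \eqdef \indic{g(i,w,y) = c}$, which post-processes the output of $f$ into a bit; the internal randomness of $g$ is absorbed into $\Dc_c$. Applying the \DP inequality to $\Dc_c$ on the neighboring pair $x^{w,1}, x^{w,-1}$ gives, for $c = 1$,
$$\pr{g(i,w,f(x^{w,1})) = 1} \le e^\eps \cdot \pr{g(i,w,f(x^{w,-1})) = 1} + \delta,$$
and, applying \DP in the other direction for $c = -1$,
$$\pr{g(i,w,f(x^{w,-1})) = -1} \le e^\eps \cdot \pr{g(i,w,f(x^{w,1})) = -1} + \delta.$$
The left-hand sides are precisely the two events ``$g$ guesses $X_i$ correctly'' (for $b = 1$ and $b = -1$), while the right-hand sides are the matching events ``$g$ guesses $X_i$ incorrectly'' on the opposite completion. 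Summing the two inequalities and dividing by $2$ yields, for every fixed $w$,
$$\tfrac12\sum_{b \in \oo} \pr{g(i,w,f(x^{w,b})) = b} \le e^\eps \cdot \tfrac12\sum_{b \in \oo}\pr{g(i,w,f(x^{w,b})) = -b} + \delta.$$

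Finally I would take expectation over $w \la \oo^{n-1}$. Since $\pr{X_i^* = X_i} = \Ex_{w}\bigl[\tfrac12\sum_b \pr{g(i,w,f(x^{w,b})) = b}\bigr]$ and likewise $\pr{X_i^* = -X_i} = \Ex_{w}\bigl[\tfrac12\sum_b \pr{g(i,w,f(x^{w,b})) = -b}\bigr]$, linearity of expectation promotes the per-$w$ inequality to the claimed bound. The argument is short, and the only real subtlety — the step needing care — is the event bookkeeping: one must invoke \DP with the right target $c$ and in the right direction so that a \emph{correct} guess on one completion is bounded by an \emph{incorrect} guess on its neighbor. The $\bot$ output of $g$ requires no special handling, since it contributes to neither the correct- nor the incorrect-guess probabilities and drops out of both sides; and randomization of $g$ is harmless because \DP holds against randomized post-processing (a randomized distinguisher is a convex combination of deterministic ones).
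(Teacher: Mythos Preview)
Your proof is correct and follows essentially the same approach as the paper: fix $i$ and $X_{-i}=w$, apply the $(\eps,\delta)$-\DP guarantee to the neighboring pair $x^{w,1},x^{w,-1}$ with the two post-processing distinguishers ``$g$ outputs $1$'' and ``$g$ outputs $-1$'' (in opposite directions), average the two inequalities, and then take expectation over $w$. The paper packages the averaging step as a separate claim (so it can be reused for the computational version), but the argument is identical to yours.
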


\def\propHardToGuessComp{
    Let $f = \set{f_{\pk} \colon \oo^{n(\pk)} \rightarrow \zo^{m(\pk)}}_{\pk \in \bbN}$ be an $(\eps,\delta)$-\CDP function ensemble, and let $\set{g_{\pk}}_{\pk \in \bbN}$ be a poly-size circuit family. Then, for large enough $\pk$ and $X = (X_1,\ldots,X_{n(\pk)}) \la \oo^{n(\pk)}$, the following holds for every $i \in [n(\pk)]$ where $X_i^* = g_{\pk}(i,X_{-i},f_{\pk}(X_1,\ldots,X_n))$:
    \begin{align*}
        \pr{X_i^* = X_i} \leq e^{\eps(\pk)}\cdot \pr{X_i^* = -X_i} + \delta(\pk).
    \end{align*}
}

\begin{proposition}\label{prop:hard-to-guess-comp}
    \propHardToGuessComp
\end{proposition}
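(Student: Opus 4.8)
The plan is to run the same coordinate-wise reduction that underlies \cref{prop:hard-to-guess-inf}, but to realize the distinguisher as an efficient circuit carrying non-uniform advice, so that it can legitimately be fed into the $\CDP$ guarantee of \cref{def:ComMech}. First I would fix an index $i$ and condition on $X_{-i}=x_{-i}$. The only remaining randomness in the database is then the uniform bit $X_i\in\oo$, and the two candidate databases $w^{+}=(x_{-i},\,X_i=1)$ and $w^{-}=(x_{-i},\,X_i=-1)$ are neighboring. Writing the realized database as $w^{X_i}$ (that is, $w^{+}$ when $X_i=1$ and $w^{-}$ when $X_i=-1$) and $X_i^*=g_\pk(i,x_{-i},f_\pk(w^{X_i}))$, I set
\[ p_+=\pr{g_\pk(i,x_{-i},f_\pk(w^{+}))=1},\quad p_-=\pr{g_\pk(i,x_{-i},f_\pk(w^{-}))=-1}, \]
\[ q_-=\pr{g_\pk(i,x_{-i},f_\pk(w^{-}))=1},\quad q_+=\pr{g_\pk(i,x_{-i},f_\pk(w^{+}))=-1}. \]
Expanding over the uniform choice of $X_i$ immediately gives $\pr{X_i^*=X_i\mid x_{-i}}=\tfrac12(p_++p_-)$ and $\pr{X_i^*=-X_i\mid x_{-i}}=\tfrac12(q_++q_-)$.

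Next I would build two efficient distinguishers out of $g_\pk$, both hardwiring the advice $(i,x_{-i})$: let $\Ac_\pk$ output $1$ on input $y$ exactly when $g_\pk(i,x_{-i},y)=1$, and let $\Cc_\pk$ output $1$ exactly when $g_\pk(i,x_{-i},y)=-1$. Since $g_\pk$ is poly-size and $(i,x_{-i})$ is short advice, both families are poly-size. Applying the $\CDP$ inequality of \cref{def:ComMech} to $\Ac_\pk$ on the neighboring pair $(w^{+},w^{-})$ yields $p_+\le e^\eps q_-+\delta$, and applying it to $\Cc_\pk$ on the pair $(w^{-},w^{+})$ yields $p_-\le e^\eps q_++\delta$. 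Summing and halving gives the conditional bound $\pr{X_i^*=X_i\mid x_{-i}}\le e^\eps\pr{X_i^*=-X_i\mid x_{-i}}+\delta$, and averaging over a uniform $x_{-i}$ produces the claimed inequality.

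The main obstacle is purely the quantifier structure of $\CDP$: the distinguishers above depend on $(i,x_{-i})$, whereas \cref{def:ComMech} fixes one circuit family before quantifying over all neighboring pairs and all large $\pk$. I would resolve this by contradiction. If the conclusion fails, there are infinitely many $\pk$, each with an index $i(\pk)$ for which $\pr{X_i^*=X_i}>e^\eps\pr{X_i^*=-X_i}+\delta$; averaging the conditional gap over $x_{-i}$ then pins down a fixed string $x_{-i}^\ast(\pk)$ with $p_++p_->e^\eps(q_++q_-)+2\delta$. Hardwiring $(i(\pk),x_{-i}^\ast(\pk))$ as advice for $\Ac_\pk$ and $\Cc_\pk$ on these lengths (and arbitrary advice elsewhere) yields two fixed poly-size families, and the strict inequality forces at least one of the two $\CDP$ bounds to be violated at each such $\pk$; hence one of the families violates \cref{def:ComMech} for infinitely many $\pk$, a contradiction. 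The one point to keep careful track of is the bookkeeping that pairs each $\CDP$ application with the correct ordering of the neighboring databases and the correct output-indicator distinguisher, so that the two bounds really come out as $p_+\le e^\eps q_-+\delta$ and $p_-\le e^\eps q_++\delta$.
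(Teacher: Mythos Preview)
Your proposal is correct and follows essentially the same route as the paper's proof: condition on $X_{-i}=x_{-i}$, hard-wire $(i,x_{-i})$ into a poly-size distinguisher built from $g_\pk$, apply the $\CDP$ inequality to the neighboring pair $(w^{+},w^{-})$ in each direction, average over $x_{-i}$, and handle the ``for large enough $\pk$'' quantifier by a contradiction/pigeonhole over infinitely many bad $\pk$. The only cosmetic difference is packaging: the paper parameterizes the distinguisher by a pair $(b,b')\in\oo^2$ and factors the final averaging step into a separate \cref{claim:X-star}, whereas you write out the two relevant distinguishers $\Ac_\pk,\Cc_\pk$ explicitly and combine $p_++p_-\le e^\eps(q_++q_-)+2\delta$ inline.
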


\remove{
\Enote{Chao's old statement:}
\begin{lemma}\label{lem:distance-I-old}
        Let $R \la \zo^n$. 
	For any function $f:\{0,1\}^n\rightarrow \{0,1\}$ and $\alpha<0.01$, it holds that
	$$
	\Pr_{i\la[n]}\left[\: \size{\:\mathbb{E}[f(R) \mid R_i = 0]-\mathbb{E}[f(R) \mid R_i = 1]\:}\geq \alpha\right]\leq \frac{2+2\log(\frac{1}{\alpha})}{n\alpha^2}.
	$$
\end{lemma}
\begin{proof}
	Define $S_1=\{r \in \zo^n \colon f(r)=1\}$. Then for any $i\in[n]$, we have
	$$
	\begin{array}{rl}
		\size{\mathbb{E}[f(R) \mid R_i = 0]-\mathbb{E}[f(R) \mid R_i = 1]}
		&=\size{\Pr[R\in S_1|R_i=0]-\Pr[R\in S_1|R_i=1]}\\
		&=\size{\frac{\Pr[R_i=0|R\in S_1]\cdot\Pr[R\in S_1]}{\Pr[R_i=0]}-\frac{\Pr[R_i=1|R\in S_1]\cdot\Pr[R\in S_1]}{\Pr[R_i=1]}}\\
		&=\frac{2\size{S_1}}{2^n}\size{\Pr[R_i=0|R\in S_1]-\Pr[R_i=1|R\in S_1]}
	\end{array}
	$$
	When $|S_1|\leq \alpha\cdot 2^{n-1}$, we have $\size{\mathbb{E}[f(R) \mid R_i = 0]-\mathbb{E}[f(R) \mid R_i = 1]}\leq\frac{2\size{S_1}}{2^n}\leq \alpha$ for any $i\in[n]$. Hence, in the following, we assume $|S_1|> \alpha\cdot 2^{n-1}$.

	%Define $I_{bad}=\{i|\size{\Pr[R_i=0|R\in S_1]-\Pr[R_i=1|R\in S_1]}>2\alpha\}$ and $k=\size{I_{bad}}$, then for any $i\notin I_{bad}$, we have 
    %$$
    %\begin{array}{rl}
    %    2\alpha&\geq \size{\Pr[R_i=0|R\in S_1]-\Pr[R_i=1|R\in S_1]}\\
    %    &=\size{\frac{\Pr[R\in S_1|R_i=0]\cdot\Pr[R_i=0]}{\Pr[R\in S_1]}-\frac{\Pr[R\in S_1|R_i=1]\cdot\Pr[R_i=1]}{\Pr[R\in S_1]}}\\
    %    &=\size{\Pr[R\in S_1|R_i=0]-\Pr[R\in S_1|R_i=1]}\cdot\frac{1}{2\Pr[R\in S_1]}\\
    %    &\geq \size{\mathbb{E}[f(R) \mid R_i = 0]-\mathbb{E}[f(R) \mid R_i = 1]}\cdot \frac{1}{2},
    %\end{array}
    %$$ 
    %where the last inequality is because $\Pr[R\in S_1]\leq 1$. So that $\size{\mathbb{E}}[f(R) \mid R_i = 0]-\mathbb{E}[f(R) \mid R_i = 1]\leq %4\alpha$.
    Define $I_{bad}=\{i \colon \size{\Pr[R_i=0|R\in S_1]-\Pr[R_i=1|R\in S_1]} \geq 2\alpha\}$ and $k=\size{I_{bad}}$, and denote $I_{bad}=\{i_1,\dots,i_k\}$. Define $(X_{i_1}, \ldots X_{i_k}) = (R_{i_1},\dots,R_{i_k})\mid_{R \in S_1}$. 
    Consider the min-entropy
	$$
	\begin{array}{rl}
		H_{min}(X_{i_1},\dots,X_{i_k})&\leq H(X_{i_1},\dots,X_{i_k})\\
		&\leq \sum_{j=1}^k H(X_{i_j})\\
		&\leq k\cdot \left(-(\frac{1}{2}+2\alpha)\cdot\log(\frac{1}{2}+2\alpha)-(\frac{1}{2}-2\alpha)\cdot\log(\frac{1}{2}-2\alpha)\right)\\
            &=k\cdot \left(-(\frac{1}{2}+2\alpha)\cdot(\log(1+4\alpha)-1)-(\frac{1}{2}-2\alpha)\cdot(\log(1-4\alpha)-1)\right)\\
            &=k\cdot \left(1-(\frac{1}{2}+2\alpha)\cdot\log(1+4\alpha)-(\frac{1}{2}-2\alpha)\cdot\log(1-4\alpha)\right),
		
	\end{array}
	$$
	where $H_{min}(Y)$ is the minimum entropy of $Y$ and $H(Y)$ is the Shannon entropy of $Y$.\Enote{add to preliminaries.}
        The third inequality holds since by the definition of $I_{bad}$, for every $j \in [k]$ it holds that $\size{\pr{X_{i_j} = 1}-\pr{X_{i_j} = 0}} > 2\alpha$, and therefore $H(X_{i_j}) \leq H(1/2 + 2\alpha)$\Enote{define}.
	
	Therefore, there exists $b_1,\dots,b_k\in\{0,1\}$, such that 
	
	\begin{align}\label{eq:min-entropy-result}
		\Pr\left[(R_{i_1},\ldots,R_{i_k}) = (b_1,\ldots,b_k) \mid R\in S_1\right]
		&= \pr{(X_{i_1},\ldots,X_{i_k}) = (b_1,\ldots,b_k)}\\
		&= 2^{-H_{min}(X_{i_1},\dots,X_{i_k})}\nonumber\\
		&\geq 2^{k\cdot \left(-1+(\frac{1}{2}+2\alpha)\cdot\log(1+4\alpha)+(\frac{1}{2}-2\alpha)\cdot\log(1-4\alpha)\right)}.\nonumber
	\end{align}
	
	Let $S_{bad}=\{r \in \zo^n  \colon \set{(r_{i_1},\ldots,r_{i_k}) = (b_1,\ldots,b_k)} \land \set{r\in S_1}\}$.
	It holds that
	\begin{align*}
		|S_{bad}|
		&= \size{S_1} \cdot \Pr\left[(R_{i_1},\ldots,R_{i_k}) = (b_1,\ldots,b_k) \mid R\in S_1\right]\\
		&\geq \alpha\cdot 2^{n-1}\cdot2^{k\cdot \left(-1+(\frac{1}{2}+2\alpha)\cdot\log(1+4\alpha)+(\frac{1}{2}-2\alpha)\cdot\log(1-4\alpha)\right)},
	\end{align*} 
	where the inequality holds by \cref{eq:min-entropy-result} and since $\size{S_1} \geq \alpha\cdot 2^{n-1}$.
	Notice that any string in $S_{bad}$ depends on at most $n-k$ bits. It implies that $|S_{bad}|\leq 2^{n-k}$. Therefore, we have
	$$
	\begin{array}{rl}
		&2^{n-k}\geq \alpha\cdot 2^{n-1}\cdot2^{k\cdot \left(-1+(\frac{1}{2}+2\alpha)\cdot\log(1+4\alpha)+(\frac{1}{2}-2\alpha)\cdot\log(1-4\alpha)\right)} \\
		\Rightarrow& n-k \geq \log \alpha+n-1+k\cdot \left(-1+(\frac{1}{2}+2\alpha)\cdot\log(1+4\alpha)+(\frac{1}{2}-2\alpha)\cdot\log(1-4\alpha)\right)\\
		\Rightarrow& 1-\log \alpha \geq k\cdot((\frac{1}{2}+2\alpha)\cdot\log(1+4\alpha)+(\frac{1}{2}-2\alpha)\cdot\log(1-4\alpha))\\
		\Rightarrow& 1-\log \alpha \geq k\cdot(4\alpha\cdot\log(1+4\alpha)+(\frac{1}{2}-2\alpha)\cdot\log(1-16\alpha^2))\\
        \Rightarrow& 1-\log\alpha \geq k\cdot(15.9\alpha^2-8\alpha^2+32\alpha^3)=k\cdot(7.9\alpha^2+32\alpha^3)>0.5k\alpha^2\\
		\Rightarrow& k\leq \frac{2-2\log \alpha}{\alpha^2} = \frac{2+2\log (1/\alpha)}{\alpha^2},
	\end{array}
	$$
	Where the third transition holds since 
	\begin{align*}
		\lefteqn{(\frac{1}{2}+2\alpha)\cdot\log(1+4\alpha)+(\frac{1}{2}-2\alpha)\cdot\log(1-4\alpha)}\\
		&= 4\alpha\cdot\log(1+4\alpha) + (\frac{1}{2}-2\alpha)\paren{\log(1+4\alpha)+\log(1-4\alpha)}\\
		&= 4\alpha\cdot\log(1+4\alpha)+(\frac{1}{2}-2\alpha)\cdot\log(1-16\alpha^2),
	\end{align*}
	and the forth transition holds since $4\alpha\cdot\log(1+4\alpha)+(\frac{1}{2}-2\alpha)\cdot\log(1-16\alpha^2) > 15.9\alpha^2-8\alpha^2+32\alpha^3$ for $\alpha < 0.01$.
	Thus, we conclude that 
	$$
	\Pr_{i\la[n]}\left[\size{\mathbb{E}[f(R) \mid R_i=0]-\mathbb{E}[f(R) \mid R_i = 1]}\geq \alpha\right]\leq \frac{k}{n}\leq \frac{2+2\log (1/\alpha)}{n\alpha^2}.
	$$
\end{proof}
}

\subsection{Channels and Two-Party Protocols}\label{sec:protocol}

\paragraph{Channels.}A channel is simply a distribution of a pair of tuples defined as follows. 
\begin{definition}[Channels]\label{def:channel} A {\sf channel} $C_{(X,U)(Y,V)}$ of size $\isize$ over alphabet $\Sigma$ is a probability distribution over $(\Sigma^\isize \times\zo^\ast) \times(\Sigma^\isize \times\zo^\ast)$. The ensemble $C_{(X,U)(Y,V)}= \set{C_{(X_\pk,U_\pk)(Y_\pk,V_\pk)}}_{\pk\in \N}$ is an $\isize$-size channel ensemble, if for every $\pk\in \N$, $C_{(X_\pk,U_\pk)(Y_\pk,V_\pk)}$ is an $\isize(\pk)$-size channel. %We denote a channel of size one by a \emph{single-bit} channel. 
We refer to $X$ and $Y$ as the {\sf local outputs}, and to $U$ and $V$ as the {\sf views}.	
\end{definition}

We view a  channel as the experiment in which there are two parties $\Ac$ and $\Bc$.  Party $\Ac$ receives ``output'' $X$ and ``view'' $U$, and party $\Bc$ receives ``output'' $Y$ and ``view'' $V$. Unless stated otherwise, the channels we consider are over the alphabet $\Sigma = \oo$. We naturally identify channels with the distribution that characterizes their output.

\subsubsection{Two-Party Protocols}

A two-party protocol $\Pi=(\Ac,\Bc)$ is \ppt if the running time of both parties is polynomial in their input length. We let $\Pi(x,y)(z)$ or $(\Ac(x),\Bc(y))(z)$ denote a random execution of $\Pi$ on a common input $z$, and private inputs $x,y$.%We assume \wlg that a protocol has a common output (part of its transcript).\Jnote{This is not really the case we consider in this paper..}

\begin{definition}[Oracle-aided protocols]\label{def:ChannelAidedProtocol}
	In a two-party protocol $\Pi$ with oracle access to a {\sf protocol} $\Psi$, denoted $\Pi^\Psi$, the parties make use of the \textit{next-message function} of $\Psi$.\footnote{The function that on a partial view of one of the parties, returns its next message.} In a two-party protocol $\Pi$ with oracle access to a {\sf channel} $C_{Z W}$, denoted $\Pi^C$, the parties can jointly invoke $C$ for several times. In each call, an independent pair $(z,w)$ is sampled according to $C_{Z W}$, one party gets $z$, the other gets $w$.
\end{definition}

\begin{definition}[The channel of a protocol]\label{def:ChannlOfProtocol}
	For a no-input two-party protocol $\Pi= (\Ac,\Bc)$, we associate the channel $C_\Pi$, defined by $\C_\Pi= C_{(X, U),(Y, V)}$, where $X$ and $Y$ are the local outputs of $\Ac$ and $\Bc$ (respectively) and
	$U$ and $V$ are the local views of $\Ac$ and $\Bc$ (respectively).
    
	For a two-party protocol $\Pi$ that gets a security parameter $1^\pk$ as its (only, common) input, we associate the channel ensemble $ \set{C_{\Pi(1^\pk)}}_{\pk\in \N}$. 
\end{definition}

\begin{definition}[$(\alpha,\gamma)$-Accurate channel]\label{def:accurate-func}
	A channel $C = C_{(X, U),(Y, V)}$ is {\sf $(\alpha,\gamma)$-accurate for the function $f$}, if $\ppr{C}{\size{\out(V)-f(X,Y)}\leq \alpha}\ge \gamma$, where $\out(V)$ is the designated output.
    A channel ensemble $C_{(X, U),(Y, V)}= \set{C_{(X_\pk, U_\pk),(Y_\pk, V_\pk)}}_{\pk\in \N}$ is  $(\alpha,\gamma)$-accurate for  $f$ if $C_{(X_\pk, U_\pk),(Y_\pk, V_\pk)}$ is $(\alpha(\pk),\gamma(\pk))$-accurate for $f$, for every $\pk \in \N$.
\end{definition}

\subsubsection{Differentially Private Channels}\label{sec:DPChannel}
Differentially private channels are naturally defined as follows:
\begin{definition}[Differentially private channels]\label{def:DPChannel}
	An $n$-size channel $C = C_{(X, U),(Y, V)}$ with $X, Y$ over $\oo^n$ 
	is {\sf$(\eps,\delta)$-differentially private} (denoted $(\eps,\delta)$-$\DP$) if for every $x \in \Supp(X)$ there exists an $n$-size $(\eps,\delta)$-$\DP$ mechanisms $\Mc_x$ such that $(X,Y,U) \equiv (X,Y,\Mc_X(Y))$, and for every $y \in \Supp(Y)$ there exists an $n$-size $(\eps,\delta)$-$\DP$ mechanisms $\Mc_y'$ such that $(X,Y,V) \equiv (X,Y,\Mc_Y'(X))$. In addition, we say that the channel is \emph{uniform} if $X$ and $Y$ are independent random variables uniformly distributed in $\oo^n$. 
\end{definition}

\begin{definition}[Computational differentially private channels]\label{def:CDPChannel}
	An $n$-size channel ensemble $C = \set{C_{(X_\pk, U_\pk),(Y_\pk, V_\pk)}}_{\pk\in\N}$ with $X_\pk, Y_\pk$ over $\oo^n$ 
	is {\sf$(\eps,\delta)$-computationally differentially private} (denoted $(\eps,\delta)$-$\CDP$) if for every ensemble $\set{x_\pk \in \Supp(X_\pk)}_{\pk\in\N}$ there exists an $n$-size $(\eps,\delta)$-\CDP mechanisms ensemble $\set{\Mc_{x_\pk}}_{\pk\in\N}$ such that $(X_\pk,Y_\pk,U_\pk) \equiv (X_\pk,Y_\pk,\Mc_{X_\pk}(Y_\pk))$, for every $\pk\in\N$, and for every ensemble $\set{y_\pk \in \Supp(Y_\pk)}_{\pk\in\N}$ there exists an $n$-size $(\eps,\delta)$-$\CDP$ mechanisms ensemble $\set{\Mc'_{y_\pk}}_{\pk\in\N}$ such that $(X_\pk,Y_\pk,V_\pk) \equiv (X_\pk,Y_\pk,\Mc_{Y_\pk}'(X_\pk))$ for every $\pk\in \N$. In addition, we say that the channel is \emph{uniform} if $X_\pk$ and $Y_\pk$ are independent random variables uniformly distributed in $\{\pm 1\}^n$ for all $\pk\in\N$.
\end{definition}

% \begin{lemma}~\label{lem:dp-sv-source}
% 	Let $P$ be an $\varepsilon$-DP randomized protocol. Let $X$ and $Y$ be independent random variables uniformly distributed in $\{\pm 1\}^n$ and let random variable $\Pi(X,Y)$ denote the transcript of running $P(X,y)$. Then for every $\pi\in Supp(\Pi)$, the random variables corresponding to the inputs conditioned on transcript $\pi$, $X_\pi$ and $Y_\pi$, are independent $e^{-\varepsilon}$-strong SV source.
% \end{lemma}

\subsubsection{Weak Erasure Channel (\WEC)}

\begin{definition}[\WEC]\label{def:WEC}
	A channel $((O_A,V_A), (O_B,V_B))$ with $O_A \in \set{0,1}$ and $O_B \in \set{0,1,\bot}$ is a {\sf weak erasure channel}, denoted $(\alpha,p,q)$-$\WEC$, if:
	\begin{itemize}
		%\item $O_A\in \set{-1,1}$ and $O_B\in \set{-1,1,\bot}$.
		\item Random erasure: $\pr{O_B = \perp} = 1/2$.
		
		\item Agreement: $\pr{O_A\ne O_B\mid O_B\ne \bot}\le \alpha$.
		
		\item Secrecy:
		
		\begin{enumerate}
			\item For every algorithm $\Dc$ it holds that\label{WEC:item:A}
			\begin{align*}
				%\size{\pr{\Ac(O_A,V_A) = 1 \mid O_B \neq \perp} - \pr{\Ac(O_A,V_A) = 1 \mid O_B = \perp}} \le p
				\size{\pr{\Dc(V_A) = 1 \mid O_B \neq \perp} - \pr{\Dc(V_A) = 1 \mid O_B = \perp}} \le p
			\end{align*}
			(Alice doesn't know if $O_B = \perp$.)
			
			\item For every algorithm $\Dc$ it holds that\label{WEC:item:B}
			\begin{align*}
				\pr{\Dc(V_B) = O_A \mid O_B=\bot} \leq \frac{1+q}{2}.
			\end{align*}
			(i.e., if $O_B=\bot$, Bob don't know what is the value of $O_A$).
			
			%\item $SD((O_A U|O_B=\bot),(O_A U|O_B\ne \bot))\le p$ (The sender don't know if $O_B=\bot$).
			
			%\item $SD(V O_A|O_B=\bot,V(-O_A)|O_B=\bot)\le q$ (If $O_B=\bot$, Bob don't know what the value of $O_A$).
		\end{enumerate}
	\end{itemize}
   We say that a channel ensemble $C=\set{C_\pk}_{\pk\in N}$ is a {\sf computational weak erasure channel}, denoted $(\alpha,p,q)$-\CompWEC, if for every \ppt algorithm $\Dc$ and every sufficiently large $\pk\in\N$, $C_\pk$ satisfies the properties stated in the items above, where the secrecy property holds with respect to a \ppt algorithm $\Dc$. A protocol $\Lambda$ is said to be $(\alpha,p,q)$-$\CompWEC$, if the ensemble induces by the protocol (that is, $C=\set{C_{\Lambda(\pk)}}_{\pk\in\N}$) is $(\alpha,p,q)$-$\CompWEC$.  
\end{definition}

\subsubsection{Approximate Weak Erasure Channel (\AWEC)}\label{sec:AWEC}

\begin{definition}[\AWEC]\label{def:AWEC}
	A channel $C = ((O_A,V_A), (O_B,V_B))$ over $([-n,n] \times \zo^*) \times (([-n,n] \cup \bot)  \times \zo^*)$ is an {\sf approximate weak erasure channel}, denoted $(\ell,\alpha,p,q)$-\AWEC if:
	\begin{itemize}
		
		\item Random erasure: $\pr{O_B = \perp} = 1/2$.
		
		\item Accuracy: $\pr{\size{O_A - O_B} > \ell \mid O_B \ne \bot}\le \alpha$.
		
		\item Secrecy:
		
		\begin{enumerate}
			\item For every algorithm $\Dc$ it holds that\label{AWEC:item:A}
			\begin{align*}
				%\size{\pr{\Ac(O_A,V_A) = 1 \mid O_B \neq \perp} - \pr{\Ac(O_A,V_A) = 1 \mid O_B = \perp}} \le p
				\size{\pr{\Dc(V_A) = 1 \mid O_B \neq \perp} - \pr{\Dc(V_A) = 1 \mid O_B = \perp}} \le p
			\end{align*}
			(Alice doesn't know if $O_B=\bot$).
			
			\item For every algorithm $\Dc$ it holds that\label{AWEC:item:B}
			\begin{align*}
				\pr{\size{\Dc(V_B) - O_A} \leq 1000 \ell \mid O_B=\bot} \leq q.
			\end{align*}
			(i.e., if $O_B=\bot$, Bob can't estimate the value of $O_A$ with error $\leq 1000 \ell$).
		\end{enumerate}
	\end{itemize}
     We say that a channel ensemble $C=\set{C_\pk}_{\pk\in N}$ is a {\sf computational approximate weak erasure channel}, denoted $(\ell,\alpha,p,q)$-\CompAWEC, if for every \ppt algorithm $\Dc$ and every sufficiently large $\pk\in\N$, $C_\pk$ satisfies the properties stated in the items above. A protocol $\Gamma$ is said to be $(\ell,\alpha,p,q)$-$\CompAWEC$, if the ensemble induced by the protocol (that is, $C=\set{C_{\Gamma(\pk)}}_{\pk\in\N}$) is $(\ell,\alpha,p,q)$-$\CompAWEC$.  
\end{definition}

We will make use of the following lemma, which shows that for some choices of the parameters, \AWEC implies \WEC. The lemma is proven in \cref{sec:AWEC-to-WEC}.

\begin{lemma}\label{lemma:AWEC-to-WEC}
	For every $\ell> 0$, there exists a \ppt protocol $\Lambda = (\Pc_1,\Pc_2)$ such that given an oracle access to an $(\ell,\alpha,p,q)$-\AWEC $C$, the channel $\tilde{C}$ induced by $\Lambda^C$ is $(\alpha'=\alpha+0.001,\: p' = p ,\:  q' = 1/2 + 2(q+0.01))$-\WEC.
	Furthermore, the proof is constructive in a black-box manner:
	\begin{enumerate}
		\item There exists an oracle-aided \ppt algorithm $\Ec_1$ such that for every channel $C = ((\OA,\VA), (\OB,\VB))$ and algorithm $\Dc$ violating the \WEC secrecy property~\ref{WEC:item:A} of $\tilde{C}$, algorithm $\Ec_1^{\Dc}$ violates the \AWEC secrecy property~\ref{AWEC:item:A} of $C$.
		
		\item There exists an oracle-aided \ppt algorithm $\Ec_2$ such that for every channel $C = ((\OA,\VA), (\OB,\VB))$ and algorithm $\Dc$ violating the \WEC secrecy property~\ref{WEC:item:B} of $\tilde{C}$, algorithm $\Ec_2^{\Dc}$ violates the \AWEC secrecy property~\ref{AWEC:item:B} of $C$.
	\end{enumerate}
\end{lemma}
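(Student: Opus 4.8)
The plan is to turn the real-valued, approximate \AWEC\ into a Boolean \WEC\ by having \Ac\ derive a single balanced bit from \OA\ that depends only on the \emph{fine scale} of \OA\ (at resolution $w:=1000\ell$), so that coarse knowledge of \OA\ is useless and only the forbidden within-$1000\ell$ knowledge could help predict it. Concretely, $\Lambda^{C}$ invokes $C$ once; $\Pc_1$ (playing \Ac) obtains $(\OA,\VA)$ and $\Pc_2$ (playing \Bc) obtains $(\OB,\VB)$ with $\OB=\bot$ exactly on erasure. $\Pc_1$ samples an offset $s\la[0,w)$, partitions $\mathbb{R}$ into the cells $C_j=[s+jw,\,s+(j+1)w)$, samples an independent uniform label $\beta_j\la\zo$ for every cell meeting $[-n,n]$ (there are $O(n/\ell)$ of them), and sends $(s,\beta)$ in the clear. $\Pc_1$ outputs $\beta_{j(\OA)}$, where $j(\OA)$ is the index of the cell containing \OA; $\Pc_2$ outputs $\bot$ on erasure and otherwise $\beta_{j(\OB)}$.

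The two easy properties come for free. Erasure still has probability $1/2$. For agreement, conditioned on no erasure we have $\size{\OA-\OB}\le\ell$ except with probability $\alpha$, and for any fixed pair at distance $\le\ell$ a uniform offset puts them in different cells with probability at most $\ell/w=0.001$; hence $\pr{\beta_{j(\OA)}\neq\beta_{j(\OB)}\mid \OB\neq\bot}\le\alpha+0.001=\alpha'$. For \Ac's secrecy (property~\ref{AWEC:item:A}\,$\to$\,\ref{WEC:item:A}), note that \Ac's \WEC-view is just her \AWEC-view augmented by $(s,\beta)$, which are $\Pc_1$'s own coins and are independent of \Bc's erasure bit. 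Thus $\Ec_1^{\Dc}$ takes an \AWEC-view $\VA$, samples $(s,\beta)$ itself, forms the full \WEC-view, and runs \Dc; its advantage equals \Dc's, giving $p'=p$.

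The hard part is \Bc's secrecy (property~\ref{AWEC:item:B}). Suppose $\Dc$ predicts \Ac's bit given the erased view with $\pr{\Dc(\VB,s,\beta)=\beta_{j(\OA)}\mid\OB=\bot}>\tfrac{1+q'}{2}=\tfrac34+q+0.01$, i.e.\ advantage $>\tfrac14+q+0.01$. Because the bit is exactly the label of the cell $j(\OA)$, $\Dc$ is a noisy copy of a \emph{dictatorship} on the coordinate $\beta_{j(\OA)}$ of the random labeling. I would exploit this to \emph{locate the cell}: $\Ec_2$, on input $\VB$, picks $s$, and for each candidate cell $j$ estimates the correlation $a_j:=\pr_{\beta}[\Dc(\VB,s,\beta)=\beta_j]-\tfrac12$ by resampling $\beta$, then outputs the center of the cell $\argmax_j a_j$. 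For the true cell $j^\ast=j(\OA)$ one has $a_{j^\ast}=$ the per-instance advantage, while Parseval gives $\sum_j a_j^2\le\tfrac14$, so at most a few coordinates can carry large correlation and, whenever the instance advantage is bounded away from $0$, $j^\ast$ is the heaviest coordinate and is thereby recovered. Since the cell center lies within $w/2=500\ell\le1000\ell$ of \OA, recovering $j^\ast$ yields an estimate of \OA\ within $1000\ell$; averaging the per-instance advantages (which average to $>\tfrac14+q+0.01$) and charging the Parseval/argmax loss produces such an estimate with probability $>q$, contradicting \AWEC\ property~\ref{AWEC:item:B}. The whole reduction is black-box and runs \Dc\ only polynomially many times, so it respects the \emph{computational} secrecy of $C$, and the factor $2$ and additive $0.01$ in $q'=1/2+2(q+0.01)$ are exactly the slack absorbed by this heavy-coordinate step.

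The place I expect to spend real effort is making the $\argmax$ step robust: controlling how often a spurious cell can out-correlate $j^\ast$ (a Goldreich--Levin / heavy-Fourier-coefficient style analysis) and propagating the per-instance guarantee through the expectation over $(\VB,\OA,s)$ down to the claimed probability $q$. A secondary subtlety is that \AWEC\ property~\ref{AWEC:item:B} is quantified over the \emph{unconditional} erased view, whereas the analysis naturally works per fixed $(\VB,s)$; I would reconcile this by first fixing the instance, running the heavy-coordinate recovery there, and only then taking expectations, so that the final estimator is a single efficient algorithm whose overall success probability exceeds $q$.
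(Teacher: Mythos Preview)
Your protocol shares the paper's first move exactly: apply a uniform shift $s$ and round to cells of width $1000\ell$, so that agreement fails with extra probability at most $\ell/1000\ell=1/1000$, and Alice's new view consists of her \AWEC\ view plus her own fresh coins, giving $p'=p$ by the simulation you describe. The divergence is in how the single bit is extracted from the cell index. The paper sends a short random $r\in\zo^{\log n}$ and has Alice output the Goldreich--Levin bit $\langle O_A',r\rangle$, where $O_A'=\lceil(O_A+s)/1000\ell\rceil$; you instead publish a full random labeling $\beta\colon[m]\to\zo$ of all cells and output $\beta_{j^\ast}$.

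Your argmax decoding step has a genuine gap. The assertion that ``whenever the instance advantage is bounded away from $0$, $j^\ast$ is the heaviest coordinate'' is false at the advantage level that actually arises. Take $\Dc(\beta)=\mathrm{Maj}(\beta_{j^\ast},\beta_{j'},\beta_{j''})$ for any two other cells $j',j''$: then $a_{j^\ast}=a_{j'}=a_{j''}=\tfrac14$ exactly, Parseval is satisfied ($\sum_j a_j^2=3/16\le 1/4$), and argmax is a three-way tie that does not single out $j^\ast$. In general Parseval only bounds the number of cells with $|a_j|\ge\tfrac14$ by four, so on instances whose advantage sits near $\tfrac14$ argmax recovers $j^\ast$ with probability at best about $1/4$; after the averaging step this constant loss is fatal for the claimed bound when $q$ is small.

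This is exactly why the paper chooses Goldreich--Levin rather than a random labeling. The linearity $\langle O_A',r\rangle\oplus\langle O_A',r\oplus e_i\rangle=(O_A')_i$ yields \emph{unique} decoding of $O_A'$ from any predictor with success probability $>\tfrac34+0.001$, not merely a short list. So once averaging isolates a $(q+0.009)$-fraction of $(v_B,s)$ on which $\Dc$ predicts with probability $\ge\tfrac34+0.001$, the bit-by-bit G--L decoder recovers $O_A'$ with probability $1-\negl$, and $1000\ell\cdot O_A'-s$ is within $1000\ell$ of $O_A$; overall success exceeds $(q+0.009)(1-o(1))>q$. Your labeling has no such self-correction, so argmax must be replaced by something like sampling $j$ with probability proportional to $a_j^2$ (then $\Pr[\text{sample }j^\ast]\ge 4a_{j^\ast}^2$ by Parseval, and Jensen gives $\ex{4a_{j^\ast}^2}\ge 4(\ex{a_{j^\ast}})^2>\tfrac14+2q>q$); that can be pushed through, but it is a different argument from the one you sketched, and the paper's G--L route is both shorter and the standard move here.
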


Since \cref{lemma:AWEC-to-WEC} is constructive, the following is an immediate corollary.
\begin{corollary}\label{cor:CompAWEC to CompWEC}
There exists an oracle aided \ppt protocol $\Lambda$, such that given a protocol $\Gamma$ that induces $(\ell,\alpha,p,q)$-\CompAWEC, it holds that $\Lambda^\Gamma$ is $(\alpha'=\alpha+0.001,\: p' = p ,\:  q' = 1/2 + 2(q+0.01))$-\CompWEC.  
\end{corollary}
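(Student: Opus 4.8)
The plan is to realize $\Lambda$ by invoking the AWEC once and then \emph{coarsening and randomly coloring} the real-valued output into a single bit. Fix a cell width $w := 1000\ell$. The protocol $\Lambda=(\Pc_1,\Pc_2)$ proceeds as follows: the parties call $C$, so that $\Pc_1$ obtains $(O_A,V_A)$ and $\Pc_2$ obtains $(O_B,V_B)$; then they jointly sample public randomness consisting of a shift $s\gets[0,w)$ and an independent uniform color $c_k\gets\zo$ for every cell $J_k := [\,s+kw,\ s+(k+1)w\,)$, $k\in\Z$ (only the polynomially many cells meeting $[-n,n]$ are needed). Writing $k(a)$ for the index of the cell containing $a$, party $\Pc_1$ outputs the bit $O_A' := c_{k(O_A)}$, and $\Pc_2$ outputs $O_B' := c_{k(O_B)}$ if $O_B\ne\bot$ and $O_B':=\bot$ otherwise; the WEC views are $V_A' := (V_A,s,\vec c)$ and $V_B' := (V_B,s,\vec c)$. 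The erasure probability $1/2$ is inherited verbatim. For agreement, if $O_B\ne\bot$ then $O_A'\ne O_B'$ forces $O_A,O_B$ into distinct cells, which, conditioned on $\size{O_A-O_B}\le\ell$, happens only if a random cell boundary falls in the length-$\le\ell$ gap between them, i.e.\ with probability $\le \ell/w = 0.001$ over $s$; together with the accuracy failure ($\le\alpha$) this yields $\pr{O_A'\ne O_B'\mid O_B'\ne\bot}\le\alpha+0.001=\alpha'$.

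Secrecy property~\ref{WEC:item:A} is immediate and gives $\Ec_1$: the public coins $(s,\vec c)$ are independent of the erasure event, so given any $\Dc$ distinguishing $V_A'=(V_A,s,\vec c)$ across $O_B\ne\bot$ versus $O_B=\bot$, the algorithm $\Ec_1^\Dc(V_A)$ that samples $(s,\vec c)$ itself and runs $\Dc(V_A,s,\vec c)$ distinguishes $V_A$ with the same advantage, violating AWEC property~\ref{AWEC:item:A}; hence $p'=p$.

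The crux is secrecy property~\ref{WEC:item:B}, and the plan is to show that a predictor that is \emph{too good} forces a heavy width-$w$ cell, contradicting AWEC property~\ref{AWEC:item:B}. Fix $V_B,s$ and condition on $O_B=\bot$, and let $p_k:=\pr{k(O_A)=k\mid V_B,s,\bot}$, so $\sum_k p_k=1$. Since $O_A'=c_{k(O_A)}$ we have $2O_A'-1=\chi_{k(O_A)}$ for $\chi_k := 2c_k-1$, so for any $\Dc$, writing $D(\vec c):=2\Dc(V_B,s,\vec c)-1\in[-1,1]$,
\begin{align*}
2\pr{\Dc=O_A'\mid V_B,s,\bot}-1 \;=\; \Ex_{\vec c,\,O_A}\!\big[D(\vec c)\,\chi_{k(O_A)}\big] \;=\; \sum_k p_k\,\widehat D(k),
\end{align*}
where $\widehat D(k):=\Ex_{\vec c}[D(\vec c)\chi_k]$ and we used that the colors are independent and uniform. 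By Parseval $\sum_k\widehat D(k)^2\le\Ex_{\vec c}[D^2]\le 1$, so Cauchy--Schwarz gives $\sum_k p_k\widehat D(k)\le\sqrt{\sum_k p_k^2}\le\sqrt{\max_k p_k}$. Averaging over $(V_B,s)$, a predictor that violates property~\ref{WEC:item:B}, i.e.\ $\pr{\Dc(V_B')=O_A'\mid\bot}>(1+q')/2$, forces $\Ex_{V_B,s}[\sqrt{\max_k p_k}]>q'=\tfrac12+2(q+0.01)$, whence the AM--GM inequality $\sqrt M\le(1+M)/2$ yields $\Ex_{V_B,s}[\max_k p_k]>4(q+0.01)=4q+0.04$. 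Thus the rule ``output the center of the heaviest cell'' estimates $O_A$ to within $w/2=500\ell<1000\ell$ with probability exceeding $4q+0.04>q$ --- this is exactly the information-theoretic statement forbidden by AWEC property~\ref{AWEC:item:B}, and the factor $2$ and slack $0.01$ in $q'$ are calibrated to make this comparison go through.

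It remains to make this estimator \emph{efficient and black-box}, producing $\Ec_2^\Dc$, and I expect this to be the main obstacle. The generous slack ($4q+0.04$ versus the forbidden $q$) is precisely what should absorb the loss of an efficient recovery: $\Ec_2(V_B)$ samples $s$, and for each of the polynomially many candidate cells $k$ estimates the correlation $\widehat D(k)=\Ex_{\vec c}[(2\Dc(V_B,s,\vec c)-1)\chi_k]$ --- equivalently, the influence of the color $c_k$ on $\Dc$ --- by resampling $\vec c$, and then outputs the center of a cell selected from these statistics. For the Bayes-optimal predictor $\widehat D(k)$ is monotone in $p_k$ and $\widehat D(k)\approx p_k$, so its argmax is the heaviest cell; the difficulty is arguing that for an \emph{arbitrary} $\Dc$ of only constant advantage the selected cell still carries conditional mass $>q$. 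I expect to handle this by a pruning argument that uses $\sum_k\widehat D(k)^2\le1$ to cap the number of cells that can be spuriously correlated, combined with a randomized selection $\propto\max(\widehat D(k),0)$ so that the output's expected mass is driven by the large weighted sum $\sum_k p_k\widehat D(k)>\tfrac12+2q$. I also note why the naive \emph{alternating} coloring (the parity $\lfloor(O_A+s)/w\rfloor\bmod 2$) must be avoided: an adversary whose $O_A$ is spread over many periods but phase-concentrated modulo $2w$ predicts that parity perfectly while being unable to localize $O_A$, so the \emph{independent} random colors are essential to the secrecy argument. Finally, since $\Ec_1$, $\Ec_2$, and the agreement/erasure analysis are all uniform and black-box, \cref{cor:CompAWEC to CompWEC} follows by instantiating the same $\Lambda$ and invoking the two secrecy reductions against \ppt\ $\Dc$ for every sufficiently large $\pk$.
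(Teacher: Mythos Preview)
Your protocol differs from the paper's: you color each width-$1000\ell$ cell with an \emph{independent} random bit, whereas the paper rounds to the cell index $o_A'=\ceil{(o_A+s)/(1000\ell)}$ and extracts a single Goldreich--Levin bit $\langle o_A',r\rangle$ for uniform $r\in\zo^{\log n}$. Your agreement bound and $\Ec_1$ match the paper's, and the Parseval/Cauchy--Schwarz argument showing $\Ex_{V_B,s}[\max_k p_k]>4q+0.04$ is a correct information-theoretic proof of secrecy~\ref{WEC:item:B}. The gap is exactly the one you flag: the corollary is computational, and your specific rule for the \ppt $\Ec_2$ --- sample $k\propto\widehat D(k)^+$ --- does not work, because the normalizer $\sum_j\widehat D(j)^+$ is only bounded by $\sqrt N$ (Cauchy--Schwarz against $\sum_j\widehat D(j)^2\le1$), a loss the $4q+0.04$ vs.\ $q$ slack does not absorb. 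A close variant \emph{does} work: sample $k\propto(\widehat D(k)^+)^2$. With $A:=\sum_k p_k\widehat D(k)^+$ the conditional success is $\sum_k p_k(\widehat D(k)^+)^2\big/\sum_j(\widehat D(j)^+)^2\ge A^2$ (the denominator is $\le1$ by Parseval, and $\sum_k p_k x_k^2\ge(\sum_k p_k x_k)^2$ by Jensen), so overall success $\ge\Ex[A^2]\ge(\Ex[A])^2>(q')^2$, and one checks $(q')^2>q$ for all $q\ge0$ since $4q^2+1.08q+0.2704>0$.

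The paper sidesteps this via the linear structure: a \WEC predictor with success $>(1+q')/2=3/4+q+0.01$ gives, by averaging, a set of $(V_B,s)$ of mass $>q+0.009$ on which prediction over $r$ exceeds $3/4+0.001$; there the elementary bitwise Goldreich--Levin decoder recovers $o_A'$ exactly, hence $O_A$ to within $1000\ell$, with overall probability $>(q+0.009)(1-o(1))>q$. So both routes reach the corollary, but the GL bit supplies an efficient decoder off the shelf, while your independent-color hash needs the squared-weight Fourier-sampling step that you sketched but did not write down --- and not the linear-weight rule you actually proposed.
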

\begin{proof}[Proof of \ref{cor:CompAWEC to CompWEC}]
Let $\Lambda$ be the \ppt algorithm guaranteed  by Lemma \ref{lemma:AWEC-to-WEC}. Given an $(\ell,\alpha,p,q)$-\CompAWEC protocol $\Gamma$, we define $\Lambda(\pk)={\Lambda^{\Gamma(\pk)}(\pk)}$. Assume towards a contradiction that $\Lambda$ is not a $(\alpha',p',q')$-\CompWEC. It follows that there exists a \ppt $\Dc$ that for infinity many $\pk\in\N$ contradicts one of the \WEC secrecy properties of channel ensemble $\set{C_{\Lambda(\pk)}}_{\pk\in\N}$. Fix $\pk\in\N$ for which this holds. By Lemma \ref{lemma:AWEC-to-WEC}, there exists a \ppt $\Ec^\Dc$ that for every such $\pk$  contradicts one of the secrecy properties of the channel $C_{\Gamma(\pk)}$. This implies that for infinity many $\pk\in\N$, $\Ec^\Dc$  contradict the secrecy of the channel ensemble $\set{C_{\Gamma(\pk)}}_{\pk\in\N}$, which is a contradiction since this would means that $\Gamma$ is not a $(\ell,\alpha,p,q)$-\CompAWEC.       
\end{proof}

\subsection{Oblivious Transfer (\OT)}

\paragraph{Secure Computation.}
We use the standard notion of securely computing a functionality, \cf  \cite{Goldreich04}.
\begin{definition}[Secure computation]\label{def:SFE}
	A two-party protocol {\sf securely computes a functionality $f$}, if it does so according to the real/ideal paradigm.   We add the term perfectly/statistically/computationally/non-uniform computationally, if the simulator's output is  perfect/statistical/computationally indistinguishable/  non-uniformly indistinguishable from  the real distribution.  The protocol have the above notions of security {\sf against semi-honest  adversaries}, if its security only  guaranteed to holds against an adversary that follows the prescribed protocol.   Finally, for the case of perfectly secure computation, we naturally apply the above notion also to the non-asymptotic case: the protocol with no security parameter perfectly  compute a functionality $f$.
	
	A two-party protocol {\sf securely computes a functionality ensemble $f$ with oracle to a channel $C$}, if it does so according to the above definition when the parties have access to a trusted party computing $C$. All the above adjectives naturally extend to this setting.
\end{definition}

\paragraph{Oblivious Transfer.}
The (one-out-of-two) oblivious transfer functionality is defined as follows.
\begin{definition}[oblivious transfer functionality $f_{\OT}$]\label{def:OTfunc}
	The oblivious transfer functionality over $\zo \times (\zs)^2$ is defined by  $f_{\OT} (i,(\sigma_0,\sigma_1)) = (\perp,\sigma_i)$.
\end{definition}
A protocol is $\ast$ secure OT,   for \\$\ast\in \set{\text{semi-honest statistically/computationally/computationally non-uniform}}$, if it  compute the $f_{\OT}$  functionality with $\ast$ security.

% \begin{definition}[Computational oblivious transfer, semi-honest model]
% A protocol $\Pi=(\Ac,\Bc)$ is a semi-honest 1-out-of-2 computational oblivious transfer (comp-OT) protocol if the following holds. Given a common input $1^{\pk}$, the parties $\Ac$ and $\Bc$ run the protocol $\Pi(1^\pk)$ (in an honest manner) and    
% $\Ac$ outputs $X=(m_1,m_2)\in \zo\times\zo$ and has a view $U$ and $\Bc$ outputs $Y=(i,\hat{m})\in\zo\times\zo$ and has a view $V$, and the following properties are satisfied:
% \begin{enumerate}
%     \item \textbf{Correctness:} 
%     $\pr{\hat{m}\neq m_i}<\negl(\pk).$ 
    
%     \item \textbf{A's Privacy:} For every \ppt $\Dc$ and every sufficiently large $\pk$:
%     $\pr{\Dc(V)=m_{i-1}}<(1+\negl(\pk))/2$
    
%     \item \textbf{B's Privacy:} For every \ppt $\Dc$ and every sufficiently large $\pk$:
%     $\pr{\Dc(U)=i}<(1+\negl(\pk))/2$  
% \end{enumerate}
% \end{definition}

We make use of the following useful results by Wullschleger on oblivious transfer amplification from weak channels.
\begin{theorem}[\cite{Wullschleger09}, from \WEC to statistically secure \OT]\label{thm:WEC TO OT IT}
    There exists an oracle aided protocol $\Pi$ such that the following holds: Given a $(\alpha,p,q)$-\WEC $C$, if $44(\alpha+p)\le 1-q$ then $\Pi^{C}(1^\pk)$ is a semi-honest statistically secure \OT.
\end{theorem}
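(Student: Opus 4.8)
The plan is to treat the $(\alpha,p,q)$-$\WEC$ $C$ as a noisy, leaky version of Rabin's erasure oblivious transfer and to amplify it, via classical information-theoretic tools applied over many independent invocations, into an ideal one-out-of-two $\OT$. Concretely, I would have $\Pi$ invoke $C$ a total of $N$ times, where $N$ is a sufficiently large polynomial in the security parameter chosen in terms of the gap $1 - q - 44(\alpha+p) > 0$. This yields $N$ independent instances in which $\Ac$ holds bits $O_A^1,\dots,O_A^N$, $\Bc$ holds $O_B^1,\dots,O_B^N$ with each coordinate independently erased with probability exactly $1/2$, $\Ac$ essentially oblivious to the erasure pattern (secrecy property~\ref{WEC:item:A}), $\Bc$'s non-erased bits agreeing with $\Ac$'s up to error $\alpha$ (agreement), and $\Bc$'s erased bits unpredictable to him up to advantage $q$ (secrecy property~\ref{WEC:item:B}).

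The middle of the argument is an amplification pipeline that cleans up the three imperfections. First, to handle the agreement error $\alpha$, $\Ac$ sends short reconciliation information (syndromes of a suitable error-correcting code, or the output of a few two-universal hashes) so that $\Bc$ can correct the bits he did receive; the code is chosen so that this leakage is controlled and does not appreciably help $\Bc$ on his erased coordinates. Second, to kill $\Bc$'s residual advantage $q$ on the erased coordinates, I would apply privacy amplification: $\Ac$ hashes blocks of coordinates with two-universal functions, driving $\Bc$'s advantage on the extracted value down to statistically negligible whenever those coordinates carry enough min-entropy from his point of view. Third, $\Ac$'s erasure-detection advantage $p$ is the source of receiver privacy and must stay small after combining; this is ensured by encoding $\Bc$'s choice only through which coordinates he declares ``known'' versus ``unknown,'' so that $\Ac$'s total distinguishing power over the declared partition remains bounded by roughly $N$ times a per-instance quantity governed by $p$.

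Finally, I would convert the resulting near-ideal erasure $\OT$ into one-out-of-two $\OT$ by the standard set-partition reduction: $\Bc$ announces two disjoint index sets, one consisting only of coordinates he knows and one in which he is missing sufficiently many coordinates, ordered according to his choice bit $\sigma$; $\Ac$ then sends $s_0$ and $s_1$ each masked by the extracted value of the corresponding set. By construction $\Bc$ recovers $s_\sigma$ (he knows every bit of that set, using the reconciliation) but lacks the entropy to recover $s_{1-\sigma}$ (by privacy amplification, giving sender privacy), while $\Ac$'s view of the announced partition is statistically independent of $\sigma$ up to the small residual detection advantage (giving receiver privacy). Since the adversaries are only semi-honest, it then suffices to exhibit the two simulators, which follow directly from these statistical-independence and unpredictability statements.

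The main obstacle will be the simultaneous quantitative control of all three imperfections during combining, rather than any single step in isolation. In particular, the reconciliation needed for correctness leaks information that directly eats into $\Bc$'s ignorance, so one must argue that after subtracting this leakage enough min-entropy survives for the extractor to succeed; symmetrically, $\Ac$'s per-instance advantage $p$ must not accumulate past the point where the partition hides $\sigma$. I expect the precise inequality $44(\alpha+p) \le 1-q$ to be exactly the threshold at which the gap between $\Bc$'s uncertainty ($1-q$) and the total usable leakage contributed by the error rate and $\Ac$'s detection advantage ($\alpha + p$, scaled by the combinatorial constant from the combiner) stays positive, guaranteeing that the amplification converges to a statistically secure $\OT$.
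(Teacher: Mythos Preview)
The paper does not prove this theorem; it is stated as a black-box result from \cite{Wullschleger09} (and the computational analogue from \cite{Wul07}) and is simply invoked as-is in the proofs of \cref{thm:DPIP-to-OT} and \cref{thm:CDPIP-to-OT}. There is therefore no ``paper's own proof'' to compare against.

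Your sketch is a reasonable high-level description of the amplification pipeline in Wullschleger's work---reconciliation to fix the agreement error, privacy amplification to kill $\Bc$'s residual advantage, and a set-partition reduction to convert erasure-style $\OT$ into one-out-of-two $\OT$---and your intuition that the constant $44$ arises from balancing the leakage of reconciliation against the remaining min-entropy is the right picture. That said, as a proof this is far from complete: the actual argument in \cite{Wullschleger09} requires carefully tracking how the three parameters evolve under specific combiners (and in particular why the per-instance erasure-detection advantage $p$ does not simply blow up linearly in $N$ as you suggest it might), and the constant $44$ comes from a concrete calculation there, not from a soft argument. For the purposes of this paper, though, no such detail is needed---the theorem is used as a citation, and your proposal would be over-proving.
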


The following computational version of \cref{thm:WEC TO OT IT} is implicit in \cite{Wullschleger09} and is based on the computational proof explicitly stated in \cite{Wul07} (see Section 6 in \cite{Wullschleger09} for discussion).   

\begin{theorem}[\cite{Wullschleger09,   Wul07}, from \CompWEC to computinally secure \OT]\label{thm:WEC TO OT Comp}
    There exists an oracle aided protocol $\Pi$ such that the following holds: Given a $(\alpha,p,q)$-\CompWEC protocol $\Lambda$, if $44(\alpha+p)\le 1-q$ then $\Pi^{\Lambda}$ is a semi-honest computational secure \OT.
\end{theorem}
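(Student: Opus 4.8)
The plan is to reuse verbatim the oracle-aided protocol $\Pi$ supplied by the information-theoretic amplification \cref{thm:WEC TO OT IT}, and to argue that its security analysis is black-box and efficiency-preserving, so that it lifts from the statistical to the computational setting in exactly the same way that \cref{cor:CompAWEC to CompWEC} lifts \cref{lemma:AWEC-to-WEC}. First I would unfold ``$\Pi^\Lambda$ is a semi-honest computationally secure $\OT$'' into three requirements: (i) correctness of the honest execution, and (ii)--(iii) the existence of a \ppt simulator whose output is computationally indistinguishable from the real view of a corrupted sender and of a corrupted receiver, respectively.

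For correctness, note that the random-erasure and agreement guarantees of a $\CompWEC$ are \emph{information-theoretic}: by \cref{def:WEC}, for every sufficiently large $\pk$ the channel $C_{\Lambda(\pk)}$ satisfies $\pr{O_B=\perp}=1/2$ and $\pr{O_A\ne O_B\mid O_B\ne\perp}\le\alpha$ with no restriction to efficient adversaries. Hence the part of Wullschleger's analysis that shows the honest parties jointly compute $f_{\OT}$ (up to negligible error) uses only these two properties and carries over unchanged, giving correctness of $\Pi^\Lambda$.

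The heart is the security reduction. I would invoke the fact --- explicit in \cite{Wul07} and implicit in \cite{Wullschleger09} --- that the simulator produced in the proof of \cref{thm:WEC TO OT IT} is \ppt and comes with two \emph{black-box} reductions: a \ppt oracle machine $\Ec_A$ such that any distinguisher $\Dc$ separating the real and simulated views of a corrupted sender yields an adversary $\Ec_A^{\Dc}$ violating the first ($p$-type) secrecy property of the underlying channel, and a \ppt oracle machine $\Ec_B$ such that any $\Dc$ separating the views of a corrupted receiver yields $\Ec_B^{\Dc}$ violating the second ($q$-type) secrecy property. Given this, the lifting is the same contradiction argument as in \cref{cor:CompAWEC to CompWEC}: if $\Pi^\Lambda$ were not a computationally secure $\OT$, there would be a \ppt $\Dc$ that for infinitely many $\pk$ distinguishes a real view from its simulation; feeding $\Dc$ into the appropriate $\Ec$ produces a \ppt adversary that, for infinitely many $\pk$, breaks the corresponding secrecy property of $C_{\Lambda(\pk)}$, contradicting that $\Lambda$ is an $(\alpha,p,q)$-$\CompWEC$. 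The numerical hypothesis $44(\alpha+p)\le 1-q$ is inherited unchanged from \cref{thm:WEC TO OT IT}.

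The main obstacle is justifying that Wullschleger's amplification really is expressible as such an advantage-preserving, black-box reduction rather than as a purely statistical estimate. The statistical proof combines $\mathrm{poly}(\pk)$ independent channel uses via repetition (for agreement amplification) and universal hashing / privacy amplification (for secrecy amplification), and in the information-theoretic setting the secrecy steps are argued through min-entropy and leftover-hash bounds that implicitly quantify over \emph{all} functions. To lift these, one must recast each such step as a hybrid over the channel invocations in which the challenge is embedded into a single invocation while the remaining ones are sampled by the reduction itself, so that the only place the channel's secrecy is used is through a call to $\Dc$; the efficiency of sampling the other invocations and of the hashing bookkeeping is what keeps $\Ec_A,\Ec_B$ polynomial-time, and the gap condition is what guarantees the extracted advantage exceeds the promised thresholds $p,q$ for infinitely many $\pk$. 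This recasting is precisely the content of the computational proof of \cite{Wul07}, which I would cite and adapt rather than re-derive.
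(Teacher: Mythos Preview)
The paper does not prove this theorem at all: it is stated as a cited result, with the text just before it noting that ``the following computational version \ldots\ is implicit in \cite{Wullschleger09} and is based on the computational proof explicitly stated in \cite{Wul07}.'' So there is no paper-side proof to compare against.

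Your proposal is a reasonable reconstruction of why the statement should hold, and it is in the spirit of the paper's brief justification: Wullschleger's amplification is black-box and efficiency-preserving, so the statistical argument lifts to the computational setting by the same style of contradiction argument used in \cref{cor:CompAWEC to CompWEC}. The one thing to be careful about is that you are asserting the existence of the \ppt black-box reductions $\Ec_A,\Ec_B$ as a fact from \cite{Wul07,Wullschleger09} rather than deriving them; that is exactly what the paper does too, so this is fine, but you should be clear that this is a citation and not something you are proving. Your closing paragraph about recasting the leftover-hash / privacy-amplification steps as hybrids is more detail than the paper gives, and is the right intuition, but for the purposes of this paper it suffices to cite the result as the authors do.
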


% \begin{definition}[Computational 1-out-of-2 Oblivious Transfer, semi-honest model]
% A protocol $\Pi=(\Ac,\Bc)$ is a semi-honest 1-out-of-2 $(\eps,\alpha,\beta)$-oblivious transfer (OT) protocol if the following holds. 

% The parties $\Ac$ and $\Bc$ run the protocol (in an honest manner) and    
% $\Ac$ outputs $X=(m_1,m_2)\in \zo\times\zo$ and has a view $U$ and $\Bc$ outputs $Y=(i,\hat{m})\in\zo\times\zo$ and has a view $V$, and following properties are satisfied:
% \begin{enumerate}
%     \item \textbf{Correctness:} 
%     $\pr{\hat{m}\neq m_i}<\eps.$ 
    
%     \item \textbf{A's Privacy:} For every adversary $\Dc$:
%     $\pr{\Dc(V)=m_{i-1}}<(1+\alpha)/2$
    
%     \item \textbf{B's Privacy:} For every adversary $\Dc$: $\pr{\Dc(U)=i}<(1+\beta)/2$  
% \end{enumerate}
% \end{definition}
\section{Oblivious Transfer from $\DP$ Inner Product}\label{sec:main_theorems}

\iffull

In this section, we state our main theorems. We first state and prove the information theoretic case in \cref{sec:DP TO OT IT}, and then prove the computational case in \cref{sec:DP TO OT Comp}.

\else

In this section, we state our main theorems. We first state and prove the information theoretic case in \cref{sec:DP TO OT IT}, and then state the computational case in \cref{sec:DP TO OT Comp} (the proof of this part is given in the full version \cite{HaitnerSMTC25}).

\fi

\subsection{Information-Theoretic Case}\label{sec:DP TO OT IT}
The following is our main theorem (for the information theoretic case) which shows that given a sufficiently accurate and private $\DP$ channel, we can construct a statistically secure semi-honest oblivious transfer protocol.

\begin{theorem}\label{thm:DPIP-to-OT}
There exist constants $c_1,c_2>0$ and an oracle-aided \ppt protocol $\Pi$ such that the following holds for large enough $n \in \bbN$ and for 
$\eps \leq \log^{0.9} n$, $\delta \leq \frac1{3n}$, and $\ell \leq e^{-c_1  \eps}  c_2\cdot n^{1/6}$:
    Let $C = ((X,U),(Y,V))$ be an $(\eps,\delta)$-\DP channel with independent $X,Y \la \oo^n$ that is $(\ell,0.999)$-accurate for the inner-product functionality (i.e., $\ppr{C}{\size{O-\ip{X,Y}} \leq \ell} \geq 0.999$).
    Then $\Pi^C$ is a semi-honest statistically secure $\OT$ protocol.
\end{theorem}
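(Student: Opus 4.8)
The plan is to use the channel $C$ as an oracle inside \cref{protocol:intro:DPIP-to-AWEC}, argue that the resulting no-input protocol realizes an $(\ell,\alpha,p,q)$-\AWEC for suitably small constants $\alpha,p,q$, and then compose \cref{lemma:AWEC-to-WEC} (turning an \AWEC into a \WEC) with Wullschleger's amplification \cref{thm:WEC TO OT IT} (turning a \WEC into a statistically secure \OT). The single free parameter is the erasure size $k$, which I would choose in the window $e^{\eps}\ell^2 \ll k \ll n^{1/3}$; the hypothesis $\ell \leq e^{-c_1\eps}c_2 n^{1/6}$ is exactly what makes this window nonempty, as it forces $e^{\eps}\ell^2 \ll (n/e^{\eps})^{1/3}$.

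Two of the four \AWEC requirements are immediate. Erasure occurs with probability exactly $1/2$, since Bob flips a fair coin $b$. For accuracy, in the non-erasure case $\widehat y = y$, so by linearity $O_A = \ip{x_{-r},y_{-r}}$ and $O_B = z - \ip{x_r,y_r}$, giving $\size{O_A - O_B} = \size{z - \ip{x,y}}$; the $(\ell,0.999)$-accuracy of $C$ then yields $\alpha = 0.001$. The ``erasures do erase'' bound on $q$ follows the reduction of \cref{sec:techniques:erasure-do-erase}: a predictor $\widehat{\Bc}$ of $O_A$ to within $1000\ell$ (given $\widehat V_B$ and conditioned on erasure) would, via the identity $O_A = \ip{X_{\overline{\cH}},\widetilde Y_{\overline{\cH}}} + \ip{X_{\cH},S}$ with $\cH$ a random size-$k/4$ subset and $S = \widetilde Y_{\cH}$ fresh and uniform, yield an estimator of $\ip{X_{\cH},S}$ to within $\sqrt{\size{\cH}/e^{\eps}} \gg \ell$ given only $(V_B,\cH,S,X_{-\cH})$ (the rest of $\widehat V_B$ being simulable). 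By the inner-product lower bound of \cite{McGregorMPRTV10}, which refutes the $\DP$ guarantee of $C$ as a function of $X$, no such estimator exists, so $q$ is a small constant.

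The main obstacle is the ``erasures are hidden'' property (bounding $p$): Alice cannot tell $Y_{-R}$ from the $k$-noisy $\widetilde Y_{-R}$. A standard hybrid over the $k$ flipped coordinates reduces a constant distinguishing advantage to a $\Theta(1/k)$ advantage against a single flip at a uniformly random index $I$. I would then apply the technical \cref{lemma:property1:prediction} with $Z = Y$, $W = V_A$ and $\Ac(R,\cdot) = \widehat{\Ac}((1,\ldots,1)-R,\cdot)$ to extract an efficient predictor $\Gc$ of $Y_I$ from $(I,Y_{-I},V_A)$ with $\pr{\Gc = -Y_I} \leq k^2/n$ and $\pr{\Gc = Y_I} \geq 1/k - k^2/n$. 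The delicate point is that a $\Theta(1/k)$ advantage sits far below the constant threshold at which distinguishing ordinarily refutes $\eps$-$\DP$; what rescues us is the \emph{asymmetry} of $\Gc$'s guarantee together with the fact that $\DP$ compares probabilities multiplicatively. Concretely, $\Gc$ contradicts \cref{prop:hard-to-guess-inf} (applied to the $\DP$ of Alice's channel view as a function of $Y$) as soon as $\pr{\Gc = Y_I} > e^{\eps}\pr{\Gc = -Y_I} + \delta$, i.e.\ once $1/k - k^2/n > e^{\eps}k^2/n + \delta$; with $\delta \leq 1/(3n)$ this is secured by the upper constraint $k \ll n^{1/3}$. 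The crux is \cref{lemma:property1:prediction} itself: its error bound $\pr{\Gc = -Y_I} \leq k^2/n$ follows from \cref{lem:distance-I}, which shows that for any randomized $F\colon \zo^n \to \zo$ only a $\Theta(k^2/n)$ fraction of coordinates $i$ can have $\size{\ex{F(R)\mid R_i=0} - \ex{F(R)\mid R_i=1}} \geq 1/4k$ (the per-coordinate sensitivities of $F$ having bounded sum of squares), applied to $F(R)=\Ac(R,Z_R,W)$ for each fixed $Z,W$; its success bound $\pr{\Gc = Y_I} \geq 1/k - k^2/n$ is obtained by pushing the assumed $\Theta(1/k)$ advantage through the observation that $Z_R = Z^{(I)}_R$ whenever $R_I = 0$.

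It remains to assemble the parameters. Feeding the small constants $\alpha = 0.001$, $p$, $q$ into \cref{lemma:AWEC-to-WEC} produces an $(\alpha',p',q')$-\WEC with $\alpha' = \alpha + 0.001$, $p' = p$ and $q' = 1/2 + 2(q+0.01)$. Choosing $c_1,c_2$ so that $p$ and $q$ are small enough constants makes Wullschleger's condition $44(\alpha'+p') \leq 1 - q'$ hold, and \cref{thm:WEC TO OT IT} then outputs a semi-honest statistically secure \OT, completing the proof. The genuinely nontrivial content is concentrated in the two secrecy bounds, and within those the hidden-erasure analysis --- hence \cref{lemma:property1:prediction} --- is where essentially all the difficulty lies.
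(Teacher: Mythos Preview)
The proposal is correct and follows essentially the same approach as the paper: build an $(\ell,0.001,0.001,0.001)$-\AWEC from $C$ via \cref{protocol:intro:DPIP-to-AWEC} (this is packaged as \cref{lemma:DPIP-to-AWEC}, whose proof in \cref{sec:DPIP_to_WAEC} matches your sketches of the two secrecy arguments), then apply \cref{lemma:AWEC-to-WEC} and \cref{thm:WEC TO OT IT}. One minor difference is that for the ``erasures do erase'' bound the paper invokes the constructive reconstruction result of \cite{HaitnerMST22} (\cref{fact:prev-rec}) rather than the non-constructive \cite{McGregorMPRTV10}; this does not matter for the information-theoretic \cref{thm:DPIP-to-OT} but is what allows the same lemma to also drive the computational \cref{thm:CDPIP-to-OT}.
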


To prove Theorem \ref{thm:DPIP-to-OT}, we will make use of the following technical Lemma.

\begin{lemma}\label{lemma:DPIP-to-AWEC}
        There exists a \ppt protocol $\Gamma=(\Ac,\Bc)$ such that for every $c_1,c_2,n,\eps,\delta,\ell$ and $C$ as in \cref{thm:DPIP-to-OT},
        the channel $\tilde{C}$ induced by the execution of $\Gamma^C$ is $(\ell, \alpha=0.001, p=0.001, q=0.001)$-\AWEC (\cref{def:AWEC}).
\remove{
	There exist constant $c_1,c_2 > 0$ and an oracle-aided \ppt protocol $\Gamma=(\Ac,\Bc)$ such that the following holds for large enough $n \in \bbN$ and for 
	$\eps \leq \log^{0.9} n$, $\delta \leq \frac1{3n}$, and $\ell \leq e^{-c_1  \eps}  c_2\cdot n^{1/6}$:
	Let $C = ((X,U),(Y,V))$ be an $(\eps,\delta)$-\DP channel with independent $X,Y \la \oo^n$ that is $(\ell,0.999)$-accurate for the inner-product functionality (i.e., $\ppr{C}{\size{O-\ip{X,Y}} \leq \ell} \geq 0.999$). Then the channel $\tilde{C}$ induced by the execution of $\Gamma^C$ is $(\ell, \alpha=0.001, p=0.001, q=0.001)$-\AWEC (\cref{def:AWEC}). 
}
    Furthermore, the proof is constructive in a black-box manner:
	\begin{enumerate}
		\item There exists an oracle-aided \ppt algorithm $\Act$ such that for every channel $C = ((X,U),(Y,V))$ and algorithm $\Ac$ violating the \AWEC secrecy property~\ref{AWEC:item:A} of $\tilde{C}$ (the channel of $\Gamma^C$), the following holds for $Y^*_i = \Act^{\Ac}(i,\: Y_{-i}, \: X, \: U)$:\label{item:privacy-of-Y}
		\begin{align*}
			\eex{i \la [n]}{\pr{Y^*_i = Y_i }} > e^{\eps} \cdot \eex{i \la [n]}{\pr{Y^*_i = -Y_i }} + \delta.
		\end{align*}
		
		%\Enote{The above two items suffice for breaking $(\eps,\delta)$-DP because if we let $p = \pr{Y^*_i = Y_i}$ and $q = \pr{Y^*_i = -Y_i}$ then $p \leq e^{\eps} q + \delta$, and if $q \geq e^{-\eps} \delta$ then we get that
			%$$\frac{p}{p+q} \leq \frac{e^{\eps} q + \delta}{e^{\eps} q + \delta + q} = \frac{1}{1 + \frac{1}{e^{\eps} + \delta/p}} \leq  \frac{1}{1 + \frac{1}{2\cdot e^{\eps} }} \leq \frac{2\cdot e^{\eps}}{2\cdot e^{\eps}+1}.$$}
		
		%\item \Enote{We should use Theorem 5.1 from our previous paper (or Theorem 6.2 directly).}

		\item There exists an oracle-aided \ppt algorithm $\Bct$ such that for every channel $C = ((X,U),(Y,V))$ and algorithm $\Bc$ violating the \AWEC secrecy property~\ref{AWEC:item:B} of $\tilde{C}$ (the channel of $\Gamma^C$),  the following holds for $X^*_i = \Bct^{\Bc, C}(i,\: X_{-i}, \: Y, \: V)$:\label{item:privacy-of-X}
		\begin{align*}
			\eex{i \la [n]}{\pr{X^*_i = X_i }} > e^{\eps} \cdot \eex{i \la [n]}{\pr{X^*_i = -X_i }} + \delta.
		\end{align*} 
	\end{enumerate}
\end{lemma}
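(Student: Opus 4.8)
The plan is to take $\Gamma$ to be exactly \cref{protocol:intro:DPIP-to-AWEC} and verify the four \AWEC properties of the induced channel $\tilde C$ one by one, establishing the two secrecy requirements through the black-box reductions $\Act$ and $\Bct$ promised by the ``furthermore'' clause. Random erasure is immediate, since $\Bc$ sets $O_B = \bot$ precisely when its uniform bit $b$ equals $1$, so $\pr{O_B = \bot} = 1/2$. Accuracy follows from the linearity identity underlying the construction: conditioned on $O_B \ne \bot$ (i.e.\ $b = 0$) we have $O_A = \ip{X_{-R},Y_{-R}}$ and $O_B = Z - \ip{X_R,Y_R}$, so $O_A - O_B = \ip{X,Y} - Z$; hence $\pr{\size{O_A - O_B} > \ell \mid O_B \ne \bot} = \pr{\size{Z - \ip{X,Y}} > \ell} \le 0.001$ by the $(\ell,0.999)$-accuracy of $C$, giving $\alpha = 0.001$.

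For \cref{item:privacy-of-X} (Secrecy~\ref{AWEC:item:B}) I would argue the contrapositive: from an adversary $\Bc$ that, under erasure, estimates $O_A = \ip{X_{-R},\tY_{-R}}$ within $1000\ell$ with probability exceeding $q = 0.001$, build a single-index predictor $\Bct$ for $X$ that contradicts \cref{prop:hard-to-guess-inf} applied to the \DP mechanism $\Mc'_Y$ producing Bob's view $V$ from $X$. The engine is the algebraic identity of \cref{sec:techniques:erasure-do-erase}: writing $\cH = \set{I_1,\dots,I_k} \cap \set{i : R_i = 0}$ and $S = \tY_\cH$ (which stays uniform and independent of $(X,V)$ even given $\cH$), the quantity $T = \Bc(\widehat{V}_B) - \ip{X_{\bar\cH},\tY_{\bar\cH}}$ satisfies $T - \ip{X_\cH,S} = \Bc(\widehat{V}_B) - \ip{X_{-R},\tY_{-R}}$, so a $1000\ell$-accurate estimate of $O_A$ yields a $\sqrt{\size{\cH}/e^\eps}$-accurate estimate of $\ip{X_\cH,S}$ (this needs $\ell \ll \sqrt{k/e^\eps}$, which is exactly where the lower bound $k \gg e^\eps\ell^2$ enters). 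Since $\Bct$ has oracle access to $C$, it can resample $R,I_1,\dots,I_k,\tY$ conditioned on $(\cH,S)$ to feed $\Bc$ a correctly distributed $\widehat{V}_B$; the inner-product-to-prediction argument of \cite{McGregorMPRTV10} (extended to subsets $\cH$) then converts this inner-product estimator into the required single-index predictor $X^*_i$.

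For \cref{item:privacy-of-Y} (Secrecy~\ref{AWEC:item:A}) --- the main obstacle --- the goal is to show $\Ac$ cannot detect erasures, i.e.\ cannot distinguish $\widehat{Y}_{-R} = Y_{-R}$ from the $k$-bit-noised $\tY_{-R}$ given $V_A = (X,U)$. I would first run a standard hybrid argument over the $k$ flipped coordinates, turning a detection advantage of $0.001$ into a single-random-coordinate advantage of $\Omega(1/k)$ (absorbing the constant $p$); concretely, after hardcoding the other $k-1$ random flips into $\Ac$ --- which $\Act$ can do using its oracle access to $\Ac$ --- one obtains an algorithm meeting the hypothesis of \cref{lemma:property1:prediction:overview} with $Z = Y$, $W = V_A$, and the coordinate-complement substitution $\Ac(R,\cdot) \mapsto \widehat{\Ac}(\mathbf{1}-R,\cdot)$. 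That lemma returns a predictor $\Gc$ with $\pr{\Gc(I,Y_{-I},V_A) = -Y_I} \le k^2/n$ and $\pr{\Gc(I,Y_{-I},V_A) = Y_I} \ge 1/k - k^2/n$, and I set $\Act = \Gc$.

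The closing step is that this predictor contradicts the privacy of $Y$ (\cref{prop:hard-to-guess-inf} for $\Mc_X$): because \DP is a \emph{multiplicative} guarantee while we hold a genuine \emph{additive} upper bound $k^2/n$ on the wrong-guess probability, the inequality $\pr{\Gc = Y_I} > e^\eps \cdot \pr{\Gc = -Y_I} + \delta$ is forced precisely when $1/k - k^2/n > e^\eps \cdot k^2/n + \delta$, i.e.\ (using $\delta \le \frac{1}{3n}$) when $k \ll n^{1/3}$ up to constants and the $e^\eps$ factor; combined with $k \gg e^\eps\ell^2$ from Secrecy~B this is feasible exactly when $\ell \ll n^{1/6}/e^{c\eps}$, matching the theorem's hypothesis. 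I expect essentially all the difficulty to live inside \cref{lemma:property1:prediction:overview}, whose advantage $1/k$ sits far below the constant advantage that would trivially break $\eps$-\DP: the whole reason for erasing at a \emph{random} index and revealing only the \emph{subset} $Y_{-R}$ (see \cref{sec:techniques:erasures-are-hidden}) is to defeat the ``local'' and ``global'' estimators that would otherwise distinguish without violating privacy, and the real work is to prove the lemma's two sub-claims --- that randomizing $R_I$ barely shifts $\Ac$'s expectation (via the statistical-distance bound of \cref{lem::distance-I:informal}) while flipping $Z_I$ shifts it by $\ge 1/2k$ with probability $\ge 1/k$.
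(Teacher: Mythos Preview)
Your proposal is correct and follows essentially the same route as the paper: \cref{protocol:intro:DPIP-to-AWEC} for $\Gamma$, the linearity identity for accuracy, the hybrid argument followed by \cref{lemma:property1:prediction} for Secrecy~\ref{AWEC:item:A}, and the algebraic identity plus inner-product-to-single-bit reduction for Secrecy~\ref{AWEC:item:B}, with the two parameter constraints on $k$ meeting exactly at $\ell \ll e^{-c\eps} n^{1/6}$. The only small slip is that for the \ppt reduction $\Bct$ you must invoke the \emph{constructive} version of the inner-product lower bound due to \cite{HaitnerMST22} (appearing in the paper as \cref{fact:prev-rec}) rather than the non-constructive argument of \cite{McGregorMPRTV10}, and in the hybrid step one hardcodes a random number $j-1 \in \{0,\dots,k-1\}$ of flips (not always $k-1$), after which the paper applies the prediction lemma with $Z = \tY^{J-1}$ and then passes back to $Y$ using $\pr{i \notin \{I_1,\dots,I_{J-1}\}} \ge 0.9$.
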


The proof of \cref{lemma:DPIP-to-AWEC} is given in \cref{sec:DPIP_to_WAEC}. But first we use \cref{lemma:DPIP-to-AWEC} to prove \cref{thm:DPIP-to-OT}. 

\begin{proof}[Proof of \cref{thm:DPIP-to-OT}]
By \cref{lemma:DPIP-to-AWEC} there exists a \ppt protocol $\Gamma\coloneqq\Gamma^C$ such that channel $\tilde{C}\coloneqq C_{\Gamma(\pk)}$ is a $(\ell, \alpha=0.001, p=0.001, q=0.001)$-\AWEC. By \cref{lemma:AWEC-to-WEC} (and \cref{prop:hard-to-guess-inf}) there exists a \ppt protocol $\Lambda\coloneqq\Lambda^{\tilde{C}}$ such that the channel $\hhC=C_{\Lambda(\pk)}$ is a $(\alpha'=0.002,\: p' = 0.001 ,\:  q' = 1/2 + 0.022))$-\WEC. Since $44(\alpha'+p')<1-q'$, by  \cref{thm:WEC TO OT IT} there exists a \ppt protocol $\Pi$, such that $\Pi^{\hhC}$ is a semi-honest statistically secure \OT, concluding the proof.
\end{proof}

\iffull

\subsection{Computational Case}\label{sec:DP TO OT Comp}

In this section, we state and prove our results for the computational case. We show that a \CDP (computational differential private) protocol that estimates the inner product well implies a semi-honest computationally secure oblivious transfer protocol.

\begin{theorem}[Restatement of \cref{thm:intro:main}]\label{thm:CDPIP-to-OT}
There exist constant $c_1,c_2 > 0$ and an oracle-aided \ppt protocol $\Pi$ such that the following holds for large enough $n \in \bbN$ and for 
$\eps \leq \log^{0.9} n$, $\delta \leq \frac1{3n}$, and $\ell \leq e^{-c_1  \eps}  c_2\cdot n^{1/6}$:
    Let $\Psi$ be an $(\eps,\delta)$-\CDP protocol that is $(\ell,0.999)$-accurate for the inner-product functionality. 
    Then $\Pi^\Psi$ is a semi-honest computationally secure oblivious transfer protocol.
\end{theorem}

By the result of \cite{GoldreichMW87}, in the computational setting, we can ``compile" any semi-honest computational oblivious transfer protocol into a protocol that is secure against any \ppt (malicious) adversary (assuming one-way functions). We state this formally in the following corollary.

\begin{corollary}\label{cor:mal OT}
   Let $\eps,\delta,\ell$ be as in \cref{thm:CDPIP-to-OT}. If there exists a protocol $\Psi$ that is $(\eps,\delta)$-\CDP  and is $(\ell,0.999)$-accurate for the inner-product functionality, then there exists a computationally secure oblivious transfer protocol. 
\end{corollary}
\begin{proof}[Proof of \cref{cor:mal OT}]
By \cref{thm:CDPIP-to-OT}, there exists a semi-honest computational secure oblivious transfer protocol $\Pi$. Note that by \cite{ImpagliazzoLu89}, $\Pi$ implies the existence of one-way functions and by \cite{GoldreichMW87}, using the one-way function, we can compile $\Pi$ into a (computational) \OT secure against arbitrary adversaries.
\end{proof}

\paragraph{Proof of \cref{thm:CDPIP-to-OT}.}

 In order to use \cref{lemma:DPIP-to-AWEC}, and similar to \cite{HaitnerMST22}, we first convert the protocol $\Psi$ into a (no input) protocol such that the \CDP-channel it induces, is uniform and accurately estimates the inner-product functionality. Such a transformation is simply the following protocol that invokes $\Psi$ over uniform inputs, and each party locally outputs its input. 
\begin{protocol}[$\hPsi^{\Psi} = (\hAc,\hBc)$]\label{prot:EDPtoSV}
	\item Common input: $1^\kappa$.
	%	\item Oracle: protocol $\Psi=(\Ac,\Bc)$.
	\item Operation:
	\begin{enumerate}
		
		\item $\hAc$ samples $x \gets \oo^{\pn(\kappa)}$ and $\hBc$ samples $y\gets \oo^{\pn(\kappa)}$. 
		
		\item The parties interact in \remove{a random execution protocol }$\Psi(1^\kappa)$, with $\hAc$ playing the role of $\Ac$ with private input $x$, and $\hBc$ playing the role of $\Bc$ with private input $y$.
		
		\item $\hAc$ locally outputs $x$ and $\hBc$ locally outputs $y$. 
	\end{enumerate}
\end{protocol}

Let $C$ be the channel ensemble induced by $\hPsi$, letting its designated output (the function $\out$) be the designated output of the embedded execution of $\Psi$. The following fact is immediate by definition.

\begin{proposition}\label{prop:EDP to SV}
	The channel ensemble $C$ is $(\eps,\delta)$-$\CDP$, and has the same accuracy for computing the inner product as protocol $\Psi$ has.
\end{proposition}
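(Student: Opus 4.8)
The plan is to verify the two assertions of the proposition directly against the relevant definitions, since the channel $C$ induced by $\hPsi$ merely repackages a random execution of $\Psi$ on uniform inputs. Recall that in \cref{prot:EDPtoSV} the parties sample $x,y\gets\oo^{\pn(\pk)}$ independently, run $\Psi(1^\pk)$ on $(x,y)$, and locally output their own inputs. Thus $C=\set{C_{(X_\pk,U_\pk),(Y_\pk,V_\pk)}}_{\pk\in\N}$ where $X_\pk=x$ and $Y_\pk=y$ are the two inputs (independent and uniform over $\oo^{\pn(\pk)}$), $U_\pk$ is the view of $\hAc$ (equivalently, the view $\V^\Ac(x,y)(\pk)$ of $\Ac$ in $\Psi$), $V_\pk$ is the view of $\hBc$, and the designated output of $C$ is set equal to the designated output of the embedded $\Psi$-execution.

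For the \CDP claim I would instantiate \cref{def:CDPChannel}. Given an ensemble $\set{x_\pk\in\Supp(X_\pk)}_{\pk\in\N}$, define the mechanism ensemble $\set{\Mc_{x_\pk}}_{\pk\in\N}$ by letting $\Mc_{x_\pk}(y)$ output the view $\V^\Ac(x_\pk,y)(\pk)$ of $\Ac$ in a random execution of $\Psi(1^\pk)$ with $\Ac$'s input fixed to $x_\pk$ and $\Bc$'s input set to $y$. Conditioned on $X_\pk=x_\pk$ we then have $(Y_\pk,U_\pk)\equiv(Y_\pk,\Mc_{x_\pk}(Y_\pk))$, and hence $(X_\pk,Y_\pk,U_\pk)\equiv(X_\pk,Y_\pk,\Mc_{X_\pk}(Y_\pk))$ as required. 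It then remains to check that $\set{\Mc_{x_\pk}}_{\pk\in\N}$ is an $(\eps,\delta)$-\CDP mechanism ensemble in the sense of \cref{def:ComMech}: for any \ppt distinguisher $\Dc$, any neighboring $y,y'$, and large enough $\pk$, the value $\pr{\Dc(\Mc_{x_\pk}(y))=1}$ equals $\pr{\Dc(\V^\Ac(x_\pk,y)(\pk))=1}$, so the inequality $\pr{\Dc(\Mc_{x_\pk}(y))=1}\le e^{\eps}\pr{\Dc(\Mc_{x_\pk}(y'))=1}+\delta$ is exactly the privacy of $\Ac$'s view guaranteed by the $(\eps,\delta)$-\CDP protocol $\Psi$ (\cref{def:DP}, with $x=x_\pk$ held fixed). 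The symmetric construction $\Mc'_{y_\pk}(x)=\V^\Bc(x,y_\pk)(\pk)$ handles the $V_\pk$ side. Since $X_\pk,Y_\pk$ are moreover independent and uniform over $\oo^{\pn(\pk)}$, the channel is uniform, as will be needed when applying \cref{lemma:DPIP-to-AWEC}.

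For the accuracy claim I would appeal to \cref{def:accurate-func}. Because $\hPsi$ runs $\Psi$ on inputs $X_\pk,Y_\pk$ drawn from exactly the distribution defining $\Psi$'s accuracy (uniform over $\oo^{\pn(\pk)}$) and uses the embedded $\Psi$-output as the designated output $O$ of $C$, the event $\size{O-\ip{X_\pk,Y_\pk}}\le\ell$ has the same probability under a random execution of $C$ as under a random execution of $\Psi$ on uniform inputs. Hence $C$ inherits the $(\ell,0.999)$-accuracy of $\Psi$ verbatim.

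I do not anticipate a real obstacle here, as the statement holds essentially by unwinding definitions; the only care needed is bookkeeping. Concretely, one must align the channel-level notion of privacy (a DP mechanism for the view of each party as a function of the other party's data, with that party's own input hardwired) with the protocol-level notion (privacy of a party's view under neighboring inputs of its opponent), and observe that \cref{def:ComMech} places no efficiency requirement on the mechanism itself---only on the distinguishers---so that defining $\Mc_{x_\pk}$ as an internal run of the \ppt protocol $\Psi$ is unproblematic.
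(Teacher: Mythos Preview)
Your proposal is correct and is precisely the definitional unwinding the paper has in mind; the paper does not give a proof at all, stating only that the fact ``is immediate by definition.'' Your write-up is a faithful expansion of that remark.
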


We first prove the computational version of \cref{lemma:DPIP-to-AWEC}.
\begin{claim}\label{claim:Comp DP to AWEC}
    There exist constant $c_1,c_2 > 0$ and an oracle-aided \ppt protocol $\Gamma=(\Ac,\Bc)$ such that the following holds for large enough $n \in \bbN$ and for 
	$\eps \leq \log^{0.9} n$, $\delta \leq \frac1{3n}$, and $\ell \leq e^{-c_1  \eps}  c_2\cdot n^{1/6}$:
    Let $\Psi$ be an $(\eps,\delta)$-\CDP protocol that is $(\ell,0.999)$-accurate for the inner-product functionality. Then the protocol $\Gamma^{\Psi}$ is an  $(\ell, \alpha=0.001, p=0.001, q=0.001)$-\CompAWEC protocol.
\end{claim}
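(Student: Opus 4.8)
The plan is to reduce the computational claim to its information-theoretic counterpart \cref{lemma:DPIP-to-AWEC}, exploiting the fact that the latter is constructive in a black-box manner. First I would fix the protocol: let $\Gamma_0$ be the oracle-aided \ppt protocol (with its constants $c_1,c_2$) promised by \cref{lemma:DPIP-to-AWEC}, and take the protocol of the claim to be $\Gamma_0$ with its channel oracle instantiated by $\hPsi^{\Psi}$ (\cref{prot:EDPtoSV}); that is, $\Gamma^\Psi = \Gamma_0^{C}$ for $C$ the channel ensemble of $\hPsi^\Psi$. By \cref{prop:EDP to SV}, for each $\pk$ the channel $C_\pk$ is a uniform $(\eps,\delta)$-\CDP channel that is $(\ell,0.999)$-accurate for the inner product, so the hypotheses of \cref{lemma:DPIP-to-AWEC} hold for the fixed channel $C_\pk$. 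The two non-secrecy requirements of \AWEC, random erasure ($\pr{O_B=\perp}=1/2$) and accuracy ($\pr{\size{O_A-O_B}>\ell \mid O_B\ne\bot}\le 0.001$), are information-theoretic and depend only on the structure of $\Gamma_0$ and the $(\ell,0.999)$-accuracy of $C_\pk$; hence they hold for every $\pk$ by applying \cref{lemma:DPIP-to-AWEC} to $C_\pk$, with no computational content.

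The core of the argument is to upgrade the two secrecy properties to the computational setting by contradiction. For secrecy property~\ref{AWEC:item:A}, suppose a \ppt $\Dc$ violates it for infinitely many $\pk$. For each such $\pk$, item~\ref{item:privacy-of-Y} of \cref{lemma:DPIP-to-AWEC} hands us the \ppt predictor $\Act^{\Dc}$ with $Y^*_i=\Act^{\Dc}(i,Y_{-i},X,U)$ satisfying $\eex{i\la[n]}{\pr{Y^*_i=Y_i}} > e^{\eps}\cdot\eex{i\la[n]}{\pr{Y^*_i=-Y_i}}+\delta$. The key observation is that $(X,U)$ is itself an $(\eps,\delta)$-\CDP function of $Y$: since $(X,Y,U)\equiv(X,Y,\Mc_X(Y))$ with each $\Mc_x$ being $(\eps,\delta)$-\CDP and $X$ independent of $Y$, averaging the defining \DP inequality over $X$ shows $Y\mapsto(X,\Mc_X(Y))$ is $(\eps,\delta)$-\CDP. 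Treating $\Act^{\Dc}$ as a poly-size circuit family (legitimate since $\Dc$ is \ppt), \cref{prop:hard-to-guess-comp} applied to this \CDP function of $Y$ gives, for all large enough $\pk$ and every $i$, $\pr{Y^*_i=Y_i}\le e^{\eps}\pr{Y^*_i=-Y_i}+\delta$; averaging over $i$ contradicts the strict inequality above for any sufficiently large $\pk$ among the infinitely many bad ones.

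Secrecy property~\ref{AWEC:item:B} is symmetric, using item~\ref{item:privacy-of-X}: a \ppt $\Dc$ violating it yields through $\Bct^{\Dc,C}$ a predictor $X^*_i=\Bct^{\Dc,C}(i,X_{-i},Y,V)$ with $\eex{i\la[n]}{\pr{X^*_i=X_i}} > e^{\eps}\cdot\eex{i\la[n]}{\pr{X^*_i=-X_i}}+\delta$, which I would contradict via \cref{prop:hard-to-guess-comp} applied to $(Y,V)$ as an $(\eps,\delta)$-\CDP function of $X$. The one point requiring care here is the oracle access to $C$ inside $\Bct^{\Dc,C}$: because $C$ is the channel of the efficiently sampleable $\hPsi^\Psi$, the algorithm $\Bct$ can simulate its channel calls internally, so the composed predictor remains a poly-size circuit family and the proposition applies.

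The main obstacle, and where I would invest the most care, is the quantifier bookkeeping together with the two \CDP-function reductions: verifying rigorously that $(X,U)$ (resp.\ $(Y,V)$) is an $(\eps,\delta)$-\CDP function of $Y$ (resp.\ $X$) --- which involves passing from the worst-case-$x$ (resp.\ worst-case-$y$) \CDP guarantee of the channel's mechanisms to the averaged-over-$X$ (resp.\ over-$Y$) bound --- and matching the adversary's ``infinitely many $\pk$'' success against the ``all sufficiently large $\pk$'' guarantee of \cref{prop:hard-to-guess-comp}, so that a single bad (yet large) $\pk$ yields a genuine contradiction. Everything else is a direct black-box invocation of the information-theoretic lemma.
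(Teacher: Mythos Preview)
Your proposal is correct and follows essentially the same approach as the paper: instantiate $\Gamma$ via $\hPsi^\Psi$ and \cref{lemma:DPIP-to-AWEC}, observe that erasure and accuracy are information-theoretic, and for each secrecy property argue by contradiction using the constructive predictors $\Act^{\Dc}$ (resp.\ $\Bct^{\Dc,C}$) together with \cref{prop:hard-to-guess-comp}. You are in fact somewhat more explicit than the paper on two points the paper glosses over---why $(X,U)$ is an $(\eps,\delta)$-\CDP function of $Y$ (and symmetrically), and why the channel oracle inside $\Bct$ can be efficiently simulated---but these are exactly the right details to fill in.
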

\begin{proof}[Proof of \cref{claim:Comp DP to AWEC}]
Recall that $\Psi$ is an $(\eps,\delta)$-$\DP$ protocol that is $(\ell,0.999)$-acccurate for the inner-product functionality for 
	$\eps \leq \log^{0.9} n$, $\delta \leq \frac1{3n}$, and $\ell \leq e^{-c_1  \eps}  c_2\cdot n^{1/6}$. By \cref{prop:EDP to SV}, $\hPsi\coloneqq\hPsi^\Psi$ induces a channel ensemble $C=\set{C_\pk=((X_\pk,U_\pk),(Y_\pk,V_\pk))}_{\pk\in\N}$ that is $(\eps,\delta)$-\CDP and $(\ell,0.999)$-acccurate for the inner-product functionality. Let $\Gamma$ be the \ppt protocol guarrented by \cref{lemma:DPIP-to-AWEC}, we now claim that $\Gamma^{\Psi}(\pk)\coloneqq\Gamma^{C_\pk}(\pk)$ is an $(\ell, \alpha=0.001, p=0.001, q=0.001)$-\CompAWEC protocol. %That is, the ensemble $\tC=\set{\tC_\pk=(X,V)(}_{\pk\in\N}$
Assume towards contradiction that this does not hold, then by \cref{lemma:DPIP-to-AWEC}, there exists a \ppt $\Dc$, such that for infinity many $\pk\in\N$, $\Dc$ contradicts one of the two secrecy properties of $\CompAWEC$. Fix such $\pk\in\N$ and omit it from the notation when clear from the context, and without loss of generality, assume that $\Dc$ contradicts the first secrecy property of $\CompAWEC$ (the case where $D$ contradicts the second property is essentially identical). By the first item of \cref{lemma:DPIP-to-AWEC}, there exists a \ppt algorithm $\tilde{\Ac}$ such that 
$Y^*_i = \Act^{\Dc}(i,\: Y_{-i}, \: X, \: U)$  and it holds that:
\begin{align*}
			\eex{i \la [n]}{\pr{Y^*_i = Y_i }} > e^{\eps} \cdot \eex{i \la [n]}{\pr{Y^*_i = -Y_i }} + \delta.
\end{align*}
Thus, we get a contradiction since by \cref{prop:hard-to-guess-comp}, algorithm $\Act^{\Dc}$ breaks the \CDP property of $C$.  
\end{proof}

\begin{proof}[Proof of \cref{thm:CDPIP-to-OT}]
Recall that $\Psi$ is an $(\eps,\delta)$-$\DP$ protocol that is $(\ell,0.999)$-acccurate for the inner-product functionality. By \cref{claim:Comp DP to AWEC} there exists a \ppt protocol $\Gamma$ such that $\Gamma\coloneqq\Gamma^\Psi$ is an $(\ell, \alpha=0.001, p=0.001, q=0.001)$-\CompAWEC protocol. By \cref{cor:CompAWEC to CompWEC}, there exists a \ppt protocol $\Lambda$ such that $\Lambda\coloneqq \Lambda^\Gamma$ is $(\epsilon'=\epsilon+0.001,\: p' = p ,\:  q' = 1/2 + 2(q+0.01))$-\CompWEC. Finally, since $44(\eps'+p')<1-q'$ by  \cref{thm:WEC TO OT Comp} there exists a \ppt protocol $\Pi$, such that $\Pi^{\Lambda}$ is a semi-honest computationally secure \OT, as required.
\end{proof}

\else

\subsection{Computational Case}\label{sec:DP TO OT Comp}

In this section, we state our results for the computational case. We show that a \CDP (computational differential private) protocol that estimates the inner product well implies a semi-honest computationally secure oblivious transfer protocol.

\begin{theorem}[Restatement of \cref{thm:intro:main}]\label{thm:CDPIP-to-OT}
There exist constant $c_1,c_2 > 0$ and an oracle-aided \ppt protocol $\Pi$ such that the following holds for large enough $n \in \bbN$ and for 
$\eps \leq \log^{0.9} n$, $\delta \leq \frac1{3n}$, and $\ell \leq e^{-c_1  \eps}  c_2\cdot n^{1/6}$:
    Let $\Psi$ be an $(\eps,\delta)$-\CDP protocol that is $(\ell,0.999)$-accurate for the inner-product functionality. 
    Then $\Pi^\Psi$ is a semi-honest computationally secure oblivious transfer protocol.
\end{theorem}

By the result of \cite{GoldreichMW87}, in the computational setting, we can ``compile" any semi-honest computational oblivious transfer protocol into a protocol that is secure against any \ppt (malicious) adversary (assuming one-way functions). We state this formally in the following corollary.

\begin{corollary}\label{cor:mal OT}
   Let $\eps,\delta,\ell$ be as in \cref{thm:CDPIP-to-OT}. If there exists a protocol $\Psi$ that is $(\eps,\delta)$-\CDP  and is $(\ell,0.999)$-accurate for the inner-product functionality, then there exists a computationally secure oblivious transfer protocol. 
\end{corollary}

The proof of \cref{thm:CDPIP-to-OT}, which relies on \cref{lemma:DPIP-to-AWEC} (and particularly on the fact that its proof is constructive), is given in the full version \cite{HaitnerSMTC25}.

\fi

\newcommand{\GenRand}{\MathAlgX{GenRand}}
\newcommand{\GenView}{\MathAlgX{GenView}}

\section{\AWEC From \DP Inner Product (Proof of \cref{lemma:DPIP-to-AWEC})}\label{sec:DPIP_to_WAEC}

In this section, we show how to implement AWEC (\cref{def:AWEC}) from an $(\eps,\delta)$-DP channel that is accurate enough for the inner product functionality. We do that using a \ppt protocol and a constructive security proof.

%$C = ((X,U = (U',O)),(Y,V = (V',O)))$ with independent $X,Y \la \oo^n$ that is 
%$\ell(n)$-accurate for the inner product functionality (i.e., $\ppr{C}{\size{O-\ip{X,Y}} \leq \ell(n)} \geq 0.99$), for $\ell(n) = \tilde{\Theta}(n^{1/6})$. Importantly, our security proof is constructive. 

%\Cnote{we can set $m=O(\frac{(n\cdot\frac{e^\varepsilon+1}{e^\varepsilon})^{2/5}}{\log^{2/5}(n\cdot\frac{e^\varepsilon+1}{e^\varepsilon})})$, $k=O(m^{1/2})=O(\frac{n^{1/5}}{\log^{1/5}n})$, $\ell=O(k^{1/2})=O(\frac{n^{1/10}}{\log^{1/10}n})$}

The following protocol is used to prove \cref{lemma:DPIP-to-AWEC}.

\begin{protocol}[Protocol $\Pi = (\Ac,\Bc)$]\label{protocol:DPIP-to-AWEC}
	\item Oracle access: An $(\eps,\delta)$-DP channel $C =((X,U),(Y,V))$ with $X,Y \la \oo^n$\remove{ (i.e., \emph{uniform} channel)} that is $(\ell,0.999)$-accurate for the \emph{inner-product} functionality.
	\item Operation:
	\begin{enumerate}
		%\item  $\Ac$ samples $x\gets \mon$, and   $\Bc$ samples $y\gets \mon$.
		
		\item Sample $(x,u), (y,v) \la C$. $\Ac$ gets $(x,u)$ and $\Bc$ gets $(y,v)$. 
		
		\item  $\Ac$ samples  $r \la \zo^n$ and sends $(r,x_{r} = \set{x_i \colon r_i =1})$ to $\Bc$.
		
		\item $\Bc$ samples a random bit $b \la \zo$ and acts as follows:
		
		\begin{enumerate}
			
			\item If $b=0$, it sends $y_{-r}= \set{y_i \colon r_i =0}$ to $\Ac$ and outputs $o_B = \out(v) - \ip{x_r, y_r}$.
			
			\item~\label{step: add noise} Otherwise ($b=1$), it performs the following steps:\label{B_steps_in_abort}
			
			\begin{itemize}
				\item Sample $k$ uniformly random indices $i_1,\ldots,i_k \la [n]$, where $k = \floor{e^{\lambda_1 \eps} \cdot \lambda_2 \cdot \ell^2}$ for constants $\lambda_1,\lambda_2>0$ (to be determined later by the analysis in \cref{eq:lamdas}).
				\item Compute $\ty = (\ty_1,\ldots,\ty_n)$ where $\ty_i \la \begin{cases} \cU_{\oo} & i \in \set{i_1,\ldots,i_k} \\ y_i & \text{otherwise} \end{cases}$,
				\item Send $\ty_{-r}= \set{\ty_i \colon r_i =0}$ to $\Ac$, and output $o_B = \perp$.
				%\item Output $o_B = \perp$.
			\end{itemize}

		\end{enumerate}
		
		\item  Denote by $\hy_{-r}$ the message $\Ac$ received from $\Bc$. Then $\Ac$ outputs $o_A =  \ip{x_{-r}, \hy_{-r}}$.

	\end{enumerate}
\end{protocol}

In the following, let $\Pi$ be \cref{protocol:DPIP-to-AWEC},
let $C =((X,U),(Y,V))$ be a uniform channel that is $(\ell,0.999)$-accurate for the \emph{inner-product} functionality, let $\tC = ((O_A,V_A),(O_B,V_B))$ be the channel that is induced by $\Pi^C$ (i.e., the parties' outputs and views in the execution of $\Pi$ with oracle access to $C$), and let $R$, $\tY$, $\hY$, $I_1,\ldots,I_k$ be the random variables of the values of $r, \ty, \hy, i_1,\ldots,i_k$ in the execution of $\Pi^C$ (recall that the view of $\Bc$ in the execution is $V_B = (Y, V, R, X_R, I_1,\ldots,I_k, \tY)$, and the view of $\Ac$ is $V_A = (X, U, R, \hY_{-R})$).

Recall that to prove \cref{lemma:DPIP-to-AWEC}, we need to prove that the channel $\tC$ satisfies the accuracy and secrecy properties of \AWEC (\cref{def:AWEC}), and in addition, the secrecy guarantees are constructive as stated in Properties \ref{item:privacy-of-Y}-\ref{item:privacy-of-X} of Lemma~\ref{lemma:DPIP-to-AWEC} (i.e., a violation of at least one secrecy guaranty results with an efficient privacy attack on the DP channel $C$).

We first start with the easy part, which is to prove the accuracy guarantee of $\tC$.

\begin{claim}[Accuracy of $\tC$]
	It holds that
	\begin{align*}
		\pr{\size{O_A - O_B} > \ell \mid O_B \neq \bot} < 0.001. 
	\end{align*}
\end{claim}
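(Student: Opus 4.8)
The plan is to reduce the agreement error to the accuracy error of the underlying channel $C$ by exploiting linearity of the inner product, after identifying the conditioning event. First I would observe that the conditioning event $O_B \neq \bot$ is \emph{exactly} the event $b = 0$ (the non-erasure branch chosen by $\Bc$ in Step~3). So I fix this branch and read off the two outputs: in the $b=0$ case $\Bc$ sends $y_{-r}$, hence the message $\Ac$ receives satisfies $\hY_{-R} = Y_{-R}$, giving $O_A = \ip{X_{-R},Y_{-R}}$, while $O_B = \out(V) - \ip{X_R,Y_R}$.

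The key step is the linearity identity. Since $\ip{X_{-R},Y_{-R}} + \ip{X_R,Y_R} = \ip{X,Y}$ holds for every value of $R$, on the event $b=0$ we get
$$O_A - O_B = \ip{X_{-R},Y_{-R}} + \ip{X_R,Y_R} - \out(V) = \ip{X,Y} - \out(V).$$
Thus $\size{O_A - O_B} = \size{\out(V) - \ip{X,Y}}$ on this event, and crucially this quantity depends neither on $R$ nor on the noise-related randomness: the only relevant randomness is that of the channel $C$ itself.

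To finish, I would use that the bit $b$ is sampled by $\Bc$ freshly and independently of the channel output $(X,Y,V)$, so conditioning on $b=0$ leaves the joint distribution of $\bracketss{\out(V),\ip{X,Y}}$ unchanged. Combining this with the identity above yields
$$\pr{\size{O_A - O_B} > \ell \mid O_B \neq \bot} = \pr{\size{\out(V) - \ip{X,Y}} > \ell} \le 0.001,$$
where the final inequality is precisely the $(\ell,0.999)$-accuracy of $C$. I expect no real obstacle in this claim—it is the easy half of \cref{lemma:DPIP-to-AWEC}. The only point worth stating carefully is the independence of the erasure bit $b$ from the channel, which is what licenses dropping the conditioning; everything else is the one-line linearity cancellation. (The bound obtained is $\le 0.001$, matching the AWEC accuracy parameter $\alpha = 0.001$ needed downstream.)
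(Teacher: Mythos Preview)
Your proposal is correct and follows essentially the same approach as the paper: identify the $b=0$ branch, use linearity of the inner product to collapse $O_A - O_B$ to $\ip{X,Y} - \out(V)$, and invoke the $(\ell,0.999)$-accuracy of $C$. You are in fact slightly more careful than the paper in spelling out the independence of the erasure bit $b$ from the channel output, which justifies dropping the conditioning.
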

\begin{proof}
    Compute
    \begin{align*}
        \pr{\size{O_A - O_B} > \ell \mid O_B \neq \bot}
        &= \pr{\size{\ip{X_{-R},Y_{-R}} - \paren{\out(V) - \ip{X_R,Y_R}}} > \ell}\\
        &= \pr{\size{\out(V) - \ip{X,Y}} > \ell}\\
        &< 0.001,
    \end{align*}
    as required. The first equality holds since conditioned on $O_B \neq \bot$ it holds that $O_A = \ip{X_{-R},Y_{-R}}$, and the inequality holds since $C$ is an $(\ell,0.999)$-accurate for the inner-product functionality. 
\end{proof}

We next move to prove the secrecy guarantees of $\tC$ in a constructive manner. 
In \cref{sec:proving-prop1} we prove property~\ref{item:privacy-of-Y}, and in \cref{sec:proving-prop2} we prove property~\ref{item:privacy-of-X}.

%The following claim (proven in \cref{sec:proving-prop1}) captures the first secrecy guarantee where party $\Ac$ (almost) cannot distinguish if $\Bc$ aborts (i.e., $O_B = \perp$) or not, as otherwise, such a distinguisher $\Ac$ can be used to construct an efficient attack $\Act$ that violates the privacy guarantee of the DP channel $C$.

\remove{
	Finally, the following claim (proven in \cref{sec:property 2}) captures the second secrecy guarantee where party $\Bc$ cannot estimate $O_A$ too well, as otherwise, such an estimator $\Bc$ can be used to construct an efficient attack $\Bct$ that violates the privacy guarantee of the DP channel $C$.

	\begin{claim}[Property 2 of Lemma~\ref{lemma:DPIP-to-AWEC}]~\label{clm:property 2}
		There exists an oracle-aided \ppt algorithm $\Bct$ such that for every algorithm $\Bc$ violating the AWEC secrecy property~\ref{AWEC:item:B} of $\tilde{C}$, i.e.
		\begin{align*}
			\pr{\size{\Bc(V_B) - O_A} \leq 1000\ell  \mid O_B=\bot} > q,
		\end{align*}
		where $\frac{131072\ell_2^2}{q^2k}\leq\frac{1}{e^{\varepsilon}+1}$,
		the for $X_i^* = \Bct^{\Bc}(i,\: X_{-i}, \: Y, \: V)$  it holds that
		\begin{align*}
			\eex{i \la [n]}{\pr{X_i^* = X_i \mid X_i^* \neq \bot}} > \frac{e^{\eps}}{e^{\eps}+1}.
		\end{align*}
	\end{claim}
	
	\Enote{Remove:}

	\begin{proof}
		We remind that $\tilde{C}$ is an AWEC channel constructed by a differentially private inner-product protocol. Let $\mathcal{R}=\{i|r_i=0\}$ and $\cI = \set{i_1,\dots,i_k} \cap \cR$ \remove{$\mathcal{I}=\{i|i\in\{i_1,\dots,i_k\},i\in \mathcal{R}\}$}. Then $O_A=o-\ip{x_{\mathcal{R}\setminus \mathcal{I}},\hy_{\mathcal{R}\setminus \mathcal{I}}}-\ip{x_{ \mathcal{I}},\hy_{ \mathcal{I}}}$. It suffices to prove party $\Bc$ cannot predict $\ip{x_{ \mathcal{I}},\hy_{ \mathcal{I}}}$ with high accuracy. We put the details in Section~\ref{sec:property 2}  
	\end{proof}
}

\subsection{B's Security: Proving Property~\ref{item:privacy-of-Y} of \cref{lemma:DPIP-to-AWEC}}\label{sec:proving-prop1}

Let $\Ac$ be an algorithm that violates the AWEC secrecy property~\ref{AWEC:item:A} of $\tilde{C}$ (the channel of $\Pi^C$). Namely,

\begin{align*}
	\size{\pr{\Ac(V_A) = 1 \mid O_B \neq \perp} - \pr{\Ac(V_A) = 1 \mid O_B = \perp}} > \frac1{1000}.
\end{align*}

%for $k \in \Theta\paren{\frac{n^{1/4} \paren{\frac{1}{e^{\eps}+1} - \delta}}{\log^{1/4} n}}$ and $\gamma = \frac1{100 k}$.
Recall that $V_A = (X,U,R,\hY_{-R})$ where $\hY = \begin{cases} Y & O_B \neq \perp \\ \tY & O_B = \perp\end{cases}$ and that $\tY$ is obtained from $Y$ by planting uniformly random $\oo$ values in the random locations $I_1,\ldots, I_k$ of $Y$ (the value of $k$ will be determined later by the analysis).

Therefore, the above inequality is equivalent to 

\begin{align*}%\label{eq:A-pr}
	\size{\pr{\Ac(X,U,R,Y_{-R}) = 1} - \pr{\Ac(X,U,R,\tY_{-R}) = 1}} > \frac1{1000}.
\end{align*}

We assume \wlg that $\Ac$ outputs a $\zo$ bit, so the above inequality can be written as

\begin{align}\label{eq:property1:assump}
	\size{\ex{\Ac(X,U,R,Y_{-R}) - \Ac(X,U,R,\tY_{-R})}} > \frac1{1000}.
\end{align}

In the following, for $j \in \set{0,\ldots,k}$, let $\tY^{j} = (\tY^{j}_1,\ldots, \tY^{j}_n)$ where $\tY^{j}_i = \begin{cases} \tY_i & i \in \set{I_1,\ldots,I_j} \\ Y_i & i \notin \set{I_1,\ldots,I_j}\end{cases}$.
Note that $\tY^{0} = Y$ and $\tY^{k} = \tY$. Therefore,

\begin{align}\label{eq:property1:hybrid}
    \lefteqn{\size{\eex{j \la [k]}{\ex{\Ac(X,U,R,Y^{(j-1)}_{-R}) - \Ac(X,U,R,\tY^{(j)}_{-R})}}}}\\
    &= \frac1k \cdot \size{\sum_{j=1}^k \ex{\Ac(X,U,R,Y^{(j-1)}_{-R})} - \ex{\Ac(X,U,R,Y^{(j)}_{-R})}}\nonumber\\
    &= \frac1k \cdot \size{\ex{\Ac(X,U,R,Y_{-R}) - \Ac(X,U,R,\tY_{-R})}}\nonumber\\
    &> \frac1{1000 k},\nonumber
\end{align}
where the inequality holds by \cref{eq:property1:assump}.

In the following, let $J \la [n]$ (sampled independently of the other random variables defined above), let $Z = (Z_1,\ldots,Z_n) = \tY^{J-1}$ and let $Z^{(i)} = (Z_1,\ldots,Z_{i-1}, -Z_i, Z_{i+1},\ldots, Z_n)$. It holds that

\begin{align}\label{eq:property1:exp}
	\lefteqn{\size{\ex{\Ac(X,U,R,Z_{-R})} - \ex{\Ac(X,U,R,Z^{(I_{J})}_{-R})}}}\\
        &= \frac12 \cdot \size{\ex{\Ac(X,U,R,Y^{(J-1)}_{-R}) - \Ac(X,U,R,\tY^{(J)}_{-R})}}\nonumber\\
        &= \frac12\cdot \size{\eex{j \la [k]}{\ex{\Ac(X,U,R,Y^{(j-1)}_{-R}) - \Ac(X,U,R,\tY^{(j)}_{-R})}}}\nonumber\\
        &> \frac1{2000 k},\nonumber
\end{align}
where the first equality holds since $\tY^{J}$ is equal to $Z$ w.p. $1/2$ (happens when $\tY_{I_{J}} = Y_{I_{J}}$)  and otherwise is equal to $Z^{(I_{J})}$ (happens when $\tY_{I_{J}} \neq Y_{I_{J}}$), and the last inequality holds by \cref{eq:property1:hybrid}.

%For the ease of notation, since $R \la \oo^n$, by symmetry we can assume \wlg that 
%\begin{align}\label{eq:property1:exp}
%	\size{\ex{\Ac(X,U,R,Z_{R})} - \ex{\Ac(X,U,R,Z^{(I_{J})}_{R})}} > \frac1{2000 k}.
%\end{align}

\iffull
We next use the following lemma, proven in \cref{sec:prediction-lemma} (recall that a proof overview appears in \cref{sec:overview:prediction-lemma}).
\else
We next use the following lemma, proven in the full version \cite{HaitnerSMTC25} (recall that a proof overview appears in \cref{sec:overview:prediction-lemma}).
\fi

\def\PredictionLemma{
    For every $\gamma \in (0,1)$ and $n \in \bbN$, there exists an oracle aided (randomized) algorithm $\Gc = \Gc_{\gamma} \colon [n] \times \zo^{n-1} \rightarrow \set{-1,1,\perp}$ that runs in time $\poly(n,1/\gamma)$ such that the following holds: 
	
	Let $(Z,W) \in \oo^n \times \zo^*$ be jointly distributed random variables, let $R \la \zo^n$ and $I \la [n]$ (sampled independently), 
	and let $F$ be a (randomized) function that satisfies
	\begin{align*}
		\size{\ex{F(R,Z_{R},W) - F(R,Z^{(I)}_{R},W)}} \geq \gamma,
	\end{align*}
	for $Z^{(I)} = (Z_1,\ldots, Z_{I-1}, -Z_I, Z_{I+1},\ldots, Z_n)$. Then
	\begin{enumerate}
		\item $\pr{\Gc^F(I,Z_{-I}, W) = -Z_I} \leq O\paren{\frac{1}{\gamma^2 n}}$, and\label{property1:prediction:bad}
		
		\item $\pr{\Gc^F(I,Z_{-I}, W) = Z_I} \geq \Omega(\gamma) - O\paren{\frac{1}{\gamma^2 n}}$.\label{property1:prediction:good}
	\end{enumerate}
}

\begin{lemma}\label{lemma:property1:prediction}
    \PredictionLemma
\end{lemma}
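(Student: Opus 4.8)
The plan is to prove \cref{lemma:property1:prediction} exactly along the lines sketched in \cref{sec:overview:prediction-lemma}, making the three expectations that drive the predictor rigorous and controlling the sampling error. Concretely, on input $(I,Z_{-I},W)$ and with oracle access to $F$, the predictor $\Gc^F$ first forms, for each $b\in\oo$, the string $Z^b=(Z_1,\dots,Z_{I-1},b,Z_{I+1},\dots,Z_n)$ (it can do this since it holds $Z_{-I}$), and then estimates by independent sampling the three quantities
\[
\mu^b=\eex{R\la\zn\mid R_I=1}{F(R,Z^b_R,W)}\quad(b\in\oo),\qquad \mu^*=\eex{R\la\zn\mid R_I=0}{F(R,Z_R,W)}.
\]
The point that makes $\Gc$ well-defined is that $\mu^*$ conditions on $R_I=0$, so $Z_R$ never exposes the unknown coordinate $Z_I$ and each oracle query $F(R,Z_R,W)$ is computable from $Z_{-I}$ alone, while $\mu^b$ conditions on $R_I=1$ but substitutes the hypothesized value $b$. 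Each $\mu$ is estimated to additive accuracy $\gamma/16$ with failure probability $O(1/(\gamma^2n^2))$ using $\poly(n,1/\gamma)$ samples (Hoeffding), keeping $\Gc$ within the claimed running time. Finally $\Gc$ outputs $b$ if its estimate $\hat\mu^b$ satisfies $\size{\hat\mu^b-\hat\mu^*}<\gamma/4$ while $\size{\hat\mu^{-b}-\hat\mu^*}\ge\gamma/4$, and outputs $\perp$ otherwise.

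The analysis rests on two events over the joint choice of $(Z,W,I)$, which I would establish for the true expectations and then transfer to the estimates by paying the $O(1/(\gamma^2n^2))$ sampling-failure term. The first event (``error is rare'') is that $\size{\mu^{Z_I}-\mu^*}<\gamma/8$ except with probability $O(1/(\gamma^2n))$; note $\mu^{Z_I}=\eex{R\mid R_I=1}{F(R,Z_R,W)}$ since $Z^{Z_I}=Z$. The second event (``signal is present'') is that $\size{\mu^{1}-\mu^{-1}}\ge\gamma/2$ with probability at least $\gamma$. Granting these, the first conclusion follows because $\Gc$ can only output $-Z_I$ when $\size{\hat\mu^{Z_I}-\hat\mu^*}\ge\gamma/4$, which forces $\size{\mu^{Z_I}-\mu^*}\ge\gamma/8$, a rare event. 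The second conclusion follows because whenever both events hold we have $\size{\hat\mu^{Z_I}-\hat\mu^*}<\gamma/4$, while the triangle inequality gives $\size{\mu^{-Z_I}-\mu^*}\ge\size{\mu^{1}-\mu^{-1}}-\size{\mu^{Z_I}-\mu^*}>3\gamma/8$, so $\size{\hat\mu^{-Z_I}-\hat\mu^*}\ge\gamma/4$ and $\Gc$ outputs $Z_I$; a union bound yields this with probability $\ge\gamma-O(1/(\gamma^2n))=\Omega(\gamma)-O(1/(\gamma^2n))$.

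It remains to establish the two events. The first is exactly \cref{lem::distance-I:informal} applied, for each fixed $(Z,W)$, to the randomized function $F'(R)\eqdef F(R,Z_R,W)$: it gives $\ppr{I}{\size{\ex{F'(R)\mid R_I=0}-\ex{F'(R)\mid R_I=1}}\ge\gamma/8}\le\Theta(1/(\gamma^2n))$, and integrating over $(Z,W)$ yields the claim. The second event I would prove by contradiction: suppose $\ppr{Z,W,I}{\size{g}\ge\gamma/2}<\gamma$, where $g\eqdef\eex{R\mid R_I=1}{F(R,Z_R,W)-F(R,Z^{(I)}_R,W)}$ so that $\size{g}=\size{\mu^{1}-\mu^{-1}}$. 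Since $F$ is $\zo$-valued, $\size{g}\le 1$, hence $\ex{\size{g}}\le\gamma\cdot1+\gamma/2\le\tfrac{3\gamma}{2}$ and therefore $\size{\ex{g}}\le\tfrac{3\gamma}{2}$. Conditioning the hypothesis quantity on $R_I$, the $R_I=0$ contribution vanishes because $Z_R=Z^{(I)}_R$ whenever $R_I=0$ (identical oracle inputs, hence equal expectations), leaving
\[
\size{\ex{F(R,Z_R,W)-F(R,Z^{(I)}_R,W)}}=\tfrac12\size{\ex{g}}\le\tfrac{3\gamma}{4}<\gamma,
\]
which contradicts the lemma's hypothesis. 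This conditioning-on-$R_I$ step, together with correctly tracking that the hypothesis averages over a uniformly random flipped index, is the conceptual crux.

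I expect the main difficulty to be bookkeeping rather than conceptual: one must choose the sampling accuracy and the three thresholds so that every inequality survives the $\pm\gamma/16$ estimation slack simultaneously (the margin $\gamma/8$ in the first event and the gap between $\gamma/2$ and $\gamma/4$ in the decision rule are tuned precisely for this), and must verify that the sampling-failure probability $O(1/(\gamma^2n^2))$ is genuinely dominated by the target error $O(1/(\gamma^2n))$. The one place where real care is needed is the contradiction argument for the second event: it is essential that the $R_I=0$ slice contributes zero and that the factor $\tfrac12$ from conditioning is retained, since it is exactly this factor that lets a $\gamma$-advantage in the hypothesis survive as an $\Omega(\gamma)$ prediction gap.
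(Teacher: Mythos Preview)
Your proposal is correct and follows essentially the same approach as the paper: the same three-expectation predictor with threshold $\gamma/4$, sampling accuracy $\gamma/16$, \cref{lem::distance-I:informal} applied pointwise in $(Z,W)$ for the ``error is rare'' event, and a Markov-style contradiction for the ``signal is present'' event. Your handling of the factor $\tfrac12$ from conditioning on $R_I$ is in fact slightly cleaner than the paper's exposition (which identifies $\mu^{z_i}$ with an unconditioned expectation), and your bound of $\geq\gamma$ for the signal event is tighter than the paper's $>\gamma/2$, but both yield the stated $\Omega(\gamma)$ and $O(1/(\gamma^2 n))$ conclusions.
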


We now ready to finalize the proof of Property~\ref{item:privacy-of-Y} of \cref{lemma:DPIP-to-AWEC} using \cref{eq:property1:exp,lemma:property1:prediction}.

\begin{proof}
	
	Consider the following oracle-aided algorithm $\Act$:
	
	\begin{algorithm}[Algorithm $\Act$]\label{alg:Act}
		\item Inputs: $i \in [n]$, $y_{-i} \in \oo^{n-1}$, $x \in \oo^n$ and $u \in \zo^*$.
		\item Oracle: Deterministic algorithm $\Ac$.%and $\Gc$ from \cref{lemma:property1:prediction}.
		\item Operation:~
		\begin{enumerate}
			\item Sample $j \la [k]$ and $i_1,\ldots,i_{j-1} \la [n]$.
			
			\item Sample $z_{-i} = (z_1,\ldots,z_{i-1}, z_{i+1},\ldots, z_n)$, where for $t \in [n]\setminus \set{i}$: $$z_{t} \la \begin{cases} \cU_{\oo} & t \in \set{i_1,\ldots,i_{j-1}} \\ y_t & \text{otherwise} \end{cases}.$$\label{step:z-i}
			
			\item Output $y_i^* \la \Gc^{F}(i,z_{-i},w)$ for $w=(x,u)$, where $\Gc = \Gc_{\frac1{2000 k}}$ is the algorithm from \cref{lemma:property1:prediction}, and $F(r,z_r,w) \eqdef \Ac(w,(1-r_1,\ldots,1-r_n), z_r)$.
		\end{enumerate}
	\end{algorithm}

	Note that by \cref{eq:property1:exp} it holds that
	\begin{align*}
		\size{\ex{F(R,Z_{R}, W)} - \ex{F(R,Z_{R}^{I},W)}} > \frac1{2000 k}
	\end{align*}
	for $W = (X,U)$, $I = I_J$ and the function $F(r,z_r,w=(x,u)) \eqdef \Ac(w,(1-r_1,\ldots,1-r_n), z_r)$. Thus, \cref{lemma:property1:prediction} implies that
	\begin{align}\label{eq:G:LB}
		\pr{\Gc^{F}(I,Z_{-I}, V) = -Z_I} \leq O\paren{\frac{k^2}{n}}
	\end{align}
	and
	\begin{align}\label{eq:G:UB}
		\pr{\Gc^{F}(I,Z_{-I}, V) = Z_I} \geq \Omega(1/k) - O\paren{\frac{k^2}{n}}.
	\end{align} 
	
	Recall that we denote $Y_i^* = \Act^{\Ac}(i,Y_{-i},X,U)$. We next lower-bound $\eex{i \la [n]}{\pr{Y_i^* = Y_i}}$ and upper-bound $\eex{i \la [n]}{\pr{Y_i^* = -Y_i}}$.
	The first bound hold by the following calculation
	\begin{align}\label{eq:Y_i*-LB}
		\eex{i \la [n]}{\pr{Y_i^* = Y_i}}
		&\geq 0.9 \cdot \eex{i \la [n]}{\pr{Y_i^* = Z_i}}\\
		&= 0.9 \cdot \eex{i \la [n]}{\pr{\Gc^{F}(i,Z_{-i},W) = Z_i}}\nonumber\\
		&\geq \Omega\paren{\frac1k} - O\paren{\frac{k^2}{n}}.\nonumber
	\end{align}
	The first inequality holds since $i \notin \set{I_1,\ldots,I_{J-1}} \implies Z_i = Y_i$ and $\pr{i \notin \set{I_1,\ldots,I_{J-1}}} \geq \paren{1 - \frac1n}^{k-1} \geq 0.9$ (recall that $k \in o(n)$). The equality holds since, conditioned on $X=x,U=u,Y_{-i} = y_{-i}$ (the inputs of $\Act$), the value of $z_{-i}$ that is sampled in \stepref{step:z-i} of $\Act$ is distributed the same as $Z_{-i}|_{X=x,U=u,Y_{-i} = y_{-i}}$, and therefore, $Y_i^* \equiv \Gc^{F}(i,Z_{-i},W)$ for $W = (X,U)$. The last inequality holds by \cref{eq:G:UB}.
	
	On the other hand, we have that
	\begin{align}\label{eq:Y_i*-UB}
		\eex{i \la [n]}{\pr{Y_i^* = -Y_i}}
		&\leq \eex{i \la [n]}{\pr{Y_i^* = - Y_i \mid Y_i = Z_i} + \pr{Z_i \neq Y_i}}\\
		&\leq \eex{i \la [n]}{\pr{Y_i^* = - Z_i}} + \frac{k}{2n}\nonumber\\
		&=  \eex{i \la [n]}{\pr{\Gc^{F}(i,Z_{-i}) = - Z_i}} + \frac{k}{2n}\nonumber\\
		&\leq O\paren{\frac{k^2}{n}}.\nonumber
	\end{align}
	The second inequality holds since for any fixing $Y_{-i}=y_{-i},X=x,U=u$, we have $Y_i^*= \Act^{\Ac}(i,y_{-i},x,u)$ which is independent of $(Y_i,Z_i)$, and also since $i \notin \set{I_1,\ldots,I_{J-1}} \implies Z_i = Y_i$ which yields that
	\begin{align*}
		\pr{Z_i \neq Y_i} \leq 1 - \pr{i \notin \set{I_1,\ldots,I_{J-1}}} \leq 1 - \paren{1 - \frac1n}^{k} \leq 1 - e^{\frac{k}{2n}} \leq \frac{k}{2n}. 
	\end{align*}
	The equality in \cref{eq:Y_i*-UB} holds since, conditioned on $X=x,U=u,Y_{-i} = y_{-i}$ (the inputs of $\Act$), the value of $z_{-i}$ that is sampled in \stepref{step:z-i} of $\Act$ is distributed the same as $Z_{-i}|_{X=x,U=u,Y_{-i} = y_{-i}}$, and therefore, $Y_i^* \equiv \Gc^{F}(i,Z_{-i},W)$ for $W = (X,U)$. The last inequality in \cref{eq:Y_i*-UB} holds by \cref{eq:G:LB}.
	
	By \cref{eq:Y_i*-LB,eq:Y_i*-UB}, assuming that $\delta \leq 1/n$ where $n$ is large enough, 
	there exists a constant $c > 0$ such that if $k \leq c \cdot (e^{-\eps} n)^{1/3}$ then $\eex{i \la [n]}{\pr{Y_i^* = Y_i}} > e^{\eps} \cdot \eex{i \la [n]}{\pr{Y_i^* \neq Y_i}} + \delta$, as required.
	Recall that $k = \ceil{e^{\lambda_1 \eps} \lambda_2 \cdot \ell^2}$, where $\lambda_1,\lambda_2$ are the constants from \cref{eq:lamdas}. Hence, we can set a bound of $e^{-c_1 \eps} c_2 \cdot n^{1/6}$ on $\ell$ for $c_1 = \lambda_1/2 + 1/6$ and $c_2 = \sqrt{c/\lambda_2}$ to guarantee that $k \leq c \cdot (e^{-\eps} n)^{1/3}$.
	
\end{proof}

\remove{
\subsubsection{Proving \cref{lemma:property1:prediction}}\label{sec:prediction-lemma}
\begin{proof}
	
	Let $0 < \gamma \leq 0.01$ and consider the following algorithm $\Gc = \Gc_{\gamma}$:
	\begin{algorithm}
		\item Inputs: $i \in [n]$,  $z_{-i} = (z_1,\ldots,z_{i-1},z_{i+1},\ldots,z_n) \in \oo^{n-1}$ and $w \in \zo^*$.
		\item Parameter: $0 < \gamma \leq 0.01$.
		\item Oracle: $f \colon \zo^n \times \oo^{\leq n} \times \zo^* \rightarrow \zo$.
		\item Operation:~
		\begin{enumerate}
			\item For $b \in \oo$:
			\begin{enumerate}
				\item Let $z^b = (z_1,\ldots,z_{i-1},b,z_{i+1},\ldots,z_n)$.
				\item Estimate $\mu^b \eqdef \eex{r \la \zo^n \mid r_i = 1, \: f}{f(r,z^b_{r}, w)}$ as follows:
				\begin{itemize}
					\item Sample $r_1,\ldots,r_{s} \la \set{r \in \zo \colon r_i = 1}$, for $s = \ceil{\frac{128 \log (6n)}{\gamma^2}}$, and then sample $\tilde{\mu}^b \la \frac1{s} \sum_{j=1}^s f(r_j,z^b_{r_j}, w)$ (using $s$ calls to the oracle $f$).
				\end{itemize}
			\end{enumerate}
			\item Estimate $\mu^* \eqdef \eex{r \la \zo^n \mid r_i = 0, \: f}{f(r,z^1_{r}, w)}$ as follows:
			\begin{itemize}
				\item Sample $r_1,\ldots,r_{s}  \la \set{r \in \zo \colon r_i = 0}$, for $s = \ceil{\frac{128 \log (6n)}{\gamma^2}}$, and then sample $\tilde{\mu}^* \la \frac1{s} \sum_{j=1}^s f(r_j,z^1_{r_j}, w)$ (using $s$ calls to the oracle $f$).
			\end{itemize}
			\item If exists $b \in \oo$ s.t. $\size{\tilde{\mu}^b - \tilde{\mu}^*} < \gamma/4$ and $\size{\tilde{\mu}^{-b} - \tilde{\mu}^*} > \gamma/4$, output $b$.
			\item Otherwise, output $\bot$.
		\end{enumerate}
	\end{algorithm}
	In the following, fix a pair of (jointly distributed) random variables $(Z,W) \in \oo^n \times \zo^*$ and a randomized function  $f \colon \zo^n \times \oo^{\leq n} \times \zo^* \rightarrow \oo$ that satisfy 
	\begin{align*}
		\size{\ex{f(R,Z_{R},W) - f(R,Z^{(I)}_{R},W)}} \geq \gamma,
	\end{align*}
	for $R \la \zo^n$ and $I \la [n]$ that are sampled independently. 
	Our goal is to prove that 
	
	\begin{align}\label{eq:predic-lemma:UB}
		\pr{\Gc^f(I,Z_{-I}, W) = -Z_I} \leq O\paren{\frac{1}{\gamma^2 n}},
	\end{align}
	and 
	\begin{align}\label{eq:predic-lemma:LB}
		\pr{\Gc^f(I,Z_{-I}, W) = Z_I} \geq \Omega(\gamma) - O\paren{\frac{1}{\gamma^2 n}}.
	\end{align}

	Note that
	\begin{align}\label{eq:D}
		\text{For every random variable }D \in [-1,1]\text{ with }\size{\ex{D}} \geq \gamma >0: \quad \pr{\size{D} > \gamma/2} > \gamma/2,
	\end{align}
	as otherwise, $\size{\ex{D}} \leq \ex{\size{D}} \leq 1\cdot \frac{\gamma}{2} + \frac{\gamma}{2}(1-\frac{\gamma}{2}) < \gamma$.

	By applying \cref{eq:D} with $D = \eex{r \la \zo^n, f}{f(r,Z_{r},W) - f(r,Z^{(I)}_{r},W)}$, it holds that
	\begin{align}\label{eq:main-lemma:good}
		\ppr{(i,z,w) \la (I,Z,W)}{\size{\eex{r \la \zo^n, f}{f(r,z_{r},w) - f(r,z^{(i)}_{r},w)}} > \gamma/2} > \gamma/2.
	\end{align}
	On the other hand, for every fixing of $(z,w) \in \Supp(Z,W)$, we can apply \cref{lem:distance-I} with the function $f_{z,w}(r) = f(r,z_{r},w)$ and with $\alpha =\frac{\gamma}{16}$ to obtain that
	\begin{align*}
		\ppr{i\la I}{\: \size{\eex{r \la \zo^n \mid r_i = 0, \: f}{f(r,z_{r},w)} - \eex{r \la \zo^n \mid r_i = 1, \: f}{f(r,z_{r},w)} \:} \geq \frac{\gamma}{16}} \leq \frac{512}{n \gamma^2}.
	\end{align*}
	But since the above holds for every fixing of $(z,w)$, then in particular it holds that
	\begin{align}\label{eq:main-lemma:bad}
		\ppr{(i,z,w) \la (I,Z,W)}{\: \size{\eex{r \la \zo^n \mid r_i = 0, \: f}{f(r,z_{r},w)} - \eex{r \la \zo^n \mid r_i = 1, \: f}{f(r,z_{r},w)} \:} \geq \frac{\gamma}{16}} \leq \frac{512}{n \gamma^2}.
	\end{align}
	
	We next prove \cref{eq:predic-lemma:UB,eq:predic-lemma:LB} using \cref{eq:main-lemma:good,eq:main-lemma:bad}.
	
	In the following, for a triplet $t = (i,z,w) \in \Supp(I,Z,W)$, consider a random execution of $\Gc^f(i,z_{-i},w)$. For $x \in \set{-1,1,*}$, let $\mu^x_t$ be the value of $\mu^x$ in the execution, and let $M^x_t$ be the (random variable of the) value of $\tilde{\mu}^x$ in the execution. Note that by definition, it holds that $\mu_{i,z,w}^{z_i} = \eex{r \la \zo^n, f}{f(r,z_{r},w)}$ and $\mu_{i,z,w}^{-z_i} = \eex{r \la \zo^n, f}{f(r,z^{(i)}_{r},w)}$. Therefore, \cref{eq:main-lemma:good} is equivalent to 
	\begin{align}\label{eq:main-lemma:good2}
		\ppr{t=(i,z,w) \la (I,Z,W)}{\size{\mu_t^{z_i} - \mu_t^{-z_i}} > \gamma/2} > \gamma/2.
	\end{align}
	Furthermore, note that $\mu_{i,z,w}^* \eqdef \eex{r \la \zo^n \mid r_i = 0, \: f}{f(r,z^1_{r}, w)} = \eex{r \la \zo^n \mid r_i = 0, \: f}{f(r,z_{r}, w)}$ and that $\mu_{i,z,w}^{z_i} = \eex{r \la \zo^n \mid r_i = 1, \: f}{f(r,z^{z_i}_{r}, w)}$. Therefore, \cref{eq:main-lemma:bad} is equivalent to 
	\begin{align}\label{eq:main-lemma:bad2}
		\ppr{t = (i,z,w) \la (I,Z,W)}{\: \size{\mu_t^* - \mu_t^{z_i}}\geq \frac{\gamma}{16}}  \leq \frac{512}{n \gamma^2}.
	\end{align}
	
	We next prove the lemma using \cref{eq:main-lemma:good2,eq:main-lemma:bad2}.
	
	Note that by Hoeffding's inequality, for every $t = (i,z,w) \in \Supp(I,Z,W)$ and  $x \in \set{-1,1,*}$ it holds that $\pr{\size{M_t^x - \mu_t^x} \geq \frac{\gamma}{16}} \leq 2\cdot e^{-2 s \paren{\frac{\gamma}{16}}^2} \leq \frac1{3n}$, which yields that for every fixing of $t = (i,z,w) \in \Supp(I,Z,W)$, w.p.\ at least $1-1/n$ we have for all $x \in \set{-1,1,*}$ that $\size{M_t^x - \mu_t^x} < \frac{\gamma}{16}$ (denote this event by $E_t$).
	
	The proof of \cref{eq:predic-lemma:UB} holds by the following calculation:
	\begin{align*}
		\lefteqn{\pr{\Gc^f(I,Z_{-I},W) = -Z_I}}\\
		&= \eex{(i,z,w) \la (I,Z,W)}{\pr{\Gc^f(i,z_{-i},w) = -z_i}}\\
		&= \eex{t = (i,z,w) \la (I,Z,W)}{\pr{\set{\size{M_t^* - M_t^{z_i}} > \gamma/4} \land \set{\size{M_t^* - M_t^{-z_i}} < \gamma/4}}}\\
		&\leq \eex{t =(i,z,w) \la (I,Z,W)}{\pr{\size{M_t^* - M_t^{z_i}} > \gamma/4}}\\
		&\leq \eex{t = (i,z,w) \la (I,Z,W)}{\pr{\size{M_t^* - M_t^{z_i}} > \gamma/4 \mid E_t}} + \frac{1}{n}\\
		&\leq \ppr{t = (i,z,w) \la (I,Z,W)}{\size{\mu_t^* -  \mu_t^{z_i}} \geq \frac{\gamma}{16}} + \frac{1}{n}\\
		&\leq \frac{512}{n \gamma^2} + \frac1n,
	\end{align*}
	The second inequality holds since $\pr{\neg E_t} \leq 1/n$ for every $t$. The penultimate inequality holds since conditioned on $E_t$, it holds that $\size{M_t^* - \mu_t^*} \leq \frac{\gamma}{16}$ and $\size{M_t^{z_i} - \mu_t^{z_i}} \leq \frac{\gamma}{16}$, which implies that $\size{M_t^* - M_t^{z_i}} > \gamma/4 \: \implies \: \size{\mu_t^* -  \mu_t^{z_i}} \geq \frac{\gamma}{4} - 2\cdot \frac{\gamma}{16} > \frac{\gamma}{16}$. The last inequality holds by \cref{eq:main-lemma:bad2}.
	
	It is left to prove \cref{eq:predic-lemma:LB}. 
	Observe that \cref{eq:main-lemma:good2,eq:main-lemma:bad2} imply with probability at least $\gamma/2 - \frac{512}{n \gamma^2}$ over $t = (i,z,w)\in (I,Z,W)$, we have $\size{\mu_t^* -  \mu_t^{z_i}} \leq  \frac{\gamma}{16}$ and $\size{\mu_t^{z_i}- \mu_t^{-z_i}} \geq\frac{\gamma}{2}$. Therefore, conditioned on event $E_t$ (i.e., $\forall x \in \set{-1,1,*}: \: \size{M_t^{x} - \mu_t^{x}} \leq \frac{\gamma}{16}$), we have for such triplets $t = (i,z,w)$ that 
	\begin{align*}
		\size{M_t^* - M_t^{-z_i}} 
		& \geq \size{\mu_t^{z_i} - \mu_t^{-z_i}} - \size{\mu_t^{z_i} - \mu_t^*} - \size{M_t^*  - \mu_t^*} -  \size{M_t^{-z_i} - \mu_t^{-z_i}}\\
		&\geq \frac{\gamma}{2} - 3\cdot \frac{\gamma}{16}\\
		&> \gamma/4,
	\end{align*}
	and 
	\begin{align*}
		\size{M_t^* - M_t^{z_i}}
		&\leq  \size{M_t^* - \mu_t^*} + \size{\mu_t^* - \mu_t^{z_i}} + \size{\mu_t^{z_i} - M_t^{z_i}} \\
		&\leq 3\cdot \frac{\gamma}{16}\\
		&< \gamma/4.
	\end{align*}
	
	Thus, we conclude that
	\begin{align*}
		\lefteqn{\pr{\Gc^f(I,Z_{-I},W) = Z_I}}\\
		&= \eex{(i,z,w) \la (I,Z,W)}{\pr{\Gc^f(i,z_{-i},w) = z_i}}\\
		&= \eex{t = (i,z,w) \la (I,Z,W)}{\pr{\set{\size{M_t^* - M_t^{-z_i}} > \gamma/4} \land \set{\size{M_t^* - M_t^{z_i}} < \gamma/4}}}\\
		&\geq \paren{1 - \frac1{n}}\cdot \eex{t = (i,z,w) \la (I,Z,W)}{\pr{\set{\size{M_t^* - M_t^{-z_i}} > \gamma/4} \land \set{\size{M_t^* - M_t^{z_i}} < \gamma/4} \mid E_t}}\\
		&\geq \paren{1 - \frac1{n}}\cdot \paren{\gamma/2 - \frac{512}{n \gamma^2}}\\
		&\geq \gamma/4 - \frac{1024}{n \gamma^2},
	\end{align*}
	which proves \cref{eq:predic-lemma:LB}. The first inequality holds since $\pr{E_t} \geq 1-1/n$ for every $t = (i,z,w)$, and the second one holds by the observation above.
	
\end{proof}
}

\subsection{A's Security: Proving Property~\ref{item:privacy-of-X} of \cref{lemma:DPIP-to-AWEC}}\label{sec:proving-prop2}

Let $\Bc$ be an algorithm that violates the AWEC secrecy property~\ref{AWEC:item:B} of $\tilde{C} = ((V_A,O_A),(V_B,O_B))$ --- the channel of $\Pi^{C = ((X,U),(Y,V))}$ (\cref{protocol:DPIP-to-AWEC}). Namely,

\begin{align}\label{eq:violating-B}
	\pr{\size{\Bc(V_B) - O_A} \leq 1000 \ell \mid O_B=\bot} > \frac1{1000}.
\end{align}

Recall that $V_B = (Y,V,R,X_{R},I_1,\ldots,I_k, \tY)$ where $I_1,\ldots,I_k \la [n]$ are the indices that $\Bc$ chooses at Step~\ref{B_steps_in_abort}, and $\tY = (\tY_1,\ldots,\tY_n)$ where $\tY_i \la \oo$ for $i \in \set{I_1,\ldots,I_k}$ and otherwise $\tY_i = Y_i$. Furthermore, conditioned on $O_B=\bot$, recall that $O_A = \ip{X_{-R}, \tY_{-R}}$. Therefore, \cref{eq:violating-B} is equivalent to 
\begin{align}\label{eq:Bc-guarantee}
	%\pr{\size{\Bc(Y,V,R,X_{R},I_1,\ldots,I_k, \tY) - \ip{X_{-R}, \tY_{-R}}} \leq 1000 \ell } > \frac1{100}.
	\pr{\size{\Bc(V_B) - \ip{X_{-R}, \tY_{-R}}} \leq 1000 \ell } > \frac1{1000}.
\end{align}

In the following, define the random variable $H$ to be the first $m = \ceil{k/4}$ indices of $R^0 \cap \set{I_1,\ldots,I_k}$ for $R^0 = \set{i \colon R_i = 0}$, where we let $H = \emptyset$ if the size of the intersection is smaller than $m$.
Since $R \la \zo^n$, Hoeffding's inequality implies that $\pr{\size{R^0} \geq 0.4 n} \geq 1 - e^{-\Omega(n)}$. Since $I_1,\ldots,I_k \la [n]$ (independent of $R$), then again by Hoeffding's inequality we obtain that $\pr{\size{H} = \ceil{k/4} \mid \size{R^0} \geq 0.4 n} \geq 1 - e^{-\Omega(k)}$, which yields that
\begin{align*}
	\pr{H \neq \emptyset} = \pr{\size{H} = \ceil{k/4}} \geq 1 - e^{-\Omega(k)} - e^{-\Omega(n)} \geq 1-\frac1{10000}.
\end{align*}
Therefore, by the union bound,
\begin{align}\label{eq:good-and-not-empty-H}
	\lefteqn{\pr{\set{\size{\Bc(V_B) - \ip{X_{-R}, \tY_{-R}}} \leq 1000 \ell} \land \set{H \neq \emptyset}}}\\
	&= 1- \pr{\set{\size{\Bc(V_B) - \ip{X_{-R}, \tY_{-R}}} > 1000 \ell} \lor \set{H = \emptyset}}\nonumber\\
	&\geq 1- \pr{\set{\size{\Bc(V_B) - \ip{X_{-R}, \tY_{-R}}} > 1000 \ell}} - \pr{ \set{H = \emptyset}}\nonumber\\
    &\geq 1- \paren{1- \frac1{1000}} - \frac1{10000} \nonumber\\
     &\geq \frac1{2000}.\nonumber
\end{align}
%\begin{align}\label{eq:good-and-not-empty-H}
%	\lefteqn{\pr{\set{\size{\Bc(V_B) - \ip{X_{-R}, \tY_{-R}}} \leq 1000 \ell} \land \set{H \neq \emptyset}}}\\
%	&= \pr{H \neq \emptyset} \cdot \pr{\size{\Bc(V_B) - \ip{X_{-R}, \tY_{-R}}} \leq 1000 \ell \mid H \neq \emptyset }\nonumber\\
%	&\geq \pr{H \neq \emptyset}\cdot \paren{\pr{\size{\Bc(V_B) - \ip{X_{-R}, \tY_{-R}}} \leq 1000 \ell} - \pr{H = \emptyset}}\nonumber\\
%	&\geq \paren{1-\frac1{10000}}\cdot \paren{\frac1{1000} - \frac1{10000}}\nonumber\\
%	&> \frac1{2000}.\nonumber
%\end{align}

In the following, let $d$ be the number of random coins that $\Bc$ uses, and for $\psi \in \zo^d$ let $\Bc_{\psi}$ be algorithm $\Bc$ when fixing its random coins to $\psi$.
Let $\Psi \la \zo^d$,  $Z = X_H$,  $T' = (Y,V,R,I_1,\ldots,I_k, H,\tY_{-\cH}, X_{-\cH})$,  $T= (\Psi, T')$ and $S = \tY_H$. Note  that conditioned on $H \neq \perp$, $S$ is a uniformly random string in $\oo^m$, independent of $Z$ and $T$, and note that $V_B$ is a deterministic function of $(\tY_{H}, T)$ (because $X_R$, which is part of $V_B$, is a sub-string of $X_{-\cH}$).
\cref{eq:good-and-not-empty-H} yields that w.p. $1/2000$ over $z \la Z$,  $t = (\psi, t'=(y,v,r,i_1,\ldots,i_k,\cH,x_{-\cH},\ty_{-\cH})) \la T$ and $s \la \oo^m$,
the following holds for $\bar{\cH} =  \set{i \in [n] \colon r_i = 0} \setminus \cH\:$:
\begin{align*}
	\size{\Bc_{\psi}(s,t') - \ip{x_{\bar{\cH}}, \ty_{\bar{\cH}}} - \ip{z, s}}\leq 1000 \ell.
\end{align*}

By denoting $f(s,t=(\psi,t')) = \Bc_{\psi}(s,t') + \ip{x_{\bar{\cH}}, \ty_{\bar{\cH}}}$,\footnote{Note that $f(s,t)$ is well-defined because $t$ contains $x_{\bar{\cH}}$ and $\ty_{\bar{\cH}}$ (sub-strings of $x_{-\cH}$ and $\ty_{-\cH}$, respectively).} the above observation is equivalent to

\begin{align}\label{eq:our-good-f}
	\ppr{(z,t) \la (Z,T), \: s \la \oo^m}{\size{f(s,t) - \ip{z, s}} \leq 1000 \ell} \geq \frac1{2000}.
\end{align}

In the following, let $\cD$ be the joint distribution of $(Z,T)$, which is equivalent to the output of $\GenView^{C}()$ defined below in \cref{alg:GenView}.

\begin{algorithm}[$\GenRand$]\label{alg:GenRand}
	~
	\begin{enumerate}
            \item Sample $\psi \la \zo^d$.
		\item Sample $r \la \zo^n$ and $i_1,\ldots,i_k \la [n]$.
		\item Let $\cH$ be the first $m=\ceil{k/4}$ indices of $\set{i \in [n] \colon r_i = 0} \cap  \set{i_1,\ldots,i_k}$, where $\cH = \emptyset$ if the intersection size is less than $m$.
		\item Let $\bar{\cH} = \set{i \in [n] \colon r_i = 0} \setminus \cH$.
		\item Sample $\ty_{\bar{\cH}} \la \oo^{\size{\bar{\cH}}}$.
		\item Output $(\psi,r,i_1,\ldots,i_k,\cH,\ty_{\bar{\cH}})$.
	\end{enumerate}
\end{algorithm}

\begin{algorithm}[$\GenView$]\label{alg:GenView}
	\item Oracle: A channel $C = ((X,U),(Y,V))$.
	\item Operation:~
	\begin{enumerate}
		\item Sample $((x,u),(y,v)) \la C$.
		\item Sample $(\psi,r,i_1,\ldots,i_k,\cH,\ty_{\bar{\cH}}) \la \GenRand()$ (\cref{alg:GenRand}).
		\item Output $z = x_{\cH}$ and $t = (\psi, t')$ for $t'=(y,v,r,i_1,\ldots,i_k,\cH,\ty_{-\cH},x_{-\cH})$, where $\ty_i = y_i$ for $i \in [n]\setminus (\cH \cup \bar{\cH})$.
	\end{enumerate}
\end{algorithm}

We now can use the following reconstruction result from \cite{HaitnerMST22}:

\begin{fact}[Follows by Theorem 4.6 in \cite{HaitnerMST22}]\label{fact:prev-rec}
	There exists constants $\eta_1,\eta_2 > 0$ and a \ppt algorithm $\Dist$ such that the following holds for large enough $m \in \bbN$: Let $\eps \geq 0$ and $a \geq \log m$, and let $\cD$ be a distribution over $\oo^m \times \Sigma^*$. Then for every function $f$ that satisfies 
	\begin{align*}
		\ppr{(z,t) \la \cD, \: s \la \oo^m}{f(s,t) - \ip{z,s} \leq a} \geq e^{\eta_1 \cdot \eps}\cdot \eta_2 \cdot a/\sqrt{m},
	\end{align*}
	it holds that
	\begin{align*}
		\ppr{(z,t) \la \cD, \: j \la [m]}{\Dist^{\cD,f}(j, z, t) = 1} > e^{\eps}\cdot \ppr{(z,t) \la \cD, \: j \la [m]}{\Dist^{\cD,f}(j ,z^{(j)}, t) = 1} + \frac1m,
	\end{align*}
	where $z^{(j)} = (z_1,\ldots,z_{j-1},-z_j,z_{j+1},\ldots,z_m)$.\footnote{Theorem 4.6 in \cite{HaitnerMST22} actually considered a harder setting where $z$ is the coordinate-wise product of two vectors $x,y \in \oo^n$, and $f$ only guarantees accuracy when in addition to $s$ and $t$, it also gets $x_{s} = \set{x_i \colon s_i = 1}$ and $y_{-s} = \set{y_i \colon s_i = -1}$ as inputs.}
	%\Enote{explain the differences from Theorem 4.6 in \cite{HaitnerMST22}.}%\Nnote{Maybe for after the deadline: I guess we don't really need this theorem and can use the proof of the easy case. Maybe we want to write the simple proof}
\end{fact}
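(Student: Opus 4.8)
The plan is to obtain the statement as a corollary of the reconstruction paradigm of \cite{HaitnerMST22} (their Theorem 4.6) rather than reprove it from scratch. First I would set up the reduction that embeds the present, simpler setting into their more general one: as the footnote indicates, their theorem handles the case where the secret $z$ is the coordinate-wise product $x \odot y$ and the approximator additionally receives $x_s$ and $y_{-s}$, so I would instantiate their statement with a degenerate choice that collapses the product to a single vector (e.g.\ absorbing one factor into $z$ and folding the extra inputs into the auxiliary string $t$, or taking the second factor to be all-ones). Under this embedding, our hypothesis --- that $f(s,t)$ approximates $\ip{z,s}$ within $a$ with probability at least $e^{\eta_1 \eps}\eta_2 a/\sqrt m$ over $s \la \oo^m$ and $(z,t) \la \cD$ --- matches the accuracy premise of their reconstruction, and their conclusion yields the oracle-aided \ppt distinguisher $\Dist$ together with the one-sided multiplicative gap claimed here.

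The reason the constants line up, and the intuition I would use to verify the embedding, is the following. For $z \in \oo^m$ and uniform $s$, the inner product $\ip{z,s}$ is a sum of $m$ independent $\pm 1$ terms, hence has standard deviation $\sqrt m$ and, by anti-concentration, lands in any fixed window of width $2a$ with probability only $\Theta(a/\sqrt m)$. Thus the trivial predictor ``output $0$'' already achieves accuracy $a$ with probability $\approx a/\sqrt m$; the hypothesis asserts that $f$ beats this baseline by the factor $e^{\eta_1\eps}\eta_2$, so $f$ must carry genuine information about $z$. To extract a single coordinate I would exploit the identity $\ip{z^{(j)},s} = \ip{z,s} - 2 z_j s_j$: flipping the $j$-th coordinate of the secret shifts the target by $\pm 2 z_j s_j$, so an approximator tuned to $z$ behaves measurably differently on the candidate $z$ than on $z^{(j)}$. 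The distinguisher $\Dist(j,w,t)$ then queries $f$ on freshly sampled seeds $s$ (legitimate, since $s$ is uniform and independent of $t$, matching the premise), compares $f(s,t)$ against $\ip{w,s}$, and aggregates these comparisons into an accept/reject decision that is noticeably more likely to accept when $w=z$ than when $w=z^{(j)}$.

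The main obstacle --- and the part genuinely carried by the cited theorem rather than by the soft intuition above --- is turning a \emph{small} additive distinguishing advantage (on the order of $a/\sqrt m$, far below $\eps$) into the \emph{one-sided multiplicative} gap $\pr{\Dist(j,z,t)=1} > e^{\eps}\pr{\Dist(j,z^{(j)},t)=1} + 1/m$ needed to contradict differential privacy. A naive threshold test (accept iff $f(s,t)-\ip{w,s} \le a$) does not obviously separate the two cases, because adding $\pm 2 z_j s_j$ to a quantity spread over a window of width $a \gg 1$ changes the acceptance probability only at second order and with ambiguous sign; the delicate point is to design the decision rule and to control the tail of the error distribution so that the ``wrong coordinate'' acceptance probability is provably smaller by the full factor $e^{\eps}$, not merely by an additive amount. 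This is exactly why the premise must be one-sided and the probability threshold must scale like $a/\sqrt m$, and it is the technical core I would defer to Theorem 4.6 of \cite{HaitnerMST22}, checking only that my embedding preserves these quantitative features.
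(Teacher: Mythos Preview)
Your proposal is correct and aligns with the paper's own treatment: the paper does not prove this fact but simply cites it as following from Theorem~4.6 of \cite{HaitnerMST22}, noting in the footnote that the cited result handles the harder coordinate-wise-product setting. Your plan to instantiate that theorem with a degenerate embedding (e.g., one factor all-ones, absorbing the extra revealed coordinates into $t$) is exactly the right way to make the ``follows by'' explicit, and your intuition paragraph correctly identifies where the nontrivial work lives and why it is deferred.
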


%Note that the output of the following $\GenView^{C}()$ (\cref{alg:GenView}) is distributed the same as the joint distribution of $(Z,T)$. 
Now, we would like to apply \cref{fact:prev-rec} with $\cD = \GenView^C()$ and $a = 1000 \ell$. To do that, \cref{fact:prev-rec} yields that we need to choose $k$ such that $\frac{e^{\eta_1 \cdot \eps}\cdot \eta_2  \cdot 1000\ell}{\ceil{k/4}} \leq \frac1{2000}$, which holds by choosing $k = \floor{e^{\lambda_1 \eps}\cdot \lambda_2 \cdot \ell^2}$ with 
\begin{align}\label{eq:lamdas}
	\lambda_1 = \eta_1 \text{ and }\lambda_2 = 10^7 \eta_2 + 1,
\end{align}
where $\eta_1,\eta_2$ are the constants from \cref{fact:prev-rec}.

We deduce from \cref{fact:prev-rec,eq:our-good-f} that 

\begin{align}\label{eq:Dist-gap}
	\ppr{(z,t) \la \cD, \: j \la [m]}{\Dist^{\cD,f}(j, z, t) = 1} > e^{\eps}\cdot \ppr{(z,t) \la \cD, \: j \la [m]}{\Dist^{\cD,f}(j ,z^{(j)}, t) = 1} + \frac1m.
\end{align}

We now ready to describe our algorithm $\Bct$ that satisfies Property~\ref{item:privacy-of-X} of \cref{lemma:DPIP-to-AWEC}.

\begin{algorithm}[Algorithm $\Bct$]\label{alg:Bct}
	\item Inputs: $i \in [n]$, $x_{-i} = (x_1,\ldots,x_{i-1},x_{i+1},\ldots,x_n) \in \oo^{n-1}$, $y \in \oo^n$ and $v \in \zo^*$.
	\item Oracles: A channel $C = ((X,U),(Y,V))$ and an algorithm $\Bc$.
	\item Operation:~
	\begin{enumerate}
		\item Sample $(\psi, r,i_1,\ldots,i_k,\cH,\ty_{\bar{\cH}}) \la \GenRand()$ (\cref{alg:GenRand}).
		\item If $i \notin \cH$, output $\bot$.
		\item Otherwise:
		\begin{enumerate}
			\item Let $t = (y,v,r,i_1,\ldots,i_k,\cH,\ty_{-\cH},x_{-\cH})$ where $\ty_{i'} = y_{i'}$ for $i' \in [n]\setminus (\cH \cup \bar{\cH})$.
			\item For $b \in \oo$: Let $x^b = (x_1,\ldots,x_{i-1},b, x_{i+1},\ldots,x_n)$ and $z^b = x^b_{\cH} \in \oo^m$, where $m = \ceil{k/4}$.
			\item Let $j \in [m]$ be the index such that $z^1_j \neq z^{-1}_j$.
			\item Sample $b \la \oo$ and $o \la \Dist^{\GenView^C, f}(j,z^b,t)$ where $\GenView^C$ is \cref{alg:GenView} with oracle access to $C$, and $f(s,t = (\psi,t')) = \Bc_{\psi}(s,t') + \ip{x_{\bar{\cH}}, \ty_{\bar{\cH}}}$. 
			\color{gray}{\# Recall that $x_{\bar{\cH}}$ is a sub-string of $x_{-\cH}$ (part of $t$) and that $ \ty_{\bar{\cH}}$ is a sub-string of $\ty_{-\cH}$ (also part of $t$).}
			\color{black}{\item If $o = 1$, output $b$. Otherwise, output $\bot$.}
		\end{enumerate}
	\end{enumerate}
\end{algorithm}

\begin{proof}[Proof of Property~\ref{item:privacy-of-X} of \cref{lemma:DPIP-to-AWEC} using \cref{alg:Bct}]
	
In the following, let $\cD = \GenView^C()$ and $((X,U),(Y,V)) \la C$. Consider a random execution of $\Bct^{\Bc, C}(I,\: X_{-I}, \: Y, \: V)$ for $I \la [n]$, and let $T, H, B, O, Z, J$ be the values of $\:\: t, h,\cH,o, z, j$ in the execution. 
Let $p = \ppr{(z,t) \la \cD, \: j \la [m]}{\Dist^{\cD,f}(j ,z^{(j)}, t) = 1}$, and note that conditioned on $I \in H$, $J$ is distributed uniformly over $[m]$. Therefore,
$$\pr{\Dist^{\cD,f}(J, Z^{(J)}, T) = 1 \mid I \in H} = p$$ and $$\pr{\Dist^{\cD,f}(J, Z, T) = 1 \mid I \in H} = \ppr{(z,t) \la \cD, \: j \la [m]}{\Dist^{\cD,f}(j ,z, t) = 1} \geq e^\eps p + \frac1m,$$ where the inequality holds by \cref{eq:Dist-gap}.
Thus, the following holds for $X^*_i = \Bct^{\Bc, C}(i,\: X_{-i}, \: Y, \: V)$:

\begin{align*}
	\eex{i \la [n]}{\pr{X^*_i = -X_i}}
	&= \pr{X_I^* = -X_I}\\
	&=\pr{\set{I \in H} \land \set{B = -X_I} \land \set{O=1}}\\
	&= \paren{\frac{m}{n}\cdot \pr{H \neq \emptyset}} \cdot \frac12 \cdot  \pr{O=1 \mid \set{I \in H}\land \set{B = -X_I}}\\
	&= \frac{m}{2n} \cdot \pr{H \neq \emptyset} \cdot \pr{\Dist^{\cD,f}(J, Z^{(J)}, T) = 1 \mid I \in H}\\
	&= \frac{m}{2n} \cdot \pr{H \neq \emptyset} \cdot p,
\end{align*}
 and 

\begin{align*}
	\eex{i \la [n]}{\pr{X^*_i = X_i}}
	&=\pr{\set{I \in H} \land \set{B = X_I} \land \set{O=1}}\\
	&= \paren{\frac{m}{n}\cdot \pr{H \neq \emptyset}} \cdot \frac12 \cdot  \pr{O=1 \mid \set{I \in H}\land \set{B = X_I}}\\
	&= \frac{m}{2n} \cdot \pr{H \neq \emptyset} \cdot\pr{\Dist^{\cD,f}(J, Z, T) = 1 \mid I \in H}\\
	&> \frac{m}{2n}  \cdot \pr{H \neq \emptyset} \cdot (e^{\eps} p + 1/m)\\
	&= e^{\eps} \cdot \paren{\frac{m}{2n}  \cdot \pr{H \neq \emptyset} \cdot p} + \frac{1}{2n}\cdot \pr{H \neq \emptyset},\\
	&>  e^{\eps} \cdot \eex{i \la [n]}{\pr{X^*_i = -X_i}} + \delta,
\end{align*}
which concludes the proof. The last inequality holds since $\pr{H \neq \emptyset} \geq 1-\frac1{10000}$ and $\delta \leq \frac1{3n}$.

\end{proof}

\iffull
\else

\section*{Acknowledgments}

Iftach Haitner was supported by Israel Science Foundation grants  836/23.
Noam Mazor was partially supported by NSF CNS-2149305, AFOSR
Award FA9550-23-1-0312 and AFOSR Award FA9550-23-1-0387 and ISF Award 2338/23.
Jad Silbak was supported by the Khoury College Distinguished Post-doctoral Fellowship.
Eliad Tsfadia and Chao Yan were supported by a gift to Georgetown University.

\fi

\printbibliography

\iffull

\appendix

\section{\cite{HaitnerMST22}'s Protocol Cannot Imply $\OT$}\label{appendix:HaitnerMST22}

In this section, we show that for some carefully chosen $\CDP$ and accurate protocol $\Pi$, the joint view of the parties in \cite{HaitnerMST22}'s protocol (\cref{protocol:HaitnerMST22}) can be \emph{simulated} using a trivial protocol, without using $\Pi$ at all.

\begin{protocol}[\cite{HaitnerMST22}'s protocol]\label{protocol:HaitnerMST22}
    \item Oracle: An accurate $\CDP$ protocol $\Pi$ for the inner-product.
    \item Operation:~
    \begin{enumerate}
        \item $\Ac$ and $\Bc$ choose random inputs $x\in \set{-1,1}^n$ and $y\in \set{-1,1}^n$, respectively.
        \item The parties interact using $\Pi$ to get approximation $z$ of $\langle x,y \rangle$.
        \item $\Ac$ chooses a random string $r \gets \zn$, and sends $r, x_r=\set{x_i \colon r_i=1}$ to $\Bc$. $\Bc$ replies with $y_{- r}=\set{y_i \colon r_i=0}$.
        \item Finally, $\Ac$ computes and outputs $\out_\Ac=\langle x_{-r},y_{-r} \rangle$, and $\Bc$ outputs $\out_\Bc=z-\langle x_{r},y_{r} \rangle$.
\end{enumerate}
\end{protocol}

To see this, assume that $\Pi$ is a protocol that on inputs $x$ and $y$, outputs $z=\langle x,y \rangle+e_\Ac+e_\Bc$, where $e_\Ac$ and $e_\Bc$ are independent samples from the $\Lap(2/\eps)$ distribution. Moreover, assume that $\Pi$ reveals $e_\Ac$ to $\Ac$ and $e_\Bc$ to $\Bc$ (and nothing else).  Such a protocol is indeed differential private, and it can be implemented using secure multi-party computation. Moreover, as we show in this work, such a protocol can be used to construct OT. However, when composed with the $\KA$ protocol of \cite{HaitnerMST22}, the resulting protocol can be simulated trivially.\footnote{More formally, and using the definition given in \cref{sec:protocol}, we claim that the channel induced by executing \cref{protocol:HaitnerMST22} with oracle access to the channel $$\Pi=\set{((x,(\langle x,y \rangle + e_A+e_B,e_A)),(y,(\langle x,y \rangle + e_A+e_B,e_B)))}_{x,y\gets \oo^n, e_A,e_B\gets \Lap(2/\epsilon)}$$ is a trivial channel.} 

Indeed, note that the view of $\Ac$ in $\Pi$ only contains $x,z$ and  $e_A$, while the view of $\Bc$ only contains $y,z$ and $e_B$. In this case, the view of $\Ac$ in the $\KA$ protocol of \cite{HaitnerMST22} contains $x,z, e_A, r$ and $y_{-r}$, while the view of $\Bc$ contains $y,z, e_B, r$ and $y_{r}$.
We next explain how to simulate this view without using $\Pi$. Consider the following protocol $\Pi'$ that  simulates the $\KA$ protocol in a reverse order:

\begin{protocol}[The simulation $\Pi'$]\label{protocol:trivial}
    \item Operation:~
    \begin{enumerate}
        \item $\Ac$ and $\Bc$ choose random inputs $x\in \set{-1,1}^n$ and $y\in \set{-1,1}^n$, respectively.
               \item $\Ac$ chooses a random string $r \gets \zn$, and sends $r, x_r=\set{x_i \colon r_i=1}$ to $\Bc$. $\Bc$ replies with $y_{- r}=\set{x_i \colon r_i=0}$.
               \item $\Ac$ samples $e_A\gets \Lap(2/\eps)$ and sends $z_A=\langle x_{r},y_r\rangle + e_A$  to $\Bc$. $\Bc$ samples $e_B\gets \Lap(2/\eps)$ and sends $z_B=\langle x_{-r},y_{-r}\rangle + e_B$  to $\Ac$.
        \item The output of the protocol is $z=z_A+z_B$.
\end{enumerate}
\end{protocol}
Clearly, \cref{protocol:trivial} is a trivial protocol and does not use any cryptographic assumptions. However, the views of $\Ac$ and $\Bc$ in $\Pi'$ contain all the information learned by the parties in the $\KA$ protocol ($(x,z,e_A,r,y_{-r})$ and $(y,z,e_B,r,x_{r})$ respectively). Moreover, we claim that the parties in $\Pi'$ do not learn any information that the parties in the $\KA$ protocol did not learn. Indeed, the only new value learned by $\Bc$, $z_A,$ can be also computed by $\Bc$ in the $\KA$ protocol by computing $z-e_B-\langle x_r,y_r \rangle$. Similarly, $z_B$ can be already computed by $\Ac$. 

We note that every protocol that uses only a communication channel cannot be used to construct $\OT$ unless $\OT$ already exists, and similarly, every protocol that uses  $\Pi'$ as a subroutine (in a black-box manner) cannot be used to construct $\OT$. Since \cref{protocol:HaitnerMST22} (that is, the views and outputs of the parties when running \cref{protocol:HaitnerMST22}) are the same as $\Pi'$, we conclude that \cref{protocol:HaitnerMST22} could not be used to construct $\OT$ in a black-box manner.  
%we conclude that \cref{protocol:HaitnerMST22} (that is, the views and outputs of the parties when running \cref{protocol:HaitnerMST22}) cannot be used in a black-box manner to construct $\OT$ unless $\OT$ already exists.

%\Nnote{Since every protocol that uses $\Pi'$ as a subroutine (in a black-box manner) cannot be used to construct OT unless $\OT$ already exists, we conclude that \cref{protocol:HaitnerMST22} (that is, the views and outputs of the parties when running \cref{protocol:HaitnerMST22}) cannot be used in a black-box manner to construct \OT (unless OT already exists).}

\section{\WEC from \AWEC (Proof of \cref{lemma:AWEC-to-WEC})}\label{sec:AWEC-to-WEC}

In this section, we prove \cref{lemma:AWEC-to-WEC} and  show how to implement \WEC (\cref{def:WEC}) from \AWEC (\cref{def:AWEC}) using a \ppt protocol. Crucially, the security proof is constructive, so that it could be used in the computational case as well (see, \cref{cor:CompAWEC to CompWEC}).

%Find a place to put this 

To prove \cref{lemma:AWEC-to-WEC}, we will need the following easy version of Goldriech-Levin \cite{GoldreichL89}.
\begin{lemma}
\label{prel:gl:weak:prob}
There exists a \ppt oracle-aided  algorithm $\Dec$ such that the following holds. Let $n\in N$ be a number, $x\in \zn$, and %$f\colon \zn \to \zn$ be a (possibly randomized) function, 
 and let $\Pred$ be an algorithm such that
\begin{align*}
\ppr{ r\gets \zn}{\Pred(r)=\GL(x,r)} > 3/4+0.001,
\end{align*}
 where $\GL(x,r)\eqdef \langle x,r \rangle$ is the Goldreich-Levin predicate. 
Then $\pr{\Dec^\Pred(1^n)=x}=1-\negl(n)$.
\end{lemma}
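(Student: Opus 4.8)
The plan is to exploit that the predictor $\Pred$ agrees with $\GL(x,\cdot)$ on strictly more than a $3/4$-fraction of inputs, which places us in the ``high-agreement'' regime of Goldreich--Levin. In this regime the full list-decoding / Fourier machinery is unnecessary, and a simple self-correcting majority vote suffices to recover $x$ exactly with overwhelming probability. The key algebraic identity is that over $\GF(2)$, for the $i$-th standard basis vector $e_i$ and any $r \in \zn$,
\begin{align*}
\GL(x,r) \oplus \GL(x,r\oplus e_i) = \ip{x,e_i} = x_i .
\end{align*}
So to learn a single bit $x_i$ I would query $\Pred$ at a uniformly random point $r$ and at the shifted point $r \oplus e_i$, and output the XOR of the two answers as a guess for $x_i$.

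First I would define $\Dec^{\Pred}(1^n)$ as follows: fix a sufficiently large polynomial sample size $m = m(n)$ (e.g.\ $m = n^2$); for each coordinate $i \in [n]$, sample $m$ independent uniform strings $r^{(1)},\dots,r^{(m)} \gets \zn$, form the guesses $g_j = \Pred(r^{(j)}) \oplus \Pred(r^{(j)} \oplus e_i)$, set $\hat x_i = \maj(g_1,\dots,g_m)$, and finally output $\hat x = (\hat x_1,\dots,\hat x_n)$. This runs in time $\poly(n)$ and makes $2mn$ oracle calls.

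For the analysis, the central point is that although $r^{(j)}$ and $r^{(j)} \oplus e_i$ are correlated, each is \emph{individually} uniform, so the predictor's guarantee applies to each one marginally. Hence $\Pr[\Pred(r^{(j)}) \ne \GL(x,r^{(j)})] < 1/4 - 0.001$ and likewise for the shifted query, and a union bound gives $\Pr[g_j = x_i] \ge 1 - 2(1/4 - 0.001) = 1/2 + 0.002$; moreover, when both queries are answered correctly, the identity above forces $g_j = x_i$. Since the $m$ guesses are independent across $j$ for a fixed $i$, Hoeffding's inequality bounds the majority-vote error by $\exp(-2m(0.002)^2)$, which is $\negl(n)$ for $m = n^2$. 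A union bound over the $n$ coordinates then yields $\pr{\Dec^{\Pred}(1^n) = x} \ge 1 - n \cdot \negl(n) = 1 - \negl(n)$, as required.

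The main obstacle --- such as it is --- is that the advantage over $1/2$ in each single guess is only a small \emph{constant} ($0.002$), so one must check that polynomially many samples still drive the per-coordinate failure probability below any inverse polynomial; this is precisely what the constant (rather than inverse-polynomial) gap buys via Hoeffding, while the final union bound over the $n$ coordinates is what dictates taking $m$ to be a large enough polynomial. The only subtlety requiring care is the correlation between the two query points, which is harmless here because the argument invokes only their individual uniformity and never their joint distribution.
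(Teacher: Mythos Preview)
Your proposal is correct and follows essentially the same approach as the paper: decode each bit $x_i$ by majority vote over $\Pred(r)\oplus\Pred(r\oplus e_i)$ for many independent $r$, using the union bound over the two (individually uniform) query points to get a constant advantage over $1/2$, then Chernoff/Hoeffding plus a union bound over coordinates. The only cosmetic differences are that the paper uses $n$ samples per coordinate (already sufficient) rather than $n^2$, and records the per-trial advantage as $0.001$ rather than your sharper $0.002$.
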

\begin{proof}[Proof of \cref{prel:gl:weak:prob}]
We use $\Pred$ to decode each bit of $x$ separately. For every $i$, let $e_i$ be the vector that has $1$ in the $i$-th entry, and $0$'s everywhere else. Observe that, for a uniformly chosen $R\gets \zn$, 
$$\pr{\Pred(R)=\GL(x,R) \land \Pred(R\xor e_i)=\GL(x,R\xor e_i)}\ge 1/2+0.001.$$
Thus,
$$\pr{\Pred(R)\xor\Pred(R\xor e_i) =\GL(x,R) \xor \GL(x,R\xor e_i)}\ge 1/2+0.001.$$
By linearity of the inner product we get that,
$$\pr{\Pred(R)\xor\Pred(R\xor e_i) =x_i}\ge 1/2+0.001.$$
Let $\Dec$ be the algorithm that for every $i$, computes $\Pred(R)\xor\Pred(R\xor e_i)$  for $n$ random values of $R$, and let $x'_i$ to be the majority of the outputs. Then, $\Dec$ outputs $x'=x'_1,\dots,x'_n$. By Chernoff bound, $x'_i$ is equal to $x_i$ with all but negligible probability. By the union bound, the above is true for all $i$'s simultaneously  with all but negligible probability, as we wanted to show.
\end{proof}
We are now ready to prove \cref{lemma:AWEC-to-WEC}.
\begin{proof}[Proof of \cref{lemma:AWEC-to-WEC}]
We now define the protocol $\Lambda^C$ as follows:
\begin{protocol}[$\Lambda^C=(\Ac,\Bc)$]
\item Oracle access: A channel $C =((\OA,\VA),(\OB,\VB))$.
	\item Operation:
	\begin{enumerate}
            \item Sample $((\oA,\vA),(\oB,\vB))\from C$. $\Ac$ gets $(\oA,\vA)$ and $\Bc$ gets $(\oB,\vB)$. 
		
			\item $\Ac$ chooses a random offset $s\gets [1000\ell]$ and sends it to $\Bc$. Let $\oA'=\ceil{\frac{\oA+s}{1000\ell}}$ and $\oB'=\ceil{\frac{\oB+s}{1000\ell}}$ (if $\oB=\bot$, let $\oB'=\bot$). 
        \item $\Ac$ chooses a random vector $r\from \zo^{\log(n)}$ and sends it to $\Bc$. Let $\hoA=\langle \oA',r \rangle$ and $\hoB=\langle \oB',r \rangle$ (if $\oB'=\bot$, let $\hoB=\bot$).
        \item $\Ac$ outputs $\hoA$ and $\Bc$ outputs $\hoB$.
        \end{enumerate}
\end{protocol}
Let $\tilde{C}$ be the channel induces by $\Lambda^C=(\Ac,\Bc)$ defined above. 
Let $\OA,\VA,\OB,\VB,S,R,\OA',\OB',\hOA,\hOB$ be the random variables that corresponds to the value of $\oA,\vA,\oB,\vB,s,r,\oA',\oB',\hoA,\hoB$ in a random execution of the above protocol. Denote $\hVA=(\VA,S,R)$ and $\hVB=(\VB,S,R)$ and note that $(\hVA,\hOA),(\hVB,\hOB)$ defines the channels $\tilde{C}$.

We now prove that if $C=((\OA,\VA),(\OB,\VB))$ is an $(\ell,\alpha,p,q)$-\AWEC then $\tilde{C}$ is an $(\alpha'=\alpha+0.001,\: p' = p ,\:  q' = 1/2 + 2.001q)$-\WEC.

\paragraph{Agreement:} If $\size{\OA-\OB}\le \ell$, then $\ppr{S}{\ceil{\frac{\OA+S}{1000\ell}}\ne \ceil{\frac{\OB+s}{1000\ell}}}\le 1/1000$. Thus, 
\begin{align*}
    \pr{\hOA\ne \hOB\mid \hOB\ne \bot} &\leq\pr{|\OA - \OB|\leq \ell\mid \OB\ne \bot}+1/1000\\
    &= \eps+1/1000=\eps'
\end{align*}

\paragraph{$\Bc$'s privacy:} Recall that the view of $\Ac$ in the above protocol is $\hVA=(\VA,S,R)$, and note that $S,R$ are independent of $\OB$. Since, $\hOB=\bot$ iff $\OB=\bot$ it follows that it follows that for every algorithm $\Dc$:
    \begin{align*}
    &\size{{\pr{\Dc(\hVA) = 1 \mid \hOB \neq \bot} - \pr{\Dc(\hVA) = 1 \mid \hOB = \bot}} }\\
   %  &=\size{{\pr{\Dc(\hVA) = 1 \mid \OB \neq \bot} - \pr{\Dc(\hVA) = 1 \mid \OB = \bot}} }\\
    &=\size{{\pr{\Dc(\VA,S,R) = 1 \mid \OB \neq \bot} - \pr{\Dc(\VA,S,R) = 1 \mid \OB = \bot}} }\\
    &=\size{\pr{\Dc(\VA) = 1 \mid \OB \neq \bot} - \pr{\Dc(\VA) = 1 \mid \OB = \bot}} \le p=p'.
    \end{align*}
\paragraph{$\Ac$’s privacy:}  Assume towards a contradiction that there exists an algorithm $\Dc$ such that

\begin{align}\label{eq:avrging}
\pr{\Dc(\hVB)=\hOA  \mid \hOB\neq\bot} \ge\frac{1+q'}{2}=  3/4+q+0.01.
\end{align}
Let $\cG=\set{(\vB,s)\colon \ppr{R}{\Dc(\vB,s,R)=\hOA\mid\VB=\vB,S=s,\hOB\neq \bot}\ge 3/4+0.001}$, and first note that $\ppr{\VB,S}{(\VB,S)\in \cG\mid \hOB\neq \bot}\ge q+0.009$.
Indeed, otherwise it holds that
\begin{align*}
&\pr{\Dc(\hVB)=\hOA  \mid \hOB\neq\bot}\\
& = \ppr{\VB,S}{(\VB,S)\in \cG\mid \hOB\neq \bot} + \ppr{\VB,S}{(\VB,S)\notin \cG\mid \hOB\neq \bot}\cdot(3/4+ 0.001)\\
&< (q+0.009) + 1\cdot (3/4+ 0.001)\\
&= 3/4+ q+0.01
\end{align*}
in contradiction to \cref{eq:avrging}.
   % $$\ppr{\VB,S}{\ppr{R}{\Dc(\VB,S,R)=\hOA }\ge 3/4+\delta\cdot\alpha\mid \hOB=\bot}\ge \delta(1-\alpha).$$ 
    Next, by \cref{prel:gl:weak:prob} there exists an algorithm $\Dc'$ such that 
    $$
    \pr{\Dc'(\VB,S)=O'_A \mid (\VB,S)\in\cG, \OB'\neq \bot}\ge 1-o(1)
    $$ 
    Which implies that, 
    \begin{align*}
    \pr{\Dc'(\VB,S)=O'_A \mid  \OB'\neq \bot}
    &\ge \pr{\Dc'(\VB,S)=O'_A \mid (\VB,S)\in\cG, \OB'\neq \bot}\cdot \pr{\Dc'(\VB,S)\in\cG \mid\hOB\neq \bot}\\
    &\ge (q+0.009)(1-o(1))\\
    &\ge q.
    \end{align*}
    %for every fixing of $\VB,S$ to values $\vB,s$ such that 
    % $$
    % \ppr{R}{\Dc(\vB,s,R)=\hOA\mid\VB=\vB,S=s,\hOB= \bot}\ge 3/4+\delta\cdot\alpha
    % $$
    % it holds that 
    % it holds that $\Dc'$
    % there exists an algorithm $\Dc'$
    % it holds that $$\pr{\Dc'(\VB,S)=O'_A }\ge (1-\alpha)$$ ($\Dc'$ is the GL reconstruction), 
    Since by definition $\size{(O'_A\cdot 1000\ell-S)-\OA}\le 1000\ell$, and $S$ is independent of $\VB$, it follows that there exists an algorithm $\Dc''$ such that  
    $$\pr{\size{\Dc''(\VB)-\OA}\le 1000\ell\mid \OB\neq \bot}\ge \delta(1-2\alpha)> q.$$
Contradicting the fact that $C$ is an $(\ell,\alpha,p,q)$-\AWEC.
\end{proof}
\section{Missing Proofs}\label{sec:missing-proofs}

\subsection{Proving \cref{prop:hard-to-guess-inf,prop:hard-to-guess-comp}}\label{sec:missing-proofs:hard-to-guess}

We make use of the following claim.

\begin{claim}\label{claim:X-star}
    Let $X \la \oo$ and let $X^*$ be a random variable over $\set{-1,1,\bot}$ (correlated with $X$) such that for every $b,b' \in \oo$:
    \begin{align*}
        \pr{X^* = b \mid X = b'} \leq e^{\eps}\cdot \pr{X^* = b \mid X = -b'} + \delta.
    \end{align*}
    Then
    \begin{align*}
        \pr{X^* = X} \leq e^{\eps}\cdot \pr{X^* = -X} + \delta.
    \end{align*}
\end{claim}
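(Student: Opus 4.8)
The plan is to expand both sides via the law of total probability over the two equiprobable values of $X$, and then apply the hypothesis termwise with carefully matched arguments. Since $X \la \oo$ is uniform, we have $\pr{X = 1} = \pr{X = -1} = \tfrac12$, and so the first step I would carry out is to write
\[
	\pr{X^* = X} = \tfrac12\pr{X^* = 1 \mid X = 1} + \tfrac12\pr{X^* = -1 \mid X = -1},
\]
and, symmetrically,
\[
	\pr{X^* = -X} = \tfrac12\pr{X^* = -1 \mid X = 1} + \tfrac12\pr{X^* = 1 \mid X = -1}.
\]

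Next I would invoke the hypothesis with the two relevant instantiations of $(b,b')$. Taking $b = b' = 1$ gives $\pr{X^* = 1 \mid X = 1} \leq e^{\eps}\pr{X^* = 1 \mid X = -1} + \delta$, and taking $b = b' = -1$ gives $\pr{X^* = -1 \mid X = -1} \leq e^{\eps}\pr{X^* = -1 \mid X = 1} + \delta$. The point to check is that the two terms on the right-hand sides of these inequalities are precisely the two terms appearing in the expansion of $\pr{X^* = -X}$; this matching is the only thing that requires a moment of attention, since one must keep straight that flipping the conditioning value $b'$ (as the hypothesis does) corresponds exactly to moving from the ``$X^* = X$'' event to the ``$X^* = -X$'' event.

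Substituting these two bounds into the expression for $\pr{X^* = X}$ and factoring out $e^{\eps}$ and the additive $\delta$ then yields
\[
	\pr{X^* = X} \leq e^{\eps}\cdot \tfrac12\paren{\pr{X^* = 1 \mid X = -1} + \pr{X^* = -1 \mid X = 1}} + \delta = e^{\eps}\cdot \pr{X^* = -X} + \delta,
\]
which is the claim. I do not anticipate any real obstacle here: the statement is a direct averaging argument, and the entire content is in choosing the correct pair $(b,b')$ for each of the two conditional terms so that the per-value $\DP$ inequalities line up with the unconditioned quantities. (Note that the event $X^* = \bot$ never appears on either side, so the restriction to the range $\set{-1,1,\bot}$ causes no difficulty — those outcomes simply contribute to neither $\set{X^* = X}$ nor $\set{X^* = -X}$.)
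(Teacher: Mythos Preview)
Your proposal is correct and essentially identical to the paper's proof: both expand $\pr{X^* = X}$ via total probability over the two equiprobable values of $X$, apply the hypothesis with $(b,b')=(1,1)$ and $(b,b')=(-1,-1)$, and recognize the resulting sum as $e^{\eps}\cdot\pr{X^* = -X} + \delta$.
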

\begin{proof}
    Compute
    \begin{align*}
        \pr{X^* = X } 
        &= \frac12 \cdot \pr{X^* = -1 \mid X = -1} + \frac12 \cdot \pr{X^* = 1 \mid X = 1}\\
        &\leq  \frac12 \cdot \paren{e^{\eps}\cdot \pr{X^* = -1 \mid X = 1} + \delta} + \frac12 \cdot \paren{e^{\eps}\cdot \pr{X^* = 1 \mid X = -1} + \delta}\\
        &= e^{\eps} \cdot \paren{\frac12 \cdot \pr{X^* = -1 \mid X = 1} + \frac12 \cdot \pr{X^* = 1 \mid X = -1}} + \delta\\
        &= e^{\eps} \cdot \pr{X^*_i = -X_i} + \delta.
    \end{align*}
\end{proof}

We next prove \cref{prop:hard-to-guess-inf}, restated below.

\begin{proposition}[Restatement of \cref{prop:hard-to-guess-inf}]
    \propHardToGuessInf
\end{proposition}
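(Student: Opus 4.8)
The plan is to reduce the statement to the already-established \cref{claim:X-star} by fixing the index $i$ and regarding $X_i$ as the single uniform bit and $X_i^* = g(i,X_{-i},f(X))$ as the correlated guess. Fix $i \in [n]$. By \cref{claim:X-star} (applied with $X = X_i$ and $X^* = X_i^*$), it suffices to verify its hypothesis: for every $b,b' \in \oo$,
\[
\pr{X_i^* = b \mid X_i = b'} \leq e^{\eps}\cdot \pr{X_i^* = b \mid X_i = -b'} + \delta.
\]
Once this is shown, the conclusion $\pr{X_i^* = X_i} \leq e^{\eps}\cdot \pr{X_i^* = -X_i} + \delta$ is immediate from the claim.

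To verify the hypothesis, I would first fix a value $w \in \oo^{n-1}$ for $X_{-i}$ and let $x^{w,c} \in \oo^n$ denote the vector whose $(-i)$-coordinates equal $w$ and whose $i$-th coordinate equals $c$. The vectors $x^{w,b'}$ and $x^{w,-b'}$ differ only in coordinate $i$, hence are neighboring in the sense of \cref{def:intro:DP}. Define the distinguisher $\Dc_w \colon \cY \to \zo$ by setting $\Dc_w(y) = 1$ exactly when $g(i,w,y) = b$, where $\Dc_w$ simply internalizes the randomness of the (randomized) predictor $g$. Applying the $(\eps,\delta)$-\DP guarantee of $f$ to $\Dc_w$ and the neighboring pair $(x^{w,b'}, x^{w,-b'})$ gives
\[
\pr{g(i,w,f(x^{w,b'})) = b} \leq e^{\eps}\cdot \pr{g(i,w,f(x^{w,-b'})) = b} + \delta.
\]

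Next I would average this inequality over $w \la \oo^{n-1}$ chosen uniformly. Since the bound is affine in the two probabilities and $\Ex_w[\delta] = \delta$, the averaged inequality is preserved. Because $X_{-i}$ is uniform and independent of $X_i$, conditioning on $X_i = b'$ makes $X = x^{X_{-i},b'}$ with $X_{-i}$ uniform, so $\eex{w}{\pr{g(i,w,f(x^{w,b'})) = b}} = \pr{X_i^* = b \mid X_i = b'}$, and likewise the right-hand side averages to $e^{\eps}\cdot\pr{X_i^* = b \mid X_i = -b'} + \delta$. This is precisely the hypothesis of \cref{claim:X-star}, completing the proof. I do not expect a genuine obstacle here; the only points requiring care are folding the internal randomness of the randomized $g$ into $\Dc_w$ so that the DP definition (which ranges over distinguishers) applies, and checking that the uniform averaging over $w$ preserves the multiplicative-plus-additive inequality — both of which are routine.
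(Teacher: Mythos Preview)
Your proposal is correct and follows essentially the same approach as the paper: reduce to \cref{claim:X-star}, fix $X_{-i}=w$, apply the $(\eps,\delta)$-\DP guarantee of $f$ to the neighboring pair differing only in coordinate $i$ with the distinguisher $y\mapsto \indic{g(i,w,y)=b}$, and then average over $w\la\oo^{n-1}$ using independence of $X_i$ and $X_{-i}$. The paper's proof is notationally identical (writing $h_{x_{-i}}(y)=g(i,x_{-i},y)$ in place of your $\Dc_w$), so there is nothing substantively different to compare.
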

\begin{proof}
    %Note that it is enough to prove the claim just for deterministic $g$'s because if the statement does not hold for a specific randomized $g$, then there exists a fixing of its randomness that it does not hold under this fixing. 
    
    Fix $b,b' \in \oo$ and $i \in [n]$. By \cref{claim:X-star}, it is sufficient to prove that
    \begin{align}\label{eq:X_i^*-goal}
        \pr{X_i^* = b \mid X_i = b'} \leq e^{\eps}\cdot \pr{X_i^* = b \mid X_i = -b'} + \delta.
    \end{align}
    For $x_{-i} = (x_1,\ldots,x_{i-1},x_{i+1},\ldots,x_n) \in \oo^{n-1}$, define the function
    \begin{align*}
        h_{x_{-i}}(y) = g(i,x_{-i},y).
    \end{align*}
    Since $f$ is $(\eps,\delta)$-\DP, for any $x_{-i} \in \oo^{n-1}$ it holds that
    \begin{align*}
        \pr{h_{x_{-i}}(f(x_1,\ldots,x_{i-1}, b', x_{i+1},\ldots,x_n)) = b} \leq e^{\eps}\cdot \pr{h_{x_{-i}}(f(x_1,\ldots,x_{i-1}, -b', x_{i+1},\ldots,x_n)) = b} + \delta.
    \end{align*}
    Thus,
    \begin{align*}
        \pr{X_i^* = b \mid X_i = b'}
        &= \eex{x_{-i} \la \oo^{n-1}}{\pr{g(i,x_{-i}, f(x_1,\ldots,x_{i-1}, b', x_{i+1},\ldots,x_n)) = b }}\\
        &= \eex{x_{-i} \la \oo^{n-1}}{\pr{h_{x_{-i}}(f(x_1,\ldots,x_{i-1}, b', x_{i+1},\ldots,x_n)) = b}}\\
        &\leq e^{\eps}\cdot \eex{x_{-i} \la \oo^{n-1}}{\pr{h_{x_{-i}}(f(x_1,\ldots,x_{i-1}, -b', x_{i+1},\ldots,x_n)) = b}} + \delta\\
        &= e^{\eps}\cdot\pr{X_i^* = b \mid X_i = -b'}+ \delta.
    \end{align*}
\end{proof}

We next prove \cref{prop:hard-to-guess-comp}, restated below.

\begin{proposition}[Restatement of \cref{prop:hard-to-guess-comp}]
    \propHardToGuessComp
\end{proposition}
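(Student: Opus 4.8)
The plan is to follow the structure of the information-theoretic proof of \cref{prop:hard-to-guess-inf} but to replace the single step there that genuinely uses privacy --- the \emph{pointwise} application of the $\DP$ inequality for each fixing of $x_{-i}$ --- by a non-uniform reduction, since in the computational setting the $\CDP$ guarantee (\cref{def:ComMech}) can only be invoked through \emph{one} poly-size distinguisher family decided in advance. Exactly as in the information-theoretic case, \cref{claim:X-star} is purely probabilistic and applies verbatim with $X = X_i$ and $X^* = X_i^*$, so it reduces the goal to establishing, for every fixed $b,b' \in \oo$ and all large enough $\pk$, the conditional inequality
\[ \pr{X_i^* = b \mid X_i = b'} \leq e^{\eps(\pk)} \cdot \pr{X_i^* = b \mid X_i = -b'} + \delta(\pk), \]
where the probability is over $X_{-i} \la \oo^{n(\pk)-1}$ and the coins of $f_\pk$ and $g_\pk$.

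For a suffix $x_{-i}\in\oo^{n(\pk)-1}$ and a bit $c\in\oo$, write $w_{x_{-i},c}$ for the database with $c$ in coordinate $i$ and $x_{-i}$ elsewhere, and set $p_\pk(x_{-i},c) \eqdef \pr{g_\pk(i,x_{-i},f_\pk(w_{x_{-i},c})) = b}$. Then the left- and right-hand sides above are $\eex{x_{-i}}{p_\pk(x_{-i},b')}$ and $\eex{x_{-i}}{p_\pk(x_{-i},-b')}$. In the proof of \cref{prop:hard-to-guess-inf} one applies $\DP$ separately for each $x_{-i}$ with the distinguisher ``output $1$ iff $g(i,x_{-i},\cdot)=b$'' and then averages. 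This is precisely the step that cannot be transplanted directly, because $\CDP$ fixes one distinguisher family while this distinguisher depends on the \emph{random} suffix $x_{-i}$.

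To get around this I would argue by contradiction. Suppose the proposition fails; then for infinitely many $\pk$ there is an index $i=i(\pk)$ violating the conclusion, and by the contrapositive of \cref{claim:X-star} there are $b=b(\pk),b'=b'(\pk) \in \oo$ for which the displayed conditional inequality fails, i.e.\ $\eex{x_{-i}}{p_\pk(x_{-i},b')} > e^{\eps}\cdot \eex{x_{-i}}{p_\pk(x_{-i},-b')} + \delta$. Since a \emph{pointwise} inequality $p_\pk(x_{-i},b')\le e^\eps p_\pk(x_{-i},-b')+\delta$ for all $x_{-i}$ would survive averaging, there must exist a single suffix $x_{-i}^\ast=x_{-i}^\ast(\pk)$ with
\[ p_\pk(x_{-i}^\ast,b') > e^{\eps}\cdot p_\pk(x_{-i}^\ast,-b') + \delta. \]
Now define a poly-size circuit family $\set{\Ac_\pk}$ that on each bad $\pk$ hard-wires the $\operatorname{poly}(\pk)$-bit advice $(i,x_{-i}^\ast,b)$ and on input $y$ outputs $1$ iff $g_\pk(i,x_{-i}^\ast,y)=b$ (and is trivial on the remaining $\pk$); evaluating $g_\pk$ inside costs only polynomial size. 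Taking the neighboring pair $w_\pk = w_{x_{-i}^\ast,b'}$ and $w'_\pk = w_{x_{-i}^\ast,-b'}$, which differ only in coordinate $i$, the last inequality reads $\pr{\Ac_\pk(f_\pk(w_\pk))=1} > e^{\eps}\cdot\pr{\Ac_\pk(f_\pk(w'_\pk))=1}+\delta$ for infinitely many $\pk$, contradicting the $(\eps,\delta)$-$\CDP$ of $f$. Note the pair and advice may vary with $\pk$, so no pigeonhole over $(b,b')$ is needed.

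The main obstacle --- and the only essential difference from \cref{prop:hard-to-guess-inf} --- is exactly this de-averaging step: passing from an inequality that holds on average over the random coordinates $X_{-i}$ to a violation on one fixed neighboring pair, so that a single distinguisher family suffices. Non-uniformity is what makes this free, since we may simply take the worst suffix as advice; a uniform reduction would instead have to sample $x_{-i}$ and argue that a noticeable fraction are bad, which is unnecessary here as $\CDP$ is defined against non-uniform poly-size circuits. The only routine bookkeeping I would still check is that the internal randomness of $g_\pk$ can be absorbed into $\Ac_\pk$ (or fixed to its best value by an extra averaging argument), and that $\bot$-outputs of $g$ cause no trouble because we only ever track the events $X_i^\ast=\pm 1$, exactly as in \cref{claim:X-star}.
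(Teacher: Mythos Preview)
Your proposal is correct and is essentially the paper's own argument: both proofs use the same de-averaging step to pass from a violation of the averaged inequality to a single bad suffix $x_{-i}^\ast$, hard-wire it as non-uniform advice into a poly-size distinguisher family, and invoke \cref{claim:X-star}. The only cosmetic difference is that the paper presents the argument in the forward direction --- first establishing that for all large enough $\pk$ the pointwise inequality holds for \emph{every} $(i,x_{-i})$ (since otherwise some ensemble of bad choices would violate $\CDP$), and then averaging --- whereas you run the same implication contrapositively.
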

\begin{proof}
    
    In the following, fix $b,b' \in \oo$, and for $\pk \in \bbN$, $i \in [n(\pk)]$ and $x_{-i} \in \oo^{n(\pk)-1}$, define 
    \begin{align*}
        h_{\pk}^{i,x_{-i}}(y) = b\cdot g_{\pk}(i,x_{-i},y).
    \end{align*}
    Note that for any ensemble $\set{(i,x_{-i})_{\pk} \in [n(\pk)]\times \oo^{n(\pk)-1}}_{\pk \in \bbN}$, the circuit family \\$\set{h_{\pk}^{(i, x_{-i})_{\pk}} = g_{\pk}((i,x_{-i})_{\pk},\cdot)}_{\pk \in \bbN}$ has poly-size. 
    Since $f = \set{f_{\pk}}_{\pk \in \bbN}$ is $(\eps,\delta)$-\CDP, then for large enough $\pk$, the following holds for every $i \in [n(\pk)]$ and $x_{-i} \in \oo^{n(\pk)-1}$:
    \begin{align*}
        \lefteqn{\pr{h_{\pk}^{i, x_{-i}}(f_{\pk}(x_1,\ldots,x_{i-1}, b', x_{i+1},\ldots,x_n)) = 1}}\\
        &\leq e^{\eps(\pk)}\cdot \pr{h_{\pk}^{i,x_{-i}}(f(x_1,\ldots,x_{i-1}, -b', x_{i+1},\ldots,x_n)) = 1} + \delta(\pk),
    \end{align*}
    as otherwise, there would exist an ensemble $\set{(i,x_{-i})_{\pk} \in [n(\pk)]\times \oo^{n(\pk)-1}}_{\pk \in \cS}$ for an infinite set $\cS \subseteq \bbN$ such that the circuit family 
    $\set{h_{\pk}^{(i, x_{-i})_{\pk}}}_{\pk \in \cS}$ violates the $(\eps,\delta)$-\CDP property of $f$. 

    In the following, fix such large enough $\pk$ and $i \in [n]$ for $n = n(\kappa)$, let $X = (X_1,\ldots,X_{n}) \la \oo^{n}$ and $X_i^* = g_{\pk}(i,X_{-i},f_{\pk}(X_1,\ldots,X_n))$, and compute
    \begin{align*}
        \lefteqn{\pr{X_i^* = b \mid X_i = b'}}\\
        &= \eex{x_{-i} \la \oo^{n-1}}{\pr{g_{\pk}(i,x_{-i}, f_{\pk}(x_1,\ldots,x_{i-1}, b', x_{i+1},\ldots,x_n)) = b }}\\
        &= \eex{x_{-i} \la \oo^{n-1}}{\pr{h_{\pk}^{i,x_{-i}}(f(x_1,\ldots,x_{i-1}, b', x_{i+1},\ldots,x_n)) = 1}}\\
        &\leq e^{\eps(\pk)}\cdot \eex{x_{-i} \la \oo^{n-1}}{\pr{h_{\pk}^{i,x_{-i}}(f(x_1,\ldots,x_{i-1}, -b', x_{i+1},\ldots,x_n)) = 1}} + \delta(\pk)\\
        &= e^{\eps(\pk)}\cdot \eex{x_{-i} \la \oo^{n-1}}{\pr{g_{\pk}(i,x_{-i}, f_{\pk}(x_1,\ldots,x_{i-1}, -b', x_{i+1},\ldots,x_n)) = b}} + \delta(\pk)\\
        &= e^{\eps(\pk)}\cdot\pr{X_i^* = b \mid X_i = -b'}+\delta(\pk).
    \end{align*}
    Since the above holds for any $b, b' \in \oo$, we conclude by \cref{claim:X-star} that 
    \begin{align*}
        \pr{X_i^* = X_i} \leq e^{\eps(\pk)}\cdot \pr{X_i^* = -X_i} + \delta(\pk),
    \end{align*}
    as required.
\end{proof}

\subsection{Proving \cref{lemma:property1:prediction}}\label{sec:prediction-lemma}

To prove \cref{lemma:property1:prediction}, we use the following lemma that measures the distance between two uniform stings conditioned on a random index $i$ either being fixed to $0$ or to $1$.

\begin{lemma}\label{lem:distance-I}
    Let $R \la \zo^n$. For any (randomized) function $F:\{0,1\}^n\rightarrow \{0,1\}$ and $\alpha > 0$, it holds that
    \begin{align}\label{eq:f-alpha}
        \ppr{i \la [n]}{\size{\:\ex{F(R) \mid R_i = 0}-\ex{F(R) \mid R_i = 1}\:}\geq \alpha} \leq \frac{2}{n \alpha^2},
    \end{align}
    where the expectations are taken over $R$ and the randomness of $F$.
\end{lemma}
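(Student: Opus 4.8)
The plan is to reduce to a deterministic $[0,1]$-valued function, bound the sum of the squared per-coordinate gaps by a constant, and then finish with a counting (Markov) argument. First I would replace the randomized $F$ by the deterministic function $f \colon \zo^n \to [0,1]$ defined by $f(r) = \pr{F(r) = 1}$, where the probability is over the internal coins of $F$. By the tower rule, $\ex{F(R) \mid R_i = b} = \ex{f(R) \mid R_i = b}$ for each $b$, so it suffices to prove \cref{eq:f-alpha} for $f$. Writing $d_i \eqdef \ex{f(R) \mid R_i = 1} - \ex{f(R) \mid R_i = 0}$, the quantity to bound is exactly $\ppr{i \la [n]}{\size{d_i} \geq \alpha} = \tfrac1n \cdot \size{\set{i \in [n] \colon \size{d_i} \geq \alpha}}$.

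The crux is the bound $\sum_{i=1}^n d_i^2 \leq 1$. To obtain it, set $s_i \eqdef 2R_i - 1 \in \oo$ and note that $\ex{f \cdot s_i} = 2\ex{f R_i} - \ex f = \ex{f \mid R_i = 1} - \ex f = \tfrac12 d_i$, so that $d_i = 2\ex{f s_i}$. Since $R \la \zo^n$ is uniform, the family $\set{1, s_1,\ldots,s_n}$ is orthonormal in $L^2(\zo^n)$: indeed $\ex{s_i} = 0$, $\ex{s_i s_j} = \ex{s_i}\ex{s_j} = 0$ for $i \neq j$ by independence, and $\ex{s_i^2} = 1$. Bessel's inequality applied to this orthonormal family then gives $\ex{f}^2 + \sum_{i=1}^n \ex{f s_i}^2 \leq \ex{f^2}$, i.e. $\sum_{i=1}^n \ex{f s_i}^2 \leq \Var(f)$. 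Finally, because $f$ takes values in $[0,1]$ we have $f^2 \leq f$ pointwise, hence $\ex{f^2} \leq \ex f$ and $\Var(f) = \ex{f^2} - \ex{f}^2 \leq \ex f\,(1 - \ex f) \leq \tfrac14$. Combining, $\sum_{i} d_i^2 = 4\sum_{i} \ex{f s_i}^2 \leq 4\,\Var(f) \leq 1$.

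The conclusion is then immediate: by the counting form of Markov's inequality, $\size{\set{i \colon d_i^2 \geq \alpha^2}} \leq \alpha^{-2} \sum_{i=1}^n d_i^2 \leq \alpha^{-2}$, so $\ppr{i \la [n]}{\size{d_i} \geq \alpha} \leq \tfrac{1}{n\alpha^2} \leq \tfrac{2}{n\alpha^2}$, which is in fact stronger than the stated bound. I do not expect a genuine obstacle; the only points requiring care are the reduction to a deterministic $f$ (so that the pointwise bound $f^2 \leq f$ is available) and the identification of $d_i$ with twice the degree-one correlation $\ex{f s_i}$, which is precisely what lets orthonormality control $\sum_i d_i^2$. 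One could equivalently phrase the middle paragraph via the Fourier expansion of $f$ over $\oo^n$, where $\ex{f s_i} = \hat f(\set{i})$ and $\sum_i \hat f(\set{i})^2 \leq \Var(f)$ by Parseval, but the elementary Bessel argument above avoids introducing the transform.
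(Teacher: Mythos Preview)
Your proof is correct and actually yields the sharper bound $\tfrac{1}{n\alpha^2}$. It takes a genuinely different route from the paper. The paper argues by contradiction: if more than $2/\alpha^2$ indices had gap $\geq \alpha$, then more than $m = 1/\alpha^2$ of them would have the gap in the same direction; letting $\cI$ be this set and $I \la \cI$, one gets $\size{\ex{F(R)\mid R_I=0}-\ex{F(R)\mid R_I=1}}\geq \alpha$, which is then shown to contradict a cited statistical-distance fact $\SD(R|_{R_I=0},R|_{R_I=1})\leq 1/\sqrt{\size{\cI}}$.

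Your argument is more direct and fully self-contained: you identify each gap $d_i$ with twice the degree-one Fourier coefficient $\ex{f s_i}$, use Bessel (equivalently, level-$1$ Parseval) to bound $\sum_i d_i^2 \leq 4\Var(f)\leq 1$, and finish with a counting/Markov step. This avoids the external statistical-distance lemma and the contradiction detour, gives a better constant, and makes transparent why the quantity scales like $1/(n\alpha^2)$: the level-$1$ Fourier weight of a bounded function is at most its variance. The paper's approach, on the other hand, packages the same $L^2$ phenomenon through a total-variation bound that is perhaps more reusable in contexts where one already has such distance facts in hand.
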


The proof of \cref{lem:distance-I} uses the following fact.

\begin{fact}[Proposition 3.28 in \cite{HaitnerMST22}]\label{fact:I}
	Let $R$ be uniform random variable over $\{0,1\}^n$, and let $I$ be uniform random variable over $\mathcal{I}\subseteq[n]$, independent of $R$, then $SD(R|_{R_I=0},R|_{R_I=1})\leq1/\sqrt{\size{\cI}}$.
\end{fact}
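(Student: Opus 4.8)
The plan is to compute both conditional distributions explicitly and reduce the statistical distance to the expected absolute value of a sum of independent Rademacher variables. First I would observe that, since $R$ and $I$ are independent with $R \la \zn$ uniform, the event $R_I = 0$ has probability exactly $1/2$ by symmetry. Writing $z(r) = \size{\set{i \in \cI \colon r_i = 0}}$ for the number of zeros of $r$ inside $\cI$, Bayes' rule gives $\pr{R = r \mid R_I = 0} = 2^{-n+1}\cdot z(r)/\size{\cI}$ and $\pr{R = r \mid R_I = 1} = 2^{-n+1}\cdot (\size{\cI} - z(r))/\size{\cI}$, using that conditioned on $R = r$ we have $\ppr{I}{r_I = 0} = z(r)/\size{\cI}$.

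Plugging these into the definition of statistical distance, the common factor $2^{-n+1}/\size{\cI}$ pulls out and the sum over $r$ becomes an expectation over a uniform $R$, yielding
\[
SD(R|_{R_I=0},\, R|_{R_I=1}) = \frac{1}{\size{\cI}}\cdot \Ex_{R \la \zn}\Brack{\size{\,2 z(R) - \size{\cI}\,}}.
\]
The key structural observation, which makes the rest routine, is that $2z(R) - \size{\cI} = \sum_{i \in \cI} (-1)^{R_i}$ is a sum of $\size{\cI}$ independent, uniform $\set{-1, +1}$ (Rademacher) random variables, hence mean zero with unit-variance summands.

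Finally I would bound the expected absolute value of this sum by its standard deviation via Cauchy--Schwarz (equivalently, Jensen applied to $t \mapsto t^2$): setting $W = \sum_{i \in \cI}(-1)^{R_i}$, we have $\Ex\Brack{\size{W}} \le \sqrt{\Ex\Brack{W^2}} = \sqrt{\size{\cI}}$, where the last equality uses independence and zero mean so that $\Ex\Brack{W^2} = \sum_{i \in \cI}\Ex\Brack{(-1)^{2R_i}} = \size{\cI}$. Combining, $SD \le \frac{1}{\size{\cI}}\cdot \sqrt{\size{\cI}} = 1/\sqrt{\size{\cI}}$, as claimed. I do not expect a genuine obstacle here: the only points needing care are getting the two conditional mass functions exactly right and recognizing the Rademacher-sum structure, after which the second-moment bound is immediate and tight up to the absorbed constant.
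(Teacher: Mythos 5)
Your proof is correct and complete. Note that the paper itself does not prove \cref{fact:I} at all---it imports it verbatim as Proposition~3.28 of \cite{HaitnerMST22}---so there is no in-paper argument to compare against; your derivation (exact conditional mass functions via Bayes using $\pr{R_I=0}=1/2$, the identity $2z(R)-\size{\cI}=\sum_{i\in\cI}(-1)^{R_i}$ exhibiting a Rademacher sum, and the bound $\Ex\Brack{\size{W}}\le\sqrt{\Ex\Brack{W^2}}=\sqrt{\size{\cI}}$) is the standard self-contained proof of this statement, and the prefactor bookkeeping giving $SD=\frac{1}{\size{\cI}}\Ex\Brack{\size{2z(R)-\size{\cI}}}$ checks out exactly, yielding the constant $1$ as claimed.
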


We first prove \cref{lem:distance-I} using \cref{fact:I}.

\begin{proof}[Proof of \cref{lem:distance-I}]
    Assume towards a contradiction that there exist $F$ and $\alpha$ such that \cref{eq:f-alpha} does not hold.
    Namely, for $m = 1/\alpha^2$, there exist more than $2m$ indices $i \in [n]$ with $$\size{\:\ex{F(R) \mid R_i = 0}-\ex{F(R) \mid R_i = 1}\:}\geq 1/\sqrt{m}.$$
    This implies that there exist $b \in \oo$ and more than $m$ indices $i \in [n]$ with $$\ex{F(R) \mid R_i = b}-\ex{F(R) \mid R_i = 1-b} \geq 1/\sqrt{m}$$ (denote this set by $\cI$). 
    Thus, we deduce for $I \la \cI$ that
    \begin{align}\label{eq:big-R_I-gap}
        \size{\ex{F(R) \mid R_I = 0}-\ex{F(R) \mid R_I = 1}} 
        &\geq \ex{F(R) \mid R_I = b}-\ex{F(R) \mid R_I = 1-b}\\
        &\geq 1/\sqrt{m}.\nonumber
    \end{align}
    On the other hand, note that
    \begin{align*}
        SD(F(R)|_{R_I=0},F(R)|_{R_I=1})
        \leq SD(R|_{R_I=0},R|_{R_I=1})
        \leq 1/\sqrt{\size{\cI}}
        < 1/\sqrt{m},
    \end{align*}
    where the second inequality holds by \cref{fact:I}.
    Thus, we conclude that
    \begin{align*}
        \size{\ex{F(R) \mid R_I = 0}-\ex{F(R) \mid R_I = 1}}
        &\leq SD(F(R)|_{R_I=0},F(R)|_{R_I=1}) \cdot \sup_{r \in \zo^n, s \in \zo^*}(\size{F_s(r)})\\
        &< 1/\sqrt{m},
    \end{align*}
    where $F_s(r)$ denotes the function $F$ when fixing its random coins to $s$. 
    The above inequality contradicts \cref{eq:big-R_I-gap}, concluding the proof of the lemma.
\end{proof}

Using \cref{lem:distance-I}, we now prove \cref{lemma:property1:prediction}, restated below.

\begin{lemma}[Restatement of \cref{lemma:property1:prediction}]
    \PredictionLemma
\end{lemma}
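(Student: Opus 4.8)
The plan is to build $\Gc$ as a mean-comparison predictor and analyze it along the lines of the overview in \cref{sec:overview:prediction-lemma}, using \cref{lem:distance-I} for the ``error'' direction and the distinguishing hypothesis for the ``success'' direction. On input $(i, z_{-i}, w)$ with oracle access to $F$, the algorithm writes $z^b = (z_1,\dots,z_{i-1}, b, z_{i+1},\dots,z_n)$ for $b \in \oo$ and forms three empirical means. For $b \in \oo$ it estimates $\mu^b \approx \eex{R \la \zn \mid R_i = 1}{F(R, z^b_R, W)}$, and it estimates $\mu^* \approx \eex{R \la \zn \mid R_i = 0}{F(R, z_R, W)}$; crucially $\mu^*$ only queries $F$ on $R$ with $R_i = 0$, where $z_R$ never exposes the unknown $i$-th coordinate, so $\Gc$ can compute it from its input alone. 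Each mean is estimated from $s = \Theta(\log(n)/\gamma^2)$ independent samples, so by Hoeffding each empirical value lies within $\gamma/16$ of its true mean except with probability $\le 1/(3n)$; by a union bound all three are simultaneously accurate (call this event $E$) with probability $\ge 1 - 1/n$. The decision rule is: output $b$ if $\size{\tilde\mu^b - \tilde\mu^*} < \gamma/4$ and $\size{\tilde\mu^{-b} - \tilde\mu^*} > \gamma/4$, and output $\perp$ otherwise. The running time is clearly $\poly(n, 1/\gamma)$.

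I would then establish two structural facts, stated for the \emph{true} means. First (\emph{flipping $R_i$ barely matters}): for every fixed $(z,w)$, applying \cref{lem:distance-I} to $f_{z,w}(r) \eqdef F(r, z_r, w)$ with $\alpha = \gamma/16$ gives $\ppr{i \la [n]}{\size{\eex{R\mid R_i=0}{f_{z,w}(R)} - \eex{R\mid R_i=1}{f_{z,w}(R)}} \ge \gamma/16} \le 512/(n\gamma^2)$; since this holds pointwise it survives averaging over $(Z,W)$, i.e. $\size{\mu^{Z_I} - \mu^*} < \gamma/16$ except with probability $O(1/(\gamma^2 n))$ over $(I,Z,W)$. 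Second (\emph{flipping $Z_i$ matters}): the elementary fact that any $[-1,1]$-valued $D$ with $\size{\ex D} \ge \gamma$ satisfies $\pr{\size D > \gamma/2} > \gamma/2$, applied to $D = \eex{R}{F(R,Z_R,W) - F(R,Z^{(I)}_R,W)}$ (viewed as a function of $(I,Z,W)$), together with the observation that the two summands coincide whenever $R_I = 0$ — so the whole difference is carried by the $R_I = 1$ half and $D = \tfrac12(\mu^{Z_I} - \mu^{-Z_I})$ — yields that $\size{\mu^{Z_I} - \mu^{-Z_I}} > \gamma/2$ with probability at least $\gamma/2$ over $(I,Z,W)$.

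Finally I would combine these on the event $E$. For conclusion~1, note that $b = -Z_I$ means $-b = Z_I$, so the rule would require $\size{\tilde\mu^{Z_I} - \tilde\mu^*} > \gamma/4$; but whenever the first structural fact and $E$ hold we have $\size{\tilde\mu^{Z_I} - \tilde\mu^*} \le \size{\mu^{Z_I}-\mu^*} + 2\cdot\tfrac{\gamma}{16} < \gamma/4$, so $\Gc$ cannot output $-Z_I$, giving $\pr{\Gc^F = -Z_I} \le O(1/(\gamma^2 n)) + 1/n = O(1/(\gamma^2 n))$. For conclusion~2, whenever both structural facts and $E$ hold (probability $\ge \gamma/2 - O(1/(\gamma^2 n))$ by a single union bound over the complements), the triangle inequality gives $\size{\mu^{-Z_I}-\mu^*} \ge \size{\mu^{Z_I}-\mu^{-Z_I}} - \size{\mu^{Z_I}-\mu^*} > \tfrac{\gamma}{2} - \tfrac{\gamma}{16} > \gamma/4$, while $\size{\mu^{Z_I}-\mu^*} < \gamma/16$; after absorbing the $\gamma/16$ estimation slack into both comparisons, the empirical test fires with $b = Z_I$, so $\pr{\Gc^F = Z_I} \ge \Omega(\gamma) - O(1/(\gamma^2 n))$.

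The main obstacle I expect is the bookkeeping in this last step: the decision thresholds sit at $\gamma/4$, each estimate carries $\gamma/16$ noise, and the true gaps from the structural facts are $\gamma/16$ (small side) against $\gamma/2$ (large side), so the thresholds and the Hoeffding accuracy must be chosen so that all three inequalities survive simultaneously with room to spare. Equally important is combining the two ``good'' events by a single union bound rather than multiplying them, so that the $\Omega(\gamma)$ success probability is not destroyed. A secondary point to verify carefully is that $\mu^*$ is genuinely a function of $(i, z_{-i}, w)$ only — this is exactly why, in the outer protocol, the noise is injected at a \emph{random} coordinate and the test is built around $R$ with $R_i = 0$.
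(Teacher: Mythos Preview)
Your proposal is correct and follows essentially the same approach as the paper's proof: the same mean-comparison algorithm with threshold $\gamma/4$, Hoeffding for the estimation event $E$, \cref{lem:distance-I} with $\alpha=\gamma/16$ for the ``small side'', the elementary $[-1,1]$ averaging fact for the ``large side'', and the same triangle-inequality/union-bound combination. Your explicit observation that $D=\tfrac12(\mu^{Z_I}-\mu^{-Z_I})$ because the $R_I=0$ half contributes nothing is in fact a cleaner justification of the step the paper writes as ``\cref{eq:main-lemma:good} is equivalent to \cref{eq:main-lemma:good2}''.
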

\begin{proof}
	
	Let $\gamma \in (0,1)$ and $n \in \bbN$ and consider the following algorithm $\Gc = \Gc_{\gamma}$:
	\begin{algorithm}
		\item Inputs: $i \in [n]$,  $z_{-i} = (z_1,\ldots,z_{i-1},z_{i+1},\ldots,z_n) \in \oo^{n-1}$ and $w \in \zo^*$.
		\item Parameter: $\gamma \in (0,1)$.
		\item Oracle: $F \colon \zo^n \times \oo^{\leq n} \times \zo^* \rightarrow \zo$.
		\item Operation:~
		\begin{enumerate}
			\item For $b \in \oo$:
			\begin{enumerate}
				\item Let $z^b = (z_1,\ldots,z_{i-1},b,z_{i+1},\ldots,z_n)$.
				\item Estimate $\mu^b \eqdef \eex{r \la \zo^n \mid r_i = 1, \: F}{F(r,z^b_{r}, w)}$ as follows:
				\begin{itemize}
					\item Sample $r_1,\ldots,r_{s} \la \set{r \in \zo^n \colon r_i = 1}$, for $s = \ceil{\frac{128 \log (12n)}{\gamma^2}}$, and then sample $\tilde{\mu}^b \la \frac1{s} \sum_{j=1}^s F(r_j,z^b_{r_j}, w)$ (using $s$ oracle calls to $F$).
				\end{itemize}
			\end{enumerate}
			\item Estimate $\mu^* \eqdef \eex{r \la \zo^n \mid r_i = 0, \: f}{f(r,z^1_{r}, w)}$ as follows:
			\begin{itemize}
				\item Sample $r_1,\ldots,r_{s}  \la \set{r \in \zo^n \colon r_i = 0}$, for $s = \ceil{\frac{128 \log (12n)}{\gamma^2}}$, and then sample $\tilde{\mu}^* \la \frac1{s} \sum_{j=1}^s F(r_j,z^1_{r_j}, w)$ (using $s$ oracle calls to $F$).
			\end{itemize}
			\item If exists $b \in \oo$ s.t. $\size{\tilde{\mu}^b - \tilde{\mu}^*} < \gamma/4$ and $\size{\tilde{\mu}^{-b} - \tilde{\mu}^*} > \gamma/4$, output $b$.
			\item Otherwise, output $\bot$.
		\end{enumerate}
	\end{algorithm}
	In the following, fix a pair of (jointly distributed) random variables $(Z,W) \in \oo^n \times \zo^*$ and a randomized function  $F \colon \zo^n \times \oo^{\leq n} \times \zo^* \rightarrow \oo$ that satisfy 
	\begin{align*}
		\size{\ex{F(R,Z_{R},W) - F(R,Z^{(I)}_{R},W)}} \geq \gamma,
	\end{align*}
	for $R \la \zo^n$ and $I \la [n]$ that are sampled independently. 
	Our goal is to prove that 
	
	\begin{align}\label{eq:predic-lemma:UB}
		\pr{\Gc^F(I,Z_{-I}, W) = -Z_I} \leq O\paren{\frac{1}{\gamma^2 n}},
	\end{align}
	and 
	\begin{align}\label{eq:predic-lemma:LB}
		\pr{\Gc^F(I,Z_{-I}, W) = Z_I} \geq \Omega(\gamma) - O\paren{\frac{1}{\gamma^2 n}}.
	\end{align}

	Note that
	\begin{align}\label{eq:D}
		\text{For every random variable }D \in [-1,1]\text{ with }\size{\ex{D}} \geq \gamma >0: \quad \pr{\size{D} > \gamma/2} > \gamma/2,
	\end{align}
	as otherwise, $\size{\ex{D}} \leq \ex{\size{D}} \leq 1\cdot \frac{\gamma}{2} + \frac{\gamma}{2}(1-\frac{\gamma}{2}) < \gamma$.

	By applying \cref{eq:D} with $D = \eex{r \la \zo^n, F}{F(r,Z_{r},W) - f(r,Z^{(I)}_{r},W)}$, it holds that
	\begin{align}\label{eq:main-lemma:good}
		\ppr{(i,z,w) \la (I,Z,W)}{\size{\eex{r \la \zo^n, F}{F(r,z_{r},w) - F(r,z^{(i)}_{r},w)}} > \gamma/2} > \gamma/2.
	\end{align}
	On the other hand, for every fixing of $(z,w) \in \Supp(Z,W)$, we can apply \cref{lem:distance-I} with the function $F_{z,w}(r) = F(r,z_{r},w)$ and with $\alpha =\frac{\gamma}{16}$ to obtain that
	\begin{align*}
		\ppr{i\la I}{\: \size{\eex{r \la \zo^n \mid r_i = 0, \: F}{F(r,z_{r},w)} - \eex{r \la \zo^n \mid r_i = 1, \: F}{F(r,z_{r},w)} \:} \geq \frac{\gamma}{16}} \leq \frac{512}{n \gamma^2}.
	\end{align*}
	But since the above holds for every fixing of $(z,w)$, then in particular it holds that
	\begin{align}\label{eq:main-lemma:bad}
		\ppr{(i,z,w) \la (I,Z,W)}{\: \size{\eex{r \la \zo^n \mid r_i = 0, \: F}{F(r,z_{r},w)} - \eex{r \la \zo^n \mid r_i = 1, \: F}{F(r,z_{r},w)} \:} \geq \frac{\gamma}{16}} \leq \frac{512}{n \gamma^2}.
	\end{align}
	
	We next prove \cref{eq:predic-lemma:UB,eq:predic-lemma:LB} using \cref{eq:main-lemma:good,eq:main-lemma:bad}.
	
	In the following, for a triplet $t = (i,z,w) \in \Supp(I,Z,W)$, consider a random execution of $\Gc^F(i,z_{-i},w)$. For $x \in \set{-1,1,*}$, let $\mu^x_t$ be the value of $\mu^x$ in the execution, and let $M^x_t$ be the (random variable of the) value of $\tilde{\mu}^x$ in the execution. Note that by definition, it holds that $\mu_{i,z,w}^{z_i} = \eex{r \la \zo^n, F}{F(r,z_{r},w)}$ and $\mu_{i,z,w}^{-z_i} = \eex{r \la \zo^n, F}{F(r,z^{(i)}_{r},w)}$. Therefore, \cref{eq:main-lemma:good} is equivalent to 
	\begin{align}\label{eq:main-lemma:good2}
		\ppr{t=(i,z,w) \la (I,Z,W)}{\size{\mu_t^{z_i} - \mu_t^{-z_i}} > \gamma/2} > \gamma/2.
	\end{align}
	Furthermore, note that $\mu_{i,z,w}^* \eqdef \eex{r \la \zo^n \mid r_i = 0, \: F}{F(r,z^1_{r}, w)} = \eex{r \la \zo^n \mid r_i = 0, \: F}{F(r,z_{r}, w)}$ and that $\mu_{i,z,w}^{z_i} = \eex{r \la \zo^n \mid r_i = 1, \: F}{F(r,z^{z_i}_{r}, w)}$. Therefore, \cref{eq:main-lemma:bad} is equivalent to 
	\begin{align}\label{eq:main-lemma:bad2}
		\ppr{t = (i,z,w) \la (I,Z,W)}{\: \size{\mu_t^* - \mu_t^{z_i}}\geq \frac{\gamma}{16}}  \leq \frac{512}{n \gamma^2}.
	\end{align}
	
	We next prove the lemma using \cref{eq:main-lemma:good2,eq:main-lemma:bad2}.
	
	Note that by Hoeffding's inequality, for every $t = (i,z,w) \in \Supp(I,Z,W)$ and  $x \in \set{-1,1,*}$ it holds that $\pr{\size{M_t^x - \mu_t^x} \geq \frac{\gamma}{16}} \leq 2\cdot e^{-2 s \paren{\frac{\gamma}{16}}^2} \leq \frac1{6n}$, which yields that for every fixing of $t = (i,z,w) \in \Supp(I,Z,W)$, w.p.\ at least $1-\frac1{2n}$ we have for all $x \in \set{-1,1,*}$ that $\size{M_t^x - \mu_t^x} < \frac{\gamma}{16}$ (denote this event by $E_t$).
	
	The proof of \cref{eq:predic-lemma:UB} holds by the following calculation:
	\begin{align*}
		\lefteqn{\pr{\Gc^F(I,Z_{-I},W) = -Z_I}}\\
		&= \eex{(i,z,w) \la (I,Z,W)}{\pr{\Gc^F(i,z_{-i},w) = -z_i}}\\
		&= \eex{t = (i,z,w) \la (I,Z,W)}{\pr{\set{\size{M_t^* - M_t^{z_i}} > \gamma/4} \land \set{\size{M_t^* - M_t^{-z_i}} < \gamma/4}}}\\
		&\leq \eex{t =(i,z,w) \la (I,Z,W)}{\pr{\size{M_t^* - M_t^{z_i}} > \gamma/4}}\\
		&\leq \eex{t = (i,z,w) \la (I,Z,W)}{\pr{\size{M_t^* - M_t^{z_i}} > \gamma/4 \mid E_t}} + \frac{1}{2n}\\
		&\leq \ppr{t = (i,z,w) \la (I,Z,W)}{\size{\mu_t^* -  \mu_t^{z_i}} \geq \frac{\gamma}{16}} + \frac{1}{2n}\\
		&\leq \frac{512}{n \gamma^2} + \frac1{2n},
	\end{align*}
	The second inequality holds since $\pr{\neg E_t} \leq \frac1{2n}$ for every $t$. The penultimate inequality holds since conditioned on $E_t$, it holds that $\size{M_t^* - \mu_t^*} \leq \frac{\gamma}{16}$ and $\size{M_t^{z_i} - \mu_t^{z_i}} \leq \frac{\gamma}{16}$, which implies that $\size{M_t^* - M_t^{z_i}} > \gamma/4 \: \implies \: \size{\mu_t^* -  \mu_t^{z_i}} \geq \frac{\gamma}{4} - 2\cdot \frac{\gamma}{16} > \frac{\gamma}{16}$. The last inequality holds by \cref{eq:main-lemma:bad2}.
	
	It is left to prove \cref{eq:predic-lemma:LB}. 
	Observe that \cref{eq:main-lemma:good2,eq:main-lemma:bad2} imply with probability at least $\gamma/2 - \frac{512}{n \gamma^2}$ over $t = (i,z,w)\in (I,Z,W)$, we have $\size{\mu_t^* -  \mu_t^{z_i}} \leq  \frac{\gamma}{16}$ and $\size{\mu_t^{z_i}- \mu_t^{-z_i}} \geq\frac{\gamma}{2}$. Therefore, conditioned on event $E_t$ (i.e., $\forall x \in \set{-1,1,*}: \: \size{M_t^{x} - \mu_t^{x}} \leq \frac{\gamma}{16}$), we have for such triplets $t = (i,z,w)$ that 
	\begin{align*}
		\size{M_t^* - M_t^{-z_i}} 
		& \geq \size{\mu_t^{z_i} - \mu_t^{-z_i}} - \size{\mu_t^{z_i} - \mu_t^*} - \size{M_t^*  - \mu_t^*} -  \size{M_t^{-z_i} - \mu_t^{-z_i}}\\
		&\geq \frac{\gamma}{2} - 3\cdot \frac{\gamma}{16}\\
		&> \gamma/4,
	\end{align*}
	and 
	\begin{align*}
		\size{M_t^* - M_t^{z_i}}
		&\leq  \size{M_t^* - \mu_t^*} + \size{\mu_t^* - \mu_t^{z_i}} + \size{\mu_t^{z_i} - M_t^{z_i}} \\
		&\leq 3\cdot \frac{\gamma}{16}\\
		&< \gamma/4.
	\end{align*}
	
	Thus, we conclude that
	\begin{align*}
		\lefteqn{\pr{\Gc^F(I,Z_{-I},W) = Z_I}}\\
		&= \eex{(i,z,w) \la (I,Z,W)}{\pr{\Gc^F(i,z_{-i},w) = z_i}}\\
		&= \eex{t = (i,z,w) \la (I,Z,W)}{\pr{\set{\size{M_t^* - M_t^{-z_i}} > \gamma/4} \land \set{\size{M_t^* - M_t^{z_i}} < \gamma/4}}}\\
		&\geq \paren{1 - \frac1{2n}}\cdot \eex{t = (i,z,w) \la (I,Z,W)}{\pr{\set{\size{M_t^* - M_t^{-z_i}} > \gamma/4} \land \set{\size{M_t^* - M_t^{z_i}} < \gamma/4} \mid E_t}}\\
		&\geq \paren{1 - \frac1{2n}}\cdot \paren{\gamma/2 - \frac{512}{n \gamma^2}}\\
		&\geq \gamma/4 - \frac{512}{n \gamma^2},
	\end{align*}
	which proves \cref{eq:predic-lemma:LB}. The first inequality holds since $\pr{E_t} \geq 1-\frac1{2n}$ for every $t = (i,z,w)$, and the second one holds by the observation above.
	
\end{proof}

\fi

\end{document}